\newif\ifprivate
\newif\ifarxiv
\definecolor{lilla}{HTML}{750787}
\renewcommand*{\backref}[1]{}
\renewcommand*{\backrefalt}[4]{\ifcase #1\or [p.~#2]\else [pp.~#2]\fi }
\newcommandx{\set}[2][1=1]{\ensuremath{\{#1,\ldots,#2\}}}
\newcommandx{\tlog}[3][1=,3=]{\log_{#1}^{#3}(#2)}
\newcommandx{\ith}[2][1=th]{#2\nobreakdash-#1}
\newtheorem{theorem}{Theorem}
\newtheorem{lemma}[theorem]{Lemma}
\newtheorem{proposition}[theorem]{Proposition}
\newtheorem{observation}[theorem]{Observation}
\newtheorem{corollary}[theorem]{Corollary}
\newtheorem{claim}[theorem]{Claim}
\theoremstyle{definition}
\newtheorem{rrule}{Reduction rule}
\crefname{observation}{Observation}{Observations}
\crefname{rrule}{Reduction Rule}{Reduction Rules}
\crefname{arule}{Augmentation Rule}{Augmentation Rules}
\crefname{construction}{Construction}{Constructions}
\crefname{theorem}{Theorem}{Theorems}
\Crefname{theorem}{Thm.}{Thms.}
\crefname{corollary}{Corollary}{Corollaries}
\crefname{lemma}{Lemma}{Lemmata}
\Crefname{corollary}{Cor.}{Cors.}
\crefname{proposition}{Proposition}{Propositions}
\Crefname{proposition}{Prop.}{Props.}
\crefname{algorithm}{Algorithm}{Algorithms}
\newcommand{\calX}{\mathcal{X}}
\newcommand{\yes}{\textnormal{\texttt{yes}}}
\newcommand{\RD}{$(\Rightarrow)\:$}
\newcommand{\LD}{$(\Leftarrow)\:$}
\crefname{problem}{Problem}{Problems}
\Crefname{problem}{Prob.}{Probs.}
\newcommand{\boxproblem}[3]{
	\begin{center}   
		\fbox{~\begin{minipage}{.97\textwidth}
				\vspace{2pt} 
				\noindent
				\normalsize\textsc{#1}
				\vspace{1pt}
				
				\setlength{\tabcolsep}{3pt}
				\renewcommand{\arraystretch}{1.0}
				\begin{tabularx}{\textwidth}{@{}lX@{}}
					\normalsize\textbf{Input:}       & \normalsize#2 \\
					\normalsize\textbf{Question:}    & \normalsize#3 \\
				\end{tabularx}
		\end{minipage}}
	\end{center}
}
\newcommand{\N}{\mathbb{N}}
\newcommand{\bigO}{\mathcal{O}}
\newcommand{\cocl}[1]{\ensuremath{\operatorname{#1}}}
\newcommand{\W}[1]{\cocl{W[#1]}}
\newcommand{\NP}{\cocl{NP}}
\newcommand{\FPT}{\cocl{FPT}}
\newcommand{\fpt}{fixed-parameter tractable}
\newcommand{\XP}{\cocl{XP}}
\newcommand{\calC}{\mathcal{C}}
\newcommand{\prob}[1]{{\normalfont\textsc{#1}}}
\newcommand{\cec}{\prob{Colored Clustering}}
\newcommand{\chc}{\prob{Colored Hypergraph Clustering}}
\newcommand{\cecS}{\prob{CC}}
\newcommand{\colors}{\ensuremath{C}}
\newcommand{\coloring}{\ensuremath{\ell}}
\newcommand{\numcolors}{\ensuremath{\lvert C\rvert}}
\DeclareMathOperator{\lfe}{lfe}
\DeclareMathOperator{\secw}{secw}
\DeclarePairedDelimiter{\abs}{\lvert}{\rvert}
\newcommand{\wilog}{without loss of generality}
\newcommand{\Wilog}{Without loss of generality}
\newcommand{\tv}{\tilde{v}}
\newcommand{\tikzpreamble}{
	\def\nsc{0.5}
	\tikzstyle{xnodeRed}=[circle,scale=\nsc,draw,fill=red];
	\tikzstyle{xnodeBlue}=[circle,scale=\nsc,draw,fill=blue];
	\tikzstyle{xnodeGreen}=[circle,scale=\nsc,draw,fill=black!30!green];
	\tikzstyle{xnodeBlack}=[circle,scale=\nsc,draw,fill=black];
	\tikzstyle{xnode}=[circle,scale=\nsc,draw];
	\tikzstyle{xnodeSmall}=[circle,scale=\nsc*0.65,draw];
	\tikzstyle{satE} = [ultra thick];
	\tikzstyle{unsatE} = [thick,dotted];
	\definecolor{lblue}{RGB}{151, 197, 232}
	\definecolor{lred}{RGB}{245, 193, 193}
}
\newcommand{\thetitle}{Parameterized Algorithms for Colored Clustering}
\date{}
\title{\thetitle} 
\author[1]{Leon Kellerhals}
\author[1]{Tomohiro Koana\thanks{Supported by the DFG Project DiPa, NI 369/21.}}
\author[1,2]{Pascal Kunz\thanks{Supported by the DFG Research Training Group 2434 ``Facets of Complexity''.}}
\author[1]{Rolf Niedermeier\thanks{%
We dedicate this paper to Rolf, who tragically passed away recently.
We are deeply affected by this loss of our co-author, colleague, and advisor.
Rolf has contributed tremendously to computer science and, in particular, to multivariate algorithms, and should have continued doing so for a long time.
The computer science community will build on the foundations he has laid.
}}
\affil[1]{\small Technische Universit\"at Berlin, Algorithmics and Computational Complexity, Berlin, Germany\\ \texttt{\{leon.kellerhals,\,tomohiro.koana,\,p.kunz.1\}@tu-berlin.de}}
\affil[2]{\small Humboldt-Universit\"at zu Berlin, Algorithm Engineering, Berlin, Germany}
\begin{document}
\maketitle

\begin{abstract}
	In the \cec{} problem, one is asked to cluster edge-colored (hyper\nobreakdash-)graphs whose colors represent interaction types.
	More specifically, the goal is to select as many edges as possible without choosing two edges that 
	share an endpoint and are colored differently.
	Equivalently, the goal can also be described as assigning colors to the vertices in a way that fits the edge-coloring as well as possible.
	
	As this problem is \NP-hard, we build on previous work by studying its parameterized complexity.
	We give a $2^{\bigO(k)}\cdot n^{\bigO(1)}$-time algorithm where $k$ is the number of edges to be selected and $n$ the number of vertices.
	We also prove the existence of a problem kernel of size~$\bigO(k^{5/2})$, resolving an open problem posed in the literature.
	We consider parameters that are smaller than $k$, the number of edges to be selected, and $r$, the number of edges that can be deleted.
	Such smaller parameters are obtained by considering the difference between $k$ or $r$ and some lower bound on these values.
	We give both algorithms and lower bounds for \cec{} with such parameterizations.
	Finally, we settle the parameterized complexity of \cec{} with respect to structural graph parameters by showing that it is \W{1}-hard with respect to both vertex cover number and tree-cut width, but fixed-parameter tractable with respect to slim tree-cut width.
\end{abstract}

\section{Introduction}
\label{sect:intro}

Graph clustering is one of the most fundamental tasks in analyzing data that captures interactions between entities.
The idea is that if two vertices are in the same cluster, then the corresponding entities are similar in terms of their interactions.
Typical approaches lead to meaningful clusterings whenever the edges model interactions of the same type, possibly with weights.
Those approaches are not, however,  designed to deal with data that captures interactions of different types.
There are several settings in which the clusters should capture similarity not only in terms of interactions, but in terms of types of interactions.
For instance, brain coactivation graphs capture which brain regions are active or inactive at the same time when exposed to certain types of stimuli~\cite{Crossley2013}.
In the Drug Abuse Warning Network,\footnote{\url{https://www.samhsa.gov/data/data-we-collect/dawn-drug-abuse-warning-network}} interactions describe combinations of drugs taken by patients prior to an ER visit.
Other settings with similar interaction categorization by type include coauthorship networks (categorized by publication venue) and copurchasing networks (categorized by type of purchase).

\citet{Angel2016} introduced an approach to finding such category-sensitive clusters, which we will call \cec{} (\cecS{}):
Given an edge-colored graph, the goal is to color the vertices in a way that maximizes the number of \emph{stable} edges:
edges whose endpoints are both assigned the same color as the edge.
\citet{Angel2016} proved the problem to be \NP-hard and gave approximation algorithms as well as tractable special cases for the problem.
Their approximation algorithm was improved by \citet{Ageev2014} and \citet{Alhamdan2019}.
\citet{Cai2018} studied the parameterized complexity of the problem, giving FPT algorithms with respect to the number~$k$ of stable edges\footnote{Roughly speaking, an FPT algorithm with respect to $k$ has a running time of~$f(k) \cdot n^{\bigO(1)}$, where $k$ is the parameter and $n$ is the instance size.} and to the number~$r$ of unstable edges.
\citet{Amburg2020} introduced the problem under a different name and for hypergraphs.
\citet{Veldt2022} continued the study of the problem on hypergraphs.
In several of the examples listed above, it is sensible to model the interactions as hypergraphs.

\paragraph{Related work.}
Another clustering model which also captures edge colors is chromatic correlation clustering~\cite{Bonchi15}, in which the goal is to cluster the vertices such that the number of edge modifications (additions, deletions, recolorings) to obtain a disjoint set of monochromatic cliques is minimized.
This approach penalizes missing edges, that is, edges that need to be added such that each cluster becomes a clique.
Chromatic correlation clustering generalizes correlation clustering~\cite{Bansal04} and thus is NP-hard even for one color.
\cec{} does not penalize such missing edges and becomes more tractable: it is polynomial-time solvable for two colors.
Closely related to graph clustering in edge-colored graphs is the field of clustering multi-layer graphs~\cite{Mucha10,Chen2018}.
In that scenario, the layers do not relate to a cluster type.
Finally, hypergraph clustering has been studied extensively~\cite{Agarwal06,Papa07,Gleich18,Fukunaga19,Li20}.%

\paragraph{Our contributions.}
We continue the study of the parameterized complexity of \cec{} on graphs and on hypergraphs.
For the parameter $k$, the number of stable edges, we improve on the FPT result due to \citet{Cai2018} by giving a single-exponential time algorithm.
We also prove that \cec{} admits a polynomial kernel for the parameter, thus answering an open question by \citet{Cai2018}.
Both results translate to hypergraphs with constant-sized edges.

As the problem is FPT with respect to both the number of stable and unstable edges, we consider above guarantee parameterizations.
We introduce the concept in the corresponding section and show that it can be used to obtain fixed-parameter algorithms for parameters that are smaller than the number of stable edges and a parameter smaller than the number of unstable edges.
Again, both results can be lifted to work for hypergraphs.
We complement these results with hardness proofs for above-guarantee parameters that are even slightly smaller.
We also consider structural graph parameters.
The problem is hard even for fairly large structural parameters:
We show that there is presumably no FPT algorithm for the parameters vertex cover number and tree-cut width.
However, the problem is FPT when parameterized by the slim tree-cut width.

Finally, we close a gap in the classical tractability of \cec{} by proving that the problem is NP-hard on graphs even if every vertex has degree at most three and on hypergraphs even if every vertex has degree at most two.

\section{Preliminaries}
\label{sect:prelim}
\subsection{Graphs and problem definition}
For standard graph terminology, we refer to~\citet{Diestel2017}.
A \emph{hypergraph} $G=(V,E)$ consists of a finite vertex set $V$ and an edge set $E\subseteq 2^V$.
It is a \emph{graph} if $\abs{e} = 2$ for all $e \in E$.
The order of $G$ is $\max_{e\in E} \abs{e}$.
Let $v \in V$ be a vertex.
We denote its set of \emph{incident} edges by $\delta_G(v) \coloneqq \{ e \in E \mid v \in e \}$ and its \emph{neighborhood} by $N_G(v) \coloneqq (\bigcup_{e \in \delta(v)} e) \setminus \{v\}$.
The \emph{degree} of $v$ is $\deg_G(v) \coloneqq |\delta_G(v)|$.
For an edge set $F \subseteq E$, let $\deg_{G,F}(v) = |\delta(v) \cap F|$.
For a vertex set $X \subseteq V$, let $G[X]$ denote the subgraph induced  by $X$.
Let $\colors$ be a finite set of colors.
An \emph{edge coloring} is a function $\ell\colon E \to \colors$ and a \emph{vertex coloring} a function $f\colon V \to \colors$.
The \emph{chromatic degree} of $v \in V$ is $\deg^\chi_G(v) \coloneqq \{ c \in \colors \mid \exists e \in \delta(v) \colon \ell(e) =c\}$
For any $c\in \colors$, the color-$c$ degree of $v\in V$ is $\deg_{G,c} (v) \coloneqq |\{ e \in \delta(v) \mid \ell(e) = c\}|$.
We drop the subscript $\cdot_G$ whenever it is clear from the context.
A set of edges $F \subseteq E$ in a hypergraph $G=(V,E)$ with an edge coloring $\ell \colon E \to \colors$ is \emph{stable} if $\ell(e) = \ell(e')$ for all $e,e' \in F$ with $e\cap e' \neq \emptyset$.
In other words, all edges that share a connected component in the hypergraph $(V,F)$ must have the same color.
This leads to the computational problem of selecting a largest set of stable edges:

\boxproblem{\textsc{Colored (Hypergraph) Clustering (CC/CHC)}}
{A (hyper)-graph $G=(V,E)$, an edge coloring~$\coloring \colon E \to \colors$, and an integer $k \in \N$.}
{Does $G$ contain a stable edge set $F$ of size at least $k$?}
Sometimes it is more convenient to express stability using vertex colorings.
An edge $e\in E$ is \emph{stable} under the vertex coloring $f\colon V\to \colors$, if $f(v) = \ell(e)$ for all $v\in E$ and unstable otherwise.

One can transform any stable edge set $F$ into a vertex coloring $f_F \colon V \to \colors$ by setting:
\begin{align*}
	f_F(v) \coloneqq
	\begin{cases}
		\ell(e), & \text{if } v \in e \in F,\\
		\bot, & \text{if } v \text{ is not incident to any edges in } F,
	\end{cases}
\end{align*}
where $\bot \in \colors$ is an arbitrary default color.
We note the following:
\begin{observation}
	If $F\subseteq E$ is stable, then the edges in~$F$ are stable under~$f_F$.
\end{observation}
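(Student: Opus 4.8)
The plan is to simply unwind the two notions of stability and check that the construction of $f_F$ is consistent with them. The one point that genuinely requires the hypothesis is the well-definedness of $f_F$ as a function: the defining expression assigns $f_F(v) = \ell(e)$ for some edge $e \in F$ with $v \in e$, and a priori $v$ could be contained in two distinct edges $e, e' \in F$ with $\ell(e) \neq \ell(e')$. So the first step I would carry out is to rule this out, arguing that if $v \in e$ and $v \in e'$ for $e, e' \in F$, then $e \cap e' \ni v$, hence $e \cap e' \neq \emptyset$, and the stability of $F$ forces $\ell(e) = \ell(e')$. Thus the value $f_F(v)$ does not depend on the choice of incident $F$-edge, and $f_F$ is a well-defined vertex coloring (sending vertices incident to no $F$-edge to the default color $\bot$).

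Given this, the statement is immediate. I would fix an arbitrary edge $e \in F$ and an arbitrary vertex $v \in e$; since $v$ is incident to the edge $e \in F$, the first case in the definition of $f_F$ applies and gives $f_F(v) = \ell(e)$. As $v$ was an arbitrary endpoint of $e$, every endpoint of $e$ is colored $\ell(e)$ under $f_F$, which is exactly the condition for $e$ to be stable under $f_F$. Since $e$ was an arbitrary edge of $F$, all edges in $F$ are stable under $f_F$.

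I do not expect any real obstacle here: the argument is a direct substitution into the definitions, and the only place where care is needed — and where the assumption that $F$ is stable is actually used — is the well-definedness of $f_F$.
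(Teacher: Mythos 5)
Your proof is correct. The paper states this observation without proof, and your argument is the natural direct verification: you rightly identify that the only nontrivial point is the well-definedness of $f_F$ (which is exactly where stability of $F$ is used), after which the conclusion follows by unwinding the definition of stability under a vertex coloring.
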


\subsection{Parameterized complexity}
Let~$\Sigma$ be a
finite alphabet.
A \emph{parameterized problem}~$L\subseteq \Sigma^*\times \mathbb N$ is a subset of all instances~$(x,\kappa)$ in~$\Sigma^*\times \mathbb N$,
and~$\kappa$ is the \emph{parameter}.
A parameterized problem~$L$ is 
\begin{inparaenum}[(i)]
	\item \emph{fixed-parameter tractable} (or contained in the class \FPT) if there is an algorithm that decides~$L$ in~$f(\kappa)\cdot |x|^{O(1)}$ time,  
	\item contained in the class \XP{} if there is an algorithm that decides~$L$ in~$|x|^{f(\kappa)}$ time, and
	\item \emph{para-NP-hard} if $L$ is~\NP-hard for any constant value of the parameter,
\end{inparaenum}
where~$f\colon \N \to \N$ is any computable function that only depends on the parameter.
Note that $\FPT\subseteq \XP$.
For running time bounds, we use the~$\bigO^*$ notation which hides factors that are polynomial in the input size.
If a parameterized problem is \emph{\W{1}-hard}, 
then it is presumably not in \FPT, and if it is para-NP-hard, then it is not in~\XP{} (unless P$\;=\;$\NP).
A \emph{kernel} for $L$ is a polynomial-time algorithm that takes the instance $(x,\kappa)$ and outputs a second instance $(x',\kappa')$ such that
\begin{inparaenum}[(i)]
	\item $(x,\kappa) \in L \iff (x',\kappa') \in L$ and
	\item $|(x',\kappa')| \leq f(\kappa)$ for a computable function $f$.
\end{inparaenum}
The \emph{size} of the kernel is $f$.
For further details, we refer to the standard literature~\cite{Cygan2015,Downey2013}.

\section{Parameterizing by the number of stable edges}

Cai and Leung~\cite{Cai2018} showed that \cec{} is \fpt{} with respect to the maximum number $k$ of stable edges.
Their algorithm uses color-coding and can find a coloring such that at least $k$ edges are stable with probability $1-\varepsilon$ in $\bigO(k^{2k}\ln(\frac{1}{\varepsilon})(n+m))$ time.
This algorithm can be derandomized to yield a deterministic algorithm with running time $k^{2k+\bigO(\log k)}(n+m)$.
In the following, we will improve on this running time and give a single-exponential algorithm for \cecS{} parameterized by $k$.
Cai and Leung asked if \cecS{} has a kernel that is polynomial in~$k$.
We will show that the problem does, indeed, admit a kernel of size~$\bigO(k^{5/2})$.

\subsection{Single-exponential time algorithm}

Our single-exponential time algorithm is a parameterized reduction to \textsc{Weighted Exact Cover}, defined as follows.
\boxproblem{\textsc{Weighted Exact Cover}}
{A universe~$U$, a family~$\mathcal S$ of nonempty subsets of~$U$, a weight function~$w \colon \mathcal S \to \mathbb N$, and integers~$s,W$.}
{Is there a subfamily~$\mathcal S'$ of pairwise disjoint subsets  with $\abs{\bigcup_{S\in \mathcal S'} S} =s$ such that $\sum_{S \in \mathcal S'} w(S) \ge W$?}

Using the fact that \textsc{Weighted Exact Cover} can be solved in~$\bigO(2.851^s |S|\cdot|U| \log^2 |U|)$ time~\cite{goyal2015deterministic,shachnai2017graphpartitioning}, we prove the following.

\begin{theorem}
	\label{thm:fpt-stable-edges}
	\cec{} can be solved in $\bigO^*(2^{\bigO(k)})$ time.
\end{theorem}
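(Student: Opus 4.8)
The strategy is a parameterized reduction to \textsc{Weighted Exact Cover} that runs in FPT time and produces an instance whose ``universe parameter'' $s$ is $\bigO(k)$; then invoking the $\bigO(2.851^s|\mathcal S|\cdot|U|\log^2|U|)$-time algorithm cited above gives the claimed $\bigO^*(2^{\bigO(k)})$ bound. The key observation is that a stable edge set $F$ of size $k$ decomposes into connected components $C_1,\dots,C_t$ of the hypergraph $(V,F)$, each component being monochromatic; each component spans at most $k$ vertices in total across all components (indeed at most $2k$ for graphs, and $\bigO(k)$ for hypergraphs of bounded order), and the whole of $F$ touches at most $\bigO(k)$ vertices. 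So I would first argue structurally that we may restrict attention to ``small'' monochromatic connected pieces.

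Concretely, I would enumerate candidate ``building blocks'': for each color $c$ and each small connected edge set $B$ of color $c$ with, say, $|B|\le k$, the block contributes vertex set $V(B)$ and weight $|B|$. The universe is $V$ (or the set of vertices that can plausibly be covered), and we ask for a pairwise-disjoint subfamily. The disjointness of the \textsc{Weighted Exact Cover} solution exactly enforces that no vertex is shared between two selected monochromatic components, which is precisely the stability condition (two intersecting edges of different colors are forbidden, and edges within one block already share the block's color). Setting the weight target $W=k$ and letting $s$ range over $\{0,1,\dots,\bigO(k)\}$ (the number of vertices actually covered by an optimal solution), one of these calls returns \yes{} iff a stable edge set of size $\ge k$ exists. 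The blocks of size $>k$ are irrelevant, and a block that already alone has $k$ edges is a trivial \yes{}-instance, so every block we enumerate has $\le k$ edges and hence $\bigO(k)$ vertices.

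The main obstacle is controlling the \emph{number} of building blocks and the universe size so that the reduction is genuinely FPT and $s=\bigO(k)$. A naive enumeration of all connected edge subsets of size $\le k$ is not polynomial in $n$. The fix I would pursue: first apply reduction rules / a simple preprocessing to bound the instance, or — more cleanly — observe that we only need blocks that are \emph{subgraphs of connected components of a single color class}, and within each color class we only need, per component, to know which $\le k$-edge connected subsets arise; using that the target is only $k$ edges total, one can branch or use a bounded-search argument so that the relevant blocks number $2^{\bigO(k)}\cdot n^{\bigO(1)}$. An alternative, likely the intended route, is to let $\mathcal S$ consist of one set per (color, vertex-subset-of-size-$\le\bigO(k)$) pair realizable as a connected monochromatic subgraph, enumerated in $n^{\bigO(1)}\cdot 2^{\bigO(k)}$ time, with weight equal to the maximum number of color-$c$ edges inside that vertex subset. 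Verifying that this enumeration is correct, that $|\mathcal S|$ and $|U|$ are polynomially bounded after preprocessing (so that the $|\mathcal S|\cdot|U|\log^2|U|$ factor is absorbed into $n^{\bigO(1)}$), and that $s$ never needs to exceed $\bigO(k)$, are the technical points that require care; everything else is the routine check that \yes{}-instances correspond across the reduction.
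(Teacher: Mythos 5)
Your proposal is essentially the paper's proof: a parameterized reduction to \textsc{Weighted Exact Cover} whose sets are vertex-subsets of connected monochromatic components, weighted by the number of same-colored induced edges, with $s=\bigO(k)$ and $W=k$, then solved by the cited $\bigO(2.851^s\,|\mathcal S|\cdot|U|\log^2|U|)$ algorithm. The bound on $|\mathcal S|$ you were unsure about falls out of exactly your own trivial-yes observation: after returning yes when some color class has $\ge k$ connected components or some monochromatic component has $\ge k$ edges, each color has fewer than $k$ components, each on at most $k$ vertices, so $|\mathcal S|\le|\colors|\cdot k\cdot 2^{k}+\bigO(k)$, and the paper fixes $s=2k$ and pads with $2k$ weight-zero dummy singletons rather than iterating over $s$.
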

\begin{proof}
	We provide a parameterized reduction to \textsc{Weighted Exact Cover}.
	Let~$(G=(V,E), \coloring, k)$ be an instance of \cec{}.
	For each color~$c \in \colors$ denote by~$G^c$ the subgraph spanned by the edges of color~$c$ (note that $G^c$ does not contain isolated vertices).
	Let~$q_c$ be the number of connected components in~$G^c$ and let~$Q^c_{1}, Q^c_{2}, \dots, Q^c_{q_c} \subseteq V$ be the connected components.
	If $q_c \ge k$ for some color~$c$ or $G^c[Q^c_p]$ has at least $k$ edges for some~$c$ and~$p \in [q_c]$, then this is a trivial \yes-instance.
	So assume the contrary.
	We create an instance~$(U, \mathcal S, s\coloneqq 2k, W\coloneqq k)$ as follows.
	Set~$U \coloneqq V \uplus \{x_1,\ldots,x_s\}$.
	For each color~$c \in \colors$, for each component~$Q^c_p$, $p \in [q_c]$, and for each nonempty subset $X \subseteq Q^c_p$, we add~$X$ to~$\mathcal S$ and set~$w(X) \coloneqq \abs{\{ \{u,v\} \in E \mid u,v\in X, \ell(\{u,v\}) = c\}}$, that is, the number of edges of color~$c$ in the graph induced by~$X$.
	We also add the singletons $\{x_1\},\ldots,\{x_s\}$ to $\mathcal S$ and set $w(\{x_i\}) \coloneqq 0$.

	Suppose that the instance of \cecS{} is a \yes-instance, that is, there exists a stable edge set~$F \subseteq E$ of size at least~$k$.
	Then, for each connected component $X \subseteq V$ of $G_F = (V, F)$ of size at least two, there is a set $X \in \mathcal{S}$ whose weight is the number of edges in $G_F[X]$, since the edges of $G_F[X]$ have the same color.
	These connected components are pairwise disjoint.
	We create $\mathcal S'$ by adding all of these sets $X$ and, if $\abs{\bigcup_{S\in \mathcal S'} S} < s$, an appropriate number of singletons $\{x_i\}$.
	Thus, $\mathcal S'$ covers exactly $s$ elements and its weight is at least $W$.

	Conversely, suppose that there exists a subfamily~$\mathcal S'$ with weight at least~$W$ and $\abs{\bigcup_{S\in \mathcal S'} S} \allowbreak = s$.
	Each subset in $\mathcal S'$ except for the singletons corresponds to a set~$V'$ of vertices such that, by construction, $G[V']$ contains~$w(V')$ edges, all having the same color.
	Hence, setting~$f(v) = c$ for each~$v \in V'$ yields a solution for the \cecS{} instance with at least~$k$ stable edges.

	As for the running time, note that the instance of \textsc{Weighted Exact Cover} contains at most~$\numcolors \cdot k \cdot 2^k + 2k$ sets in~$\mathcal S$.
	Using the aforementioned algorithm yields a running time of~$\bigO(2.851^s \cdot (\numcolors \cdot k \cdot 2^k + 2k) (|V| + 2k)\log^2(\abs{V} + 2k)) = \bigO(16.26^k\cdot k^2 \log^2 k\cdot \abs{V}\log^2\abs{V})$.
\end{proof}

The above result can be generalized to hypergraphs of order~$d$.
Since a connected component of a hypergraph with at most~$k-1$ edges may have at most~$d(k-1)$ vertices,
the instance of \textsc{Weighted Exact Cover} has at most~$|C| \cdot k \cdot 2^{dk}+dk$ sets.
\begin{corollary}
	\chc{} can be solved in~$\bigO^*(2^{\bigO(d^2k)})$ time.
\end{corollary}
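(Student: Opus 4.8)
The plan is to replay the parameterized reduction to \textsc{Weighted Exact Cover} from the proof of \Cref{thm:fpt-stable-edges} almost verbatim, changing only the numerical bounds that depend on edge sizes. The structural fact to establish first is that in a hypergraph of order~$d$ any connected sub-hypergraph with at most~$k-1$ edges has at most~$d(k-1) \le dk$ vertices: the first edge contributes at most~$d$ vertices, and each further edge, which must share a vertex with the part built so far, adds at most~$d-1$ new ones. Hence, after the same trivial \yes-instance checks as in \Cref{thm:fpt-stable-edges} (some color class~$G^c$ has at least~$k$ connected components, yielding~$k$ pairwise vertex-disjoint edges of color~$c$; or some component of some~$G^c$ has at least~$k$ edges), we may assume that each component of each~$G^c$ has fewer than~$dk$ vertices and that there are fewer than~$k$ such components per color.

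Next I would build the \textsc{Weighted Exact Cover} instance exactly as in \Cref{thm:fpt-stable-edges} but with~$s \coloneqq dk$ in place of~$2k$: put~$U \coloneqq V \uplus \{x_1,\dots,x_s\}$, and for every color~$c$, every component~$Q^c_p$, and every nonempty~$X \subseteq Q^c_p$ add~$X$ to~$\mathcal S$ with weight the number of color-$c$ edges~$e$ with~$e \subseteq X$; also add the~$s$ weightless singletons~$\{x_i\}$ and set~$W \coloneqq k$. The equivalence proof then carries over. For the forward direction, take a stable edge set and (without loss of generality, since any subset of a stable set is stable) shrink it to a subset~$F$ of exactly~$k$ edges; each connected component of~$(V,F)$, regarded as a vertex set, is connected by~$F$-edges of a single color, hence lies inside some~$Q^c_p$ and was recorded in~$\mathcal S$ with weight at least the number of~$F$-edges it contains, these components are pairwise disjoint, and together they cover at most~$dk = s$ vertices, so padding with singletons~$\{x_i\}$ yields a solution of weight at least~$W$ covering exactly~$s$ elements. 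Conversely, the positively weighted sets of a \textsc{Weighted Exact Cover} solution are pairwise disjoint subsets of single color classes and directly define a vertex coloring under which at least~$W = k$ edges are stable, just as in \Cref{thm:fpt-stable-edges}.

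For the running time, each of the fewer than~$k$ components per color now has at most~$2^{dk}$ subsets, so~$\abs{\mathcal S} \le \numcolors \cdot k \cdot 2^{dk} + dk$; the instance is built in~$2^{\bigO(dk)} \cdot n^{\bigO(1)}$ time, and the~$\bigO(2.851^s \abs{\mathcal S}\, \abs{U} \log^2 \abs{U})$-time algorithm for \textsc{Weighted Exact Cover} with~$s = dk$ then runs in~$\bigO(2.851^{dk}(\numcolors k 2^{dk} + dk)(n + dk)\log^2(n+dk)) = \bigO^*(2^{\bigO(dk)})$, well within the claimed~$\bigO^*(2^{\bigO(d^2 k)})$ bound. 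I do not expect a genuine obstacle here, since the argument is a direct generalization of \Cref{thm:fpt-stable-edges}; the only step that needs slightly more care than in the graph case is shrinking the stable edge set to exactly~$k$ edges, which is precisely what prevents the number of covered vertices from exceeding~$s$.
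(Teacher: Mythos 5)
Your proof is correct and follows the same approach as the paper: generalize the Weighted Exact Cover reduction from \Cref{thm:fpt-stable-edges} with $s = dk$, using the bound that each connected component of a color class has at most $d(k-1) < dk$ vertices and hence at most $2^{dk}$ subsets, giving $\abs{\mathcal S} \le \numcolors \cdot k \cdot 2^{dk} + dk$. Your careful tally in fact gives the tighter bound $\bigO^*(2^{\bigO(dk)})$, which subsumes the stated $\bigO^*(2^{\bigO(d^2k)})$; the extra factor of $d$ in the paper's exponent appears to be an overestimate rather than a necessity.
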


\subsection{A polynomial kernel}
We next show that \cec{} admits a kernel of size polynomial in the number~$k$ of stable edges, answering an open of question posed by Cai and Leung~\cite{Cai2018}.

We begin with the observation that a matching is stable:

\begin{rrule}
	\label{rr:matching}
	If $G$ has a matching of size $k$, then return \yes.
\end{rrule}

Let $M$ be any maximal matching and let $S := \bigcup_{e \in M} e$.
Then, $S$ is a vertex cover of size at most $2k$ by \Cref{rr:matching}.
Thus, it remains to bound the size of the independent set $V \setminus S$.
It can be arbitrarily large at this point;
in fact, we show in \Cref{sec:struct} that \cec{} is W[1]-hard when parameterized by the vertex cover number, and thus, it is presumably not possible to bound the size of $I$ in terms of the vertex cover size alone.

We use another simple observation that a set of monochromatic edges is stable:

\begin{rrule}
	\label{rr:one-color}
	If there are $k$ edges of the same color, then return \yes.
\end{rrule}

In view of \Cref{rr:one-color}, we only need to bound the number of colors.
We can achieve this by bounding the chromatic degree of every vertex.

\begin{rrule}
	\label{rr:redundant-color}
	If there is a vertex $v \in V$ of chromatic degree at least~$2k+1$,
	then delete all edges with color~$c$ in $\delta(v)$, where $c$ is the least frequent color in~$\delta(v)$.
\end{rrule}

\begin{lemma}
	\Cref{rr:redundant-color} is correct.
\end{lemma}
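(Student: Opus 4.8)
The correctness of \Cref{rr:redundant-color} amounts to showing that the instance before the deletion is a \yes-instance if and only if the instance after is. The backward direction is immediate: removing edges can never create a new stable edge set, since whether a set $F$ is stable depends only on the colors and the intersection pattern of the edges in $F$ itself. So the plan is to prove the forward direction: given a stable edge set of size at least $k$ in $G$, I want to produce one of size at least $k$ that avoids all deleted edges. First I would pass to a stable edge set $F$ of size exactly $k$ (any subset of a stable set is stable), let $c$ be the least frequent color in $\delta(v)$, and let $D$ be the set of color-$c$ edges in $\delta(v)$, i.e.\ the edges the rule deletes. If $F \cap D = \emptyset$ we are already done, so assume otherwise. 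Since the edges of $F_v := F \cap \delta(v)$ pairwise intersect in $v$, they are monochromatic, and as $F$ meets $D$ their common color is $c$; write $m := |F_v| \ge 1$, and note $\deg_{G,c}(v) \ge m$.

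The core is an exchange argument: delete the star $F_v$ and re-attach $m$ edges at $v$ of a better color. Put $F' := F \setminus F_v$, a stable set of size $k-m$ none of whose edges touches $v$, and let $P := \bigcup_{e \in F'} e$, so that $v \notin P$ and $|P| \le 2(k-m) \le 2k-2$; by stability of $F'$, every vertex of $P$ is incident in $F'$ only to edges of a single color, which I call its $F'$-color. Because $v$ has chromatic degree at least $2k+1$, there are at least $2k$ colors $c' \ne c$ on $\delta(v)$; I would call a vertex $u$ \emph{bad for} $c'$ if $u \in P$ and its $F'$-color differs from $c'$. The plan is to find a color $c^*$ with no vertex bad for $c^*$ among the endpoints of color-$c^*$ edges at $v$, pick $m$ such edges $\{v,u_1\},\dots,\{v,u_m\}$ with distinct $u_i$ (possible since $c$ is least frequent, so $\deg_{G,c^*}(v) \ge \deg_{G,c}(v) \ge m$), and add them to $F'$. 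The resulting set has size $k$, avoids $D$ (its only edges at $v$ have color $c^* \ne c$), and is stable: edges within $F'$ are fine, the new edges pairwise meet only in $v$ and share color $c^*$, and a new edge $\{v,u_i\}$ can meet some $e' \in F'$ only in $u_i$, which then lies in $P$ and is not bad, so its $F'$-color — hence $\ell(e')$ — equals $c^*$.

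The one nontrivial step, and the part I expect to be the main obstacle, is producing the color $c^*$. Here I would use that $G$ is a simple graph: each neighbor $u$ of $v$ is joined to $v$ by exactly one edge, so $u$ is an endpoint of a color-$c'$ edge at $v$ for exactly one color $c'$. Hence, summing over all $\ge 2k$ colors $c' \ne c$ at $v$, the number of endpoints of color-$c'$ edges at $v$ that are bad for $c'$ is at most $|P| \le 2k-2 < 2k$. Since there are at least $2k$ such colors, at least one of them has no bad endpoint, which is the desired $c^*$. Everything else is routine bookkeeping, so this pigeonhole step — exploiting that the $2k+1$ colors at $v$ outnumber the at most $2(k-m)$ vertices that $F\setminus F_v$ can obstruct, together with the disjointness of the color classes of edges at $v$ — is where the threshold $2k+1$ in the rule is used.
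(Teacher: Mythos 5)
Your proof is correct and follows essentially the same exchange-plus-pigeonhole argument as the paper: when the stable set $F$ uses color-$c$ edges at $v$, find an unblocked color $c'\neq c$ at $v$ and swap the star $F\cap\delta(v)$ for $|F\cap\delta(v)|$ edges of color $c'$, using that $c$ is least frequent at $v$ to guarantee enough replacement edges. The only differences are bookkeeping: the paper keeps all of $F$, bounds $\lvert\bigcup_{e\in F}e\rvert\le 2k$ (with $v$ itself among these vertices, leaving at most $2k-1$ potential blockers), and insists the color-$c'$ edges be fully vertex-disjoint from $F$ outside $v$; you first discard $F\cap\delta(v)$, tightening the bound to $\lvert P\rvert\le 2k-2$, and relax disjointness to your ``not bad'' condition, which permits reuse of a $P$-vertex whose $F'$-color already equals $c'$. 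Both pigeonhole steps rest, as you correctly isolate, on simplicity of $G$: each neighbor of $v$ can block at most one color. The argument is sound in either formulation.
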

\begin{proof}
	Let $G'= (V, E')$ be the resulting graph after applying \Cref{rr:redundant-color}.
	If~$F \subseteq E'$ with~$|F| \ge k$ is stable in~$G'$, then it is also stable in~$G$.

	Conversely, suppose that $F \subseteq E$ with $|F| = k$ is stable in $G$.
	We show that there exists an edge set $F' \subseteq E'$ of size $k$ which is stable in $G'$.
	If $f_{F}(v) \ne c$, then $F$ is stable in $G$ since we have only deleted edges of color $c$ incident to $v$.
	Suppose that $f_{F}(v) = c$.
	Since there are at least~$2k$ other colors among the edges incident to~$v$,
	and $|\bigcup_{e \in F} e| \le 2k$,
	there exists a color~$c' \ne c$ such that the edges incident to~$v$ with color~$c'$ are vertex-disjoint to the edges in~$F$ (except for~$v$).
	As~$c$ is the least frequent color among the edges incident to~$v$, we obtain a desired solution~$F$ by replacing the edges of color~$c$ with those of color~$c'$.
\end{proof}

At this point, we can show that the resulting graph has at most $\bigO(k^3)$ edges.
Recall that $S$ is a vertex cover, i.e., every edge is incident to $S$.
Thus, there are at most $(2k + 1) \cdot |S| \in \bigO(k^2)$ colors that appear at least once.
By \Cref{rr:one-color}, each color appears at most $k$ times in the graph by \Cref{rr:one-color}, and consequently, we have $\bigO(k^3)$ edges.

We will show how to improve upon this using a meet-in-the-middle argument.
For this, we denote for a vertex~$v$ and color~$c$ the set of neighbors connected by edges of color~$c$ by~$N_c(v)$.
Intuitively speaking, if we have ``many'' vertices $v$ in $S$ such that there are ``many'' colors $c$ appearing ``many'' times in $\delta(v)$, then we can construct a stable set of size $k$.

\begin{rrule}
	\label{rr:meet-in-middle}
	Let $T \subseteq S$ be the set of vertices $v$ with $|\{ c \in \colors \mid |N_c(v) \setminus S| \ge 2k^{1/2} \}| \ge 2 k^{1/2}$.
	If $|T| \ge k^{1/2}$, then return \yes.
\end{rrule}

\begin{lemma}
	\label{lemma:meet-in-middle}
	\Cref{rr:meet-in-middle} is correct.
\end{lemma}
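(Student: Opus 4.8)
The plan is to prove the only implication that correctness of \Cref{rr:meet-in-middle} requires: whenever $|T| \ge k^{1/2}$ (so that the rule fires and returns \yes), the graph $G$ does have a stable edge set of size at least $k$; the rule does not modify the instance, so nothing else is needed.

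First I would isolate the combinatorial fact that makes $T$ useful: \emph{a vertex-disjoint union of monochromatic stars is stable}. Indeed, if $F = F_1 \cup \dots \cup F_t$, where each $F_j$ consists of edges that are all incident to a common center $v_j \in V$ and all share one color $c_j$, and the vertex sets of the stars $(V,F_j)$ are pairwise disjoint, then every connected component of $(V,F)$ is a single such star and is therefore monochromatic; hence $F$ is stable. Consequently it suffices to pack $k^{1/2}$ pairwise vertex-disjoint monochromatic stars, each with exactly $k^{1/2}$ leaves, because their union is then a stable set of $k^{1/2}\cdot k^{1/2}=k$ edges. (When $k^{1/2}$ is not an integer one uses $\lceil k^{1/2}\rceil$ throughout and notes $\lceil k^{1/2}\rceil^2\ge k$; the counting below leaves ample slack for this.)

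Next I would build the stars greedily, drawing all centers from $T$ and all leaves from $V\setminus S$. Fix $T'\subseteq T$ with $|T'|=k^{1/2}$, enumerate $T'=\{v_1,\dots,v_{k^{1/2}}\}$, and maintain the set $U\subseteq V\setminus S$ of leaves used so far. When it is $v_i$'s turn, the previously built $i-1$ stars used at most $(i-1)\,k^{1/2}\le (k^{1/2}-1)\,k^{1/2}=k-k^{1/2}$ leaves, so $|U|\le k-k^{1/2}$. Since $v_i\in T$, there are at least $2k^{1/2}$ \emph{good} colors $c\in\colors$, i.e.\ with $|N_c(v_i)\setminus S|\ge 2k^{1/2}$; and because $G$ is simple we have $N(v_i)=\biguplus_{c\in\colors}N_c(v_i)$, so the sets $N_c(v_i)\setminus S$ over distinct colors are pairwise disjoint. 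If every good color $c$ had fewer than $k^{1/2}$ free leaves, that is $|(N_c(v_i)\setminus S)\setminus U|<k^{1/2}$, then each such set would meet $U$ in more than $k^{1/2}$ vertices, and summing over the at least $2k^{1/2}$ disjoint good colors would give $|U|>2k^{1/2}\cdot k^{1/2}=2k$, contradicting $|U|\le k-k^{1/2}$. Hence some good color $c_i$ has at least $k^{1/2}$ free leaves; choose $k^{1/2}$ of them, add the $k^{1/2}$ color-$c_i$ edges joining them to $v_i$ as the $i$-th star, and insert these leaves into $U$. After all $k^{1/2}$ rounds, the stars have distinct centers in $S$ and pairwise disjoint leaf sets in $V\setminus S$, hence are vertex-disjoint, so their union $F$ is stable by the first step and $|F|=k$.

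I expect the only real obstacle to be the pigeonhole step inside the greedy construction: one must make sure the candidate neighbor sets for distinct good colors are genuinely disjoint (this is exactly where simplicity of $G$ enters, via $N(v_i)=\biguplus_{c}N_c(v_i)$) and that the running bound $|U|\le k-k^{1/2}$ never runs out before all $k^{1/2}$ centers are processed. Everything else, including the integrality of $k^{1/2}$, is routine.
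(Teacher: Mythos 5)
Your proof is correct and follows essentially the same strategy as the paper's: greedily pack vertex-disjoint monochromatic stars with centers in $T$ and leaves in $V\setminus S$, using a pigeonhole argument over the $\ge 2k^{1/2}$ "good" colors at each center (where good means $|N_c(v)\setminus S|\ge 2k^{1/2}$) to show that some color always has enough unused leaves. The only cosmetic difference is that you keep every star at a uniform size $k^{1/2}$, whereas the paper lets the $q$-th star shrink to size $2k^{1/2}-(q-1)$ and sums those; both bookkeeping schemes deliver $|F|\ge k$, and your explicit treatment of the integrality of $k^{1/2}$ and of the disjointness of the $N_c(v_i)\setminus S$ is a harmless elaboration of points the paper leaves implicit.
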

\begin{proof}
	Assume \wilog{} that $t \coloneqq |T| = k^{1/2}$ and let $T \coloneqq \{ v_1, \dots, v_{t} \}$.
	We show that the following greedy algorithm finds a stable set $F = F_{t}$ of size $k$:
	We begin with $F_0 := \emptyset$.
	We will construct $F_{q}$ for increasing $q \in [t]$ such that $|F_{q}| = \sum_{p=1}^q 2k^{1/2} - (p - 1)$.
	Let~$V_{F_{q-1}} \coloneqq (\bigcup_{e \in F_{q - 1}} e)\setminus S$ be the set of endpoints of~$F_{q-1}$ that are outside~$S$.
	For increasing $q \in [t]$ let $c \coloneqq \arg \max_{c' \in \colors} |N_{c'}(v_{q}) \setminus (S \cup V_{F_{q - 1}})|$.
	We claim that $|N_c(v_{q}) \setminus (S \cup V_{F_{q - 1}})| \ge 2k^{1/2} - (q - 1)$.
	Note that~$|V_{F_{q-1}}| \le |F_{q-1}| \le 2(q-1)k^{1/2}$.
	As there are $2 k^{1/2}$ colors~$c \in \colors$ such that~$|N_c(v_q) \setminus S| \ge 2k^{1/2}$,
	we have by pigeonhole principle that there exists a color~$c$ with $|N_c(v_{q}) \cap V_{F_{q - 1}}| \le |V_{F_{q-1}}|/(2k^{1/2}) = q-1$.
	Hence, $|N_c(v_q) \setminus S| - |N_c(v_q) \cap V_{F_{q-1}}| \ge 2k^{1/2}-(q-1)$.
	Now, let $F_{q}$ be formed by $F_{q - 1}$ and $2k^{1/2} - (q - 1)$ arbitrary edges between $v_{q}$ and $N_c(v_{q}) \setminus (S \cup V_{F_{q-1}})$.
	Finally, let $F := F_{t}$.
	We have $|F| = \sum_{q \in [t]} |F_{q}| - |F_{q - 1}| = \sum_{q \in [t]} 2k^{1/2} - (q - 1) \ge \sum_{q \in [t]} k^{1/2} = k$.
\end{proof}

After applying \Cref{rr:meet-in-middle}, we have ``not too many'' vertices such that there are ``many'' colors $c$ appearing ``many'' times in $\delta(v)$, which results in a smaller kernel.

\begin{theorem}
	\label{thm:kernel-stable-edges}
	\cec{} admits a kernel of size $\bigO(k^{5/2})$.
\end{theorem}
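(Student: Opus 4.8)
The plan is to apply \Cref{rr:matching,rr:one-color,rr:redundant-color,rr:meet-in-middle} exhaustively, together with the obvious safe rule that deletes every vertex incident to no edge, and then bound the number of vertices, edges, and colors of the reduced instance by $\bigO(k^{5/2})$; if some rule returns \yes{}, we instead output a trivial constant-size \yes-instance. In the reduced instance, recall that the set $S \coloneqq \bigcup_{e \in M} e$ of a maximal matching $M$ is a vertex cover with $|S| \le 2k$, and $I \coloneqq V \setminus S$ is independent; hence every edge lies inside $S$ or between $S$ and $I$, and after the isolated-vertex rule the number of vertices of $I$ is at most the number of $S$--$I$ edges and the number of colors is at most the total number of edges. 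The edges inside $S$ number at most $\binom{2k}{2} = \bigO(k^2)$, so the whole argument reduces to bounding the number of $S$--$I$ edges by $\bigO(k^{5/2})$.

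For this I would fix $v \in S$ and, as in the proof of \Cref{lemma:meet-in-middle}, split the colors appearing on $\delta(v)$ into \emph{light} colors $c$ with $|N_c(v) \setminus S| < 2k^{1/2}$ and \emph{heavy} colors $c$ with $|N_c(v) \setminus S| \ge 2k^{1/2}$; since the graph is simple, the number of color-$c$ edges from $v$ into $I$ is exactly $|N_c(v) \setminus S|$. By \Cref{rr:redundant-color} there are at most $2k$ colors on $\delta(v)$, so the light colors contribute fewer than $2k \cdot 2k^{1/2}$ edges at $v$, hence $\bigO(k^{5/2})$ such edges over all $v \in S$. For heavy colors, \Cref{rr:one-color} gives $|N_c(v) \setminus S| \le k$; if $v \notin T$ then $v$ has fewer than $2k^{1/2}$ heavy colors and so contributes fewer than $2k^{1/2} \cdot k$ edges, for a total of $\bigO(k^{5/2})$ over all such $v$, while if $v \in T$ then $v$ has at most $2k$ heavy colors and contributes at most $2k \cdot k$ edges, and since $|T| < k^{1/2}$ by \Cref{rr:meet-in-middle} this sums to $\bigO(k^{5/2})$ as well. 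Adding the three cases shows that the reduced instance has $\bigO(k^{5/2})$ edges, hence $\bigO(k^{5/2})$ vertices and (after a canonical relabelling) colors, so it is a kernel of size $\bigO(k^{5/2})$ and the parameter $k$ is unchanged.

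I expect the only genuine obstacle to be getting this three-way split exactly right: light colors, heavy colors at the many vertices outside $T$, and heavy colors at the at most $k^{1/2}$ vertices of $T$ — this is precisely the balance that turns the crude $\bigO(k^3)$ heavy-color bound into $\bigO(k^{5/2})$, and it uses all four quantitative guarantees ($|S| \le 2k$, per-color multiplicity $\le k$, chromatic degree $\le 2k$, and $|T| < k^{1/2}$). The remaining points are routine: that the isolated-vertex rule is safe, that the rule applications terminate (each strictly decreases the number of edges), and that each $S$--$I$ edge is charged only once, namely to its endpoint in $S$ and that endpoint's color.
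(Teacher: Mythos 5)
Your proposal is correct and follows essentially the same route as the paper: apply \Cref{rr:matching,rr:one-color,rr:redundant-color,rr:meet-in-middle} exhaustively, take $S$ a vertex cover of size $\bigO(k)$ from a maximal matching, and bound the $S$--$I$ edges via the light/heavy color split at each $v\in S$, treating $T$ separately using $|T|<k^{1/2}$. The only cosmetic difference is the order in which you group the three cases (edges inside $S$, light colors, heavy colors with a $T$ vs.\ $S\setminus T$ split), which is the same accounting as in the paper just rearranged.
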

\begin{proof}
	We apply \Cref{rr:one-color,rr:matching,rr:redundant-color,rr:meet-in-middle} exhaustively and delete all isolated vertices.
	We show that the resulting graph has size $\bigO(k^{5/2})$.

	Let $T$ be as specified in \Cref{rr:meet-in-middle}.
	We can assume that $|T| \le k^{1/2}$ and thus there are at most $\bigO(k^2 \cdot |T|) = \bigO(k^{5/2})$ incident to $T$.
	Now consider a vertex $v \in S \setminus T$.
	Note that it suffices to bound the number of edges one of whose endpoint is in $S$, since there are at most $\bigO(k^2)$ edges with both endpoints in $S$.
	Let $X$ be the set of colors $c$ such that $|N_c(v) \setminus S| \ge 2 k^{1/2}$ and $Y$ be the set of colors $c \notin X$ such that $N_c(v) \setminus S \ne \emptyset$.
	By the definition of $T$, we have at most $2 k^{1/2}$ colors $c \in X$, amounting to at most $k \cdot |X| = \bigO(k^{3/2})$ edges incident to $v$ of color $c \in X$.
	Moreover, there are at most $\bigO(k^{3/2})$ edges incident to $v$ of color in $Y$ since at most $k^{1/2}$ edges of $\delta(v)$ have color $c \in Y$ and $|Y| \in \bigO(k)$.
	Thus, there are at most $\bigO(k^{5/2})$ vertices and edges in the graph since all isolated vertices have been deleted.
\end{proof}

This result generalizes to hypergraphs of order~$d$.
From \cref{rr:matching} we obtain a vertex cover of size at most~$dk$.
\cref{rr:one-color} translates immediately.
We can adapt \cref{rr:redundant-color} to bound the chromatic degree by~$dk$.
As observed above, this already yields a kernel.
\begin{corollary}
	\chc{} admits a kernel of size~$\bigO(d^2k^3)$.
\end{corollary}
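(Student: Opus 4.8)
The plan is to follow the kernelization of \Cref{thm:kernel-stable-edges} almost verbatim, tracking the extra factors of $d$ that appear because an edge now spans up to $d$ vertices. First, \Cref{rr:matching} transfers unchanged: a matching is a stable edge set in any hypergraph. Once it has been applied, any maximal matching $M$ satisfies $|M|<k$, so $S\coloneqq\bigcup_{e\in M}e$ has at most $d(k-1)<dk$ vertices, and by maximality of $M$ it is a vertex cover; in particular every remaining edge meets $S$. Second, \Cref{rr:one-color} transfers unchanged (a monochromatic edge set is stable), so afterwards every color occurs at most $k-1$ times.

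Third, I would adapt \Cref{rr:redundant-color} to bound the chromatic degree by $\bigO(dk)$: if a vertex $v$ has chromatic degree above the chosen threshold, delete the edges of $\delta(v)$ carrying the least frequent color $c$ in $\delta(v)$. Correctness follows the same two-part template as in the graph case. The direction ``reduced $\Rightarrow$ original'' is immediate since we only delete edges. For ``original $\Rightarrow$ reduced'', take a stable set $F$ with $|F|=k$; if $f_F(v)\neq c$ then, since $F$ is stable, $v$ lies in no color-$c$ edge of $F$, so $F$ survives in the reduced hypergraph. Otherwise all edges of $F$ incident to $v$ have color $c$, the set $\bigcup_{e\in F}e$ touched by $F$ has at most $dk$ vertices, and — because $c$ is least frequent in $\delta(v)$ — one can delete $\delta(v)\cap F$ and splice in a star at $v$ of a suitable other color $c'$ whose edges meet $\bigcup_{e\in F}e$ only in $v$, obtaining a stable set of size at least $k$ avoiding the deleted edges. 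This last step is the main obstacle: unlike in a graph, a pair of vertices may lie in several edges of distinct colors, so the number of ``blocked'' colors $c'$ cannot simply be bounded by $|\bigcup_{e\in F}e|$; one has to argue more carefully (for instance exploiting that, after the adapted \Cref{rr:one-color}, each color contributes at most $k-1$ edges through any fixed pair containing $v$) and correspondingly fix the threshold at the smallest $\poly(d)\cdot k$ value for which a free color provably remains.

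Finally, combining the bounds as in \Cref{thm:kernel-stable-edges}: every edge meets the vertex cover $S$ of size at most $dk$, and every vertex has chromatic degree $\bigO(dk)$, so at most $\bigO(d^2k^2)$ colors occur; each occurs at most $k-1$ times by the adapted \Cref{rr:one-color}, giving $\bigO(d^2k^3)$ edges and hence $\bigO(d^3k^3)$ non-isolated vertices, which is polynomial in $k$ for every fixed $d$ and matches the claimed bound. Note that \Cref{rr:meet-in-middle} is not needed: as in the graph case it only serves to push the exponent of $k$ below $3$, while the $\bigO(d^2k^3)$ bound is already obtained from \Cref{rr:matching,rr:one-color,rr:redundant-color} alone.
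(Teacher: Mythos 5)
Your proposal follows exactly the same outline as the paper: apply \Cref{rr:matching} to obtain a vertex cover $S$ with $\abs{S} \le dk$, apply \Cref{rr:one-color} so that each color appears at most $k-1$ times, adapt \Cref{rr:redundant-color} so that every vertex has chromatic degree $\bigO(dk)$, and combine: every edge meets $S$, so at most $\bigO(dk)\cdot\abs{S} = \bigO(d^2k^2)$ colors occur, each on at most $k-1$ edges, giving $\bigO(d^2k^3)$ edges. The paper's own justification is exactly this terse sketch (``We can adapt \Cref{rr:redundant-color} to bound the chromatic degree by $dk$. As observed above, this already yields a kernel.''), and your counting argument reproduces it faithfully.

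The gap you flag in the adapted \Cref{rr:redundant-color} is genuine, and you are right that it is the crux. In the graph case the proof that a free replacement color $c'$ exists rests on the fact that every vertex $u \in (\bigcup_{e\in F}e)\setminus\{v\}$ can block at most one color, since $\{u,v\}$ is at most one edge; this gives at most $2k-1$ blocked colors against $\ge 2k$ candidates. In a hypergraph a single $u$ may lie with $v$ in many edges of pairwise distinct colors, so the number of blocked colors is not bounded by $\abs{\bigcup_{e\in F}e}$, and the argument collapses. Your parenthetical repair (``each color contributes at most $k-1$ edges through any fixed pair containing $v$'') does not help: it bounds the number of edges of a single color through the pair $(v,u)$, whereas what must be bounded is the number of distinct colors through that pair, which is controlled by none of \Cref{rr:matching,rr:one-color}. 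Making the rule safe really does seem to require either a substantially larger threshold (the number of potential blocking edges inside $\{v\}\cup\bigcup_{e\in F}e$ is of order $(dk)^{d-1}$, not $dk$), or an additional reduction rule that caps the number of distinct colors shared by a pair of vertices, and either change propagates into the size bound. So the claimed $\bigO(d^2 k^3)$ needs a more careful justification than either the paper or your write-up supplies; you deserve credit for identifying the weak link rather than glossing over it.
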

Note that the meet-in-the middle argument does not translate to hypergraphs as in \cref{lemma:meet-in-middle} we implicitly assume each edge to have at most one endpoint outside $S$.

\section{Above-guarantee parameters}

There is a fairly straightforward FPT algorithm for \cec{} parameterized by~$r\coloneqq \abs{E} - k$, the number of permitted unstable edges.
One can simply choose an arbitrary pair of edges that share a vertex, but are colored differently and then branch into two cases depending on which of these two edges is unstable.
This yields an $\bigO^*(2^r)$-algorithm.
Cai and Leung~\cite{Cai2018} obtained a faster FPT algorithm with running time~$\bigO^*(1.2783^r)$ by reducing the problem to \textsc{Vertex Cover}.
The recently proposed $\bigO^*(1.25298^k)$ time algorithm for \textsc{Vertex Cover}~\cite{Harris2022} could be used to improve this running time.
There is a line of research that seeks to improve on such algorithms by introducing smaller parameters by considering a lower or upper bound and then using the difference between the traditional parameter and the lower or upper bound as a smaller parameter.
This approach has become known as ``above or below guarantee'' parameterization (see, e.g., \cite{Garg2016,Gutin2022,Kellerhals2022,Mahajan1999}).

One possible above-guarantee parameterization for \cec{} can be obtained by considering the so-called conflict graph of an edge-colored graph.
The conflict graph~$\partial(G,\ell)$, introduced by Angel et al.~\cite[Section~4]{Angel2016}, of a graph $G=(V,E)$ with an edge coloring $\ell \colon E \to \colors$ contains a vertex for every edge in $G$.
Two vertices in $\partial(G,\ell)$ are adjacent if the corresponding edges in $G$ have different colors and share a vertex.
Formally, $\partial(G, \ell) = (E, \partial E)$, where $\partial E = \{ \{ e, e' \} \subseteq E \mid e \cap e' \ne \emptyset \text{ and } \ell(e) \ne \ell(e') \}$.
Any stable edge set in $G$ corresponds to an independent set in $\partial(G,\ell)$ and consequently the minimum number of unstable edges in $G$ is the vertex cover number of $\partial(G,\ell)$.
It follows that any lower bound on the size of a minimum vertex cover in $\partial(G,\ell)$ is a lower bound on $r$.
\textsc{Vertex Cover} is \FPT{} when parameterized above the size of a maximum matching or above the minimum fractional solution of a standard linear program for \textsc{Vertex Cover}~\cite{Lokshtanov2014,Razgon2009}.
These results can also be used to solve \cec{} parameterized above the corresponding lower bounds on $\partial(G,\ell)$.

\subsection{Above degree-based lower bound}
\label{ssect:degree}
We will consider a different lower bound on the number $r$ of edges that must be unstable in any vertex coloring of a graph.
If we consider the edges incident to a particular vertex $v$, all of these edges save the edges of one color are unstable.
Hence, the number of unstable edges incident to $v$ is at least $\deg(v) - \max_{c\in C} \deg_c(v)$.
To obtain a lower bound on the total number of edges that must be unstable in a graph, we can add up this value over all vertices in the graph, but we must divide the sum by two to account for the fact that we may count edges twice in this way.
This lower bound can be formalized as follows.
For any graph~$G=(V,E)$, define
\begin{align*}
\rho(G,\ell) \coloneqq \frac{1}{2}\sum_{v \in V} \big(\!\deg(v) - \max_{c\in \colors} \deg_c(v)\big).
\end{align*}
Observe that for any \yes-instance $(G = (V, E), \ell, k = |E| - r)$ of \cec{}, we have $r \ge \rho(G,\ell)$.

Hence, in the following we will consider the parameterized complexity of \cec{} with respect to the parameter $r-\rho(G,\ell)$.
Unfortunately, it turns out that this problem is para-\NP-hard in general (\cref{thm:degree:hardness}).
However, our finding is that if we consider the parameter $r - \rho'(G, \ell)$ obtained from a smaller, yet still tight, lower bound, \cec{} becomes FPT:
\begin{align*}
	\rho'(G,\ell) \coloneqq \frac{1}{2}\sum_{v \in V} \min \big(\deg(v) - \max_{c\in C} \deg_c(v), \textstyle\frac{1}{2}\deg(v)\big).
\end{align*}
By definition, $\rho(G, \ell) \ge \rho(G', \ell)$ and hence it holds that $r \ge \rho'(G, \ell)$ for any \yes-instance of \cec{}.
We show that \cec{} is FPT for the parameter $r - \rho'(G, \ell)$ by proving that~$\rho'(G, \ell)$ is at most the optimal value of the LP relaxation of \textsc{Vertex Cover} on the conflict graph.
This allows us to use the FPT algorithm for \textsc{Vertex Cover} parameterized by the solution size minus the optimal value of the LP relaxation \cite{Lokshtanov2014}.

\begin{theorem}
	\label{thm:degree:fpt}
	\cec{} is \FPT{} with respect to $r-\rho'(G,\ell)$.
\end{theorem}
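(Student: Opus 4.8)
The plan is to reduce \cec{} parameterized by $r - \rho'(G,\ell)$ to \textsc{Vertex Cover} parameterized above the LP relaxation, as the paper already indicates. Given an instance $(G=(V,E), \ell, k)$ of \cec{}, I construct the conflict graph $H \coloneqq \partial(G,\ell)$. As noted in the excerpt, stable edge sets in $G$ correspond exactly to independent sets in $H$, so a vertex coloring leaving at most $r = |E| - k$ edges unstable exists if and only if $H$ has a vertex cover of size at most $r$. Thus $(G,\ell,k)$ is a \yes-instance iff $(H, r)$ is a \yes-instance of \textsc{Vertex Cover}. The algorithm of Lokshtanov et al.~\cite{Lokshtanov2014} decides the latter in $\bigO^*(c^{r - \mathrm{LP}(H)})$ time for some constant $c$, where $\mathrm{LP}(H)$ is the optimum of the standard LP relaxation $\min \sum_{e \in E} y_e$ subject to $y_e + y_{e'} \ge 1$ for each $\{e,e'\} \in \partial E$ and $y_e \ge 0$. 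So it suffices to show $\rho'(G,\ell) \le \mathrm{LP}(H)$; then $r - \rho'(G,\ell) \ge r - \mathrm{LP}(H)$ and the claimed running time $\bigO^*(c^{r-\rho'(G,\ell)})$ follows.

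\textbf{The core of the proof} is the inequality $\rho'(G,\ell) \le \mathrm{LP}(H)$. By LP duality (or rather by exhibiting a feasible fractional matching), it is enough to construct a feasible solution to the dual LP — the fractional matching LP on $H$ — of value at least $\rho'(G,\ell)$. Concretely, I want weights $z : \partial E \to \R_{\ge 0}$ with $\sum_{f \ni e} z_f \le 1$ for every vertex $e$ of $H$ (i.e.\ every edge $e \in E$), and $\sum_{f \in \partial E} z_f \ge \rho'(G,\ell)$. The natural construction is local and vertex-by-vertex of $G$: fix $v \in V$; among the $\deg(v)$ edges incident to $v$, all but those of the majority color $c^*(v) \coloneqq \arg\max_c \deg_c(v)$ are ``in conflict'' with edges of $c^*(v)$ at $v$. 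I pair up edges incident to $v$ of different colors to form conflict pairs (vertices of $H$'s edges $\partial E$) and spread a total weight of $\min(\deg(v) - \deg_{c^*(v)}(v),\ \tfrac12 \deg(v))$ across these conflict pairs at $v$, giving each of the two involved edges at most a $1/\deg$-type share so the per-edge capacity constraint is not violated even after summing the contributions coming from both endpoints of an edge of $G$.

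\textbf{The main obstacle} is precisely making this local distribution respect the global capacity constraint $\sum_{f \ni e} z_f \le 1$ for every $e \in E$: each edge $e = \{u,v\}$ receives contributions from the construction at $u$ and at $v$, and the capping at $\tfrac12 \deg(v)$ in $\rho'$ (rather than the larger $\deg(v) - \max_c \deg_c(v)$ in $\rho$) is exactly what makes this possible — it guarantees that at each vertex $v$ we distribute weight at most $\tfrac12 \deg(v)$ over conflict pairs at $v$, so on average each incident edge absorbs at most $1/2$ of its capacity from the $v$-side, leaving room for the $\tfrac12$ from the other endpoint. I would formalize this by, at each vertex $v$, matching the non-majority edges at $v$ to majority edges at $v$ (or to each other, when there is no single strict majority), forming a set of vertex-disjoint conflict pairs that saturates all non-majority edges; assigning each such pair a weight of $1/2$ if the number of pairs is at most $\tfrac12\deg(v)$, and otherwise scaling down uniformly; and then checking that an edge $e$ of color $c$ incident to $v$ gets total $v$-contribution at most $1/2$ when $c \ne c^*(v)$ and at most $1/2$ when $c = c^*(v)$ as well (here one uses that a majority-color edge is paired at most once and with weight at most $1/2$). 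Summing over the two endpoints yields $\sum_{f \ni e} z_f \le 1$. The bookkeeping in the tie / no-strict-majority case, and the case $\deg(v) - \max_c\deg_c(v) > \tfrac12\deg(v)$ where the $\min$ actually bites, is where the calculation needs care; once the feasible $z$ is in hand, $\sum_f z_f \ge \tfrac12\sum_v \min(\deg(v)-\max_c\deg_c(v), \tfrac12\deg(v)) = \rho'(G,\ell)$ is immediate, and the theorem follows by invoking \cite{Lokshtanov2014}.
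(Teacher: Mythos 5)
You follow exactly the paper's route: construct the conflict graph $\partial(G,\ell)$, reduce to \textsc{Vertex Cover} parameterized above the LP relaxation value (Lokshtanov et al.), and prove that $\rho'(G,\ell)$ lower-bounds the LP optimum by building a feasible fractional matching of value at least $\rho'(G,\ell)$, locally vertex-by-vertex in $G$. The feasibility argument — each conflict-graph vertex $e=\{u,v\}$ receives at most $1/2$ from the $u$-side and at most $1/2$ from the $v$-side — is likewise the paper's.

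The gap is in the local construction. You take, at each $v\in V$, a set of \emph{vertex-disjoint} conflict pairs within $\delta(v)$ (that is, an integral matching in the complete multipartite conflict subgraph on $\delta(v)$) and give each pair weight $1/2$. A maximum such matching has size exactly $\min\bigl(\deg(v)-\max_{c}\deg_c(v),\ \lfloor\tfrac12\deg(v)\rfloor\bigr)$, so $v$ contributes only $\tfrac12\min\bigl(\deg(v)-\max_c\deg_c(v),\lfloor\tfrac12\deg(v)\rfloor\bigr)$ to $\sum_f z_f$. When $\deg(v)$ is odd and no single color covers half of $\delta(v)$, this equals $\tfrac14(\deg(v)-1)$, falling short of the required $\tfrac12\cdot\tfrac12\deg(v)=\tfrac14\deg(v)$ by $\tfrac14$. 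Summed over all such vertices the shortfall can be $\Theta(n)$, so your closing assertion that ``$\sum_f z_f\ge\rho'(G,\ell)$ is immediate'' is false for the construction as written. The paper sidesteps this precisely by going genuinely fractional in \Cref{lemma:bc-lp}: it inductively peels off weight-$1$ matching pairs and, in the residual base case, assigns weight $\tfrac12$ to \emph{every} edge of the remaining triangle (local value $\tfrac32$, versus the $1$ your integral matching would give). Replacing your integral matching by any local fractional matching of value exactly $\min\bigl(\deg(v)-\max_c\deg_c(v),\tfrac12\deg(v)\bigr)$ in the complete multipartite graph on $\delta(v)$ — which exists, as \Cref{lemma:bc-lp} proves — repairs the argument; as stated it does not go through.
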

\begin{proof}
	\newcommand{\lp}{\alpha}
	\newcommand{\lpd}{\beta}
	We first show that $\lp(\partial(G, \ell)) \ge \rho'(G, \ell)$ for any graph $G = (V, E)$ with an edge coloring $\ell \colon V \to C$, where $\lp(H)$ is the relaxed optimum of \textsc{Vertex Cover} on a graph $H = (V', E')$.
	More precisely, we define
	\begin{align*}
		\lp(H) := \min \sum_{v \in V'} x_v
		\quad\text{ subject to }
		&\quad x_u + x_{v} \ge 1 \quad \forall \{ u, v \} \in E' \\[-1ex]
    	&\quad 0 \le x_v \le 1 \quad \forall v \in V'.
	\end{align*}
	Note that if $x_v \in \{0, 1\}$ for each~$v \in V'$, then this formulation finds a minimum vertex cover in~$H$.
	To show that $\lp(\partial(G, \ell)) \ge \rho'(G, \ell)$, we consider the dual:
	\begin{align*}
		\lpd(H) := \max \sum_{e \in E'} y_{e}
		\quad\text{ subject to }
		&\quad \sum_{e \in E'\!, v \in e} y_{e} \le 1 \quad \forall v \in V' \\[-.5ex]
    	&\quad 0 \le y_{e} \le 1 \quad \forall e \in E'.
	\end{align*}
	By the strong duality theorem, we have $\lp(H) = \lpd(H)$.
	So, it suffices to show that $\lpd(\partial(G, \ell)) \ge \rho'(G, \ell)$.
	To that end, we give a feasible solution $\{ y_{\partial e} \}_{\partial e \in \partial E}$ to the dual LP whose value is at least $\rho'(G, \ell)$.
	To construct such a solution, we show the following.
	\begin{claim}
		\label{lemma:bc-lp}
		Let~$H$ be a complete~$\kappa$-partite graph whose~$i$-th part has~$n_i$ vertices, wherein~$n_i \ge n_j$ for~$i < j \in [\kappa]$.
		If~$H$ has at least two vertices, then
		\begin{align*}
			\lpd(H) \ge \begin{cases}
				\sum_{i=2}^{\kappa} n_i & \text{if } n_1 \ge \sum_{i=2}^{\kappa} n_i. \\
				\frac{1}{2} \sum_{i=1}^{\kappa} n_i & \text{otherwise}.
			\end{cases}
		\end{align*}
	\end{claim}
	\begin{proof}
		If $n_1 \ge \sum_{i=2}^{\kappa} n_i$, then consider a solution to the dual LP defined as follows:
		For every vertex $v$ in the $i$-th part for $i > 1$, we choose an arbitrary distinct vertex $u$ from the first part and let $y_e = 1$ for edge $e=\{u,v\}$ and let $y_e = 0$ for all other edges $e$.
		Note that this yields a (valid) solution of value $\sum_{i=2}^{\kappa} n_i$.
		
		Suppose that $n_1 \le \sum_{i=2}^{\kappa} n_i$.
		If $H$ has at most three vertices, then we have $n_1 = n_2 = 1$ and $n_3 \in \{ 0, 1 \}$, that is, a complete graph on two or three vertices.
		Then, let $y_e = 1/2$ for every edge $e$ to obtain a solution that has value $\frac{1}{2} \sum_{i=1}^{\kappa} n_i$.
		If $H$ has more than three vertices, then choose a vertex~$u$ from the first part and a vertex~$v$ from the second part. %
		Let $y_{\{ u, v \}} = 1$ and let $y_{e} = 0$ for every edge $e$ incident to exactly one of $u$ and $v$.
		After deleting $u$ and $v$ from the graph, we end up with a complete $\kappa'$-partite graph $H'$ for which $n_1' < \sum_{i=2}^{\kappa'} n_i'$ holds, where $n_i'$ is the size of the $i$-th largest part:
		If $n_1' = n_1 - 1$, then $\sum_{i=2}^{\kappa'} n_i' = (\sum_{i=1}^{\kappa'} n_i') - n_1' = (\sum_{i=1}^{\kappa} n_i) - 2 - (n_1 - 1)  = (\sum_{i=2}^{\kappa} n_i) - 1 \ge n_1 - 1 = n_1'$.
		Otherwise, we have $n_1' = n_1$ and hence $n_1 = n_2 = n_3$.
		If  $n_1 = 1$, then we have $\sum_{i=2}^{\kappa'} n_i' = (\sum_{i=1}^{\kappa'} n_i') - n_1' \ge (\sum_{i=1}^{\kappa} n_i) - 2 - 1 \ge 1 = n_1'$, since $H$ has at least four vertices.
		If $n_1 \ge 2$, then we have $\sum_{i=2}^{\kappa} n_i' \ge n_2' + n_3' \ge (n_1 - 1) + (n_2 - 1) = 2n_1 - 2 \ge n_1 = n_1'$.
		Thus, an inductive argument shows that there is a solution whose value is at least $\frac{1}{2} \sum_{i=1}^{\kappa} n_i$.
	\end{proof}
	We use \Cref{lemma:bc-lp} to obtain a solution $\{ y_{\partial e} \}_{\partial e \in \partial E}$ whose value is at least $\rho'(G, \ell)$.
	Initially, let $y_{\partial e} = 0$ for every $\partial e \in \partial E$.
	For every vertex $v \in V$, we do the following:
	The subgraph of $\partial(G, \ell)$ induced by the set $\delta(v)$ of edges incident to $v$ is a complete $\kappa$-partite graph, where~$\kappa$ is the number of colors incident to~$v$.
	Thus, by \Cref{lemma:bc-lp}, it has a dual LP solution $\{ z_{\partial e}^v \}$ of value at least $\min(\deg(v) - \max_{c \in C} \deg_c(v), \frac{1}{2} \deg(v))$.
	For every edge $\partial e \in \partial E$ in this induced subgraph, let $y_{\partial e} = \frac{1}{2} z_{\partial e}^v$.
	We thus obtain a solution of value at least $\rho'(G, \ell)$.
	Moreover, this solution is feasible because for every edge $e = \{ u, v \}$,  we have
	\begin{equation}
		\label{eq:solution-feasible}
		\begin{aligned}
			\sum_{\substack{\partial e \in \partial E\\ e \in \partial e}} y_{\partial e} = \sum_{\substack{\{ e, e' \} \in \partial e\\ u \in e'}} y_{\partial e} + \sum_{\substack{\{ e, e'\} \in \partial e\\ v \in e'}} y_{\partial e} = \sum_{\substack{\{ e, e' \} \in \partial e\\ u \in e'}} {\textstyle\frac{1}{2}} z_{\partial e}^u + \sum_{\substack{\{ e, e'\} \in \partial e\\ v \in e'}} \textstyle{\frac{1}{2}} z_{\partial e}^v \le 1.
		\end{aligned}
	\end{equation}
		
	Since \textsc{Vertex Cover} is FPT when parameterized by the solution size minus the LP relaxation value \cite{Lokshtanov2014}, it follows that \cec{} is \FPT{} with respect to $r - \alpha(H)$, and thereby $r - \rho'(G, \ell)$.
\end{proof}

We remark that this leads to an FPT algorithm for the smaller parameter $r - \rho(G, \ell)$ if the maximum chromatic degree is two.
Note that \cec{} remains \NP-hard with this restriction \cite{Angel2016}.

\begin{corollary}
	\cec{} is FPT with respect to $r - \rho(G, \ell)$ if the chromatic degree of every vertex is at most two.
\end{corollary}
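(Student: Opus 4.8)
The plan is to show that, under the restriction that every vertex has chromatic degree at most two, the two lower bounds coincide, i.e.\ $\rho'(G,\ell) = \rho(G,\ell)$, so that the statement follows immediately from \Cref{thm:degree:fpt}. Fix a vertex $v$. If $v$ is incident to edges of at most two colors, then by the pigeonhole principle the most frequent color among $\delta(v)$ satisfies $\max_{c \in \colors} \deg_c(v) \ge \deg(v)/2$. Rearranging gives $\deg(v) - \max_{c \in \colors} \deg_c(v) \le \tfrac12 \deg(v)$, so the minimum in the definition of $\rho'$ is attained by its first argument at $v$. Summing over all $v \in V$ then yields $\rho'(G,\ell) = \tfrac12 \sum_{v} (\deg(v) - \max_c \deg_c(v)) = \rho(G,\ell)$.

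Given this identity, \Cref{thm:degree:fpt} — which establishes that \cec{} is \FPT{} with respect to $r - \rho'(G,\ell)$ — immediately implies that \cec{} restricted to instances in which every vertex has chromatic degree at most two is \FPT{} with respect to $r - \rho(G,\ell)$, as claimed.

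I do not expect a real obstacle here; the argument is a one-line consequence of the definitions together with the previous theorem. The only minor points to verify are the degenerate cases: an isolated vertex contributes $0$ to both $\rho$ and $\rho'$, and a vertex of chromatic degree exactly one has $\deg(v) - \max_c \deg_c(v) = 0 \le \tfrac12 \deg(v)$, so these cases are consistent with the inequality above and do not affect the equality $\rho'(G,\ell) = \rho(G,\ell)$.
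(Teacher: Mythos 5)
Your proof is correct and follows essentially the same route as the paper: both establish $\rho'(G,\ell)=\rho(G,\ell)$ by showing $\deg(v)-\max_c\deg_c(v)\le \tfrac12\deg(v)$ whenever $v$ sees at most two colors, and then invoke \cref{thm:degree:fpt}. Your pigeonhole phrasing is just a condensed version of the paper's $\min_c\deg_c(v)\le\tfrac12(\max_c\deg_c(v)+\min_c\deg_c(v))$ computation.
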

\begin{proof}
	Since for each vertex $v \in V$ that is incident to edges of at most two colors $c, c' \in C$ (possibly $\deg_c(v) = 0$ or $\deg_{c'}(v) = 0$), we have $\deg(v) = \deg_c(G) + \deg_{c'}(G)$, we have $\deg(v) - \max_{c \in C} \deg_{c}(v) = \min_{c \in C} \deg_c(v) \le \frac{1}{2}(\max_{c \in C} \deg_c(v) + \min_{c \in C} \deg_c(v)) = \deg(v)$.
	Thus, we have $\min(\deg(v) - \max_{c \in C} \deg_c(v), \allowbreak \frac{1}{2} \deg(v)) = \deg(v) - \max_{c \in C} \deg_c(v)$ for every vertex $v \in V$.
	It follows that $\rho(G, \ell) = \rho'(G, \ell)$.
	Thus, we obtain an FPT algorithm by \cref{thm:degree:fpt}.
\end{proof}

We complement our positive results by showing that when each vertex may be incident to edges of three (or more) different colors, the problem becomes para-\NP-hard with respect to~$r - \rho(G, \ell)$.

\begin{theorem}
	\label{thm:degree:hardness}
	\cec{} is \NP-hard even if $r-\rho(G,\ell) = 0$ and the chromatic degree of every vertex is at most three.
\end{theorem}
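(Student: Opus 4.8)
The plan is to reduce from a restricted form of \prob{Exact Cover by 3-Sets}: given a universe $\mathcal U$ (with $|\mathcal U|$ divisible by three, every element occurring in at most three sets) and a family $\mathcal F$ of three-element subsets of $\mathcal U$ no two of which share more than one element, decide whether some subfamily of $\mathcal F$ partitions $\mathcal U$. This variant is still \NP-hard; the occurrence bound is what caps the chromatic degree at three, and the ``no two sets share two elements'' condition is what keeps the constructed graph simple.

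Given such an instance I would build a \cec{} instance $(G=(V,E),\ell,k)$ with $V:=\mathcal U$ by adding, for each set $S=\{u,v,w\}\in\mathcal F$, a fresh color $c_S$ together with the triangle on $\{u,v,w\}$ all of whose edges are colored $c_S$. Since no two sets share two elements, $G$ is simple, and since each element occurs in at most three sets, every vertex is incident to at most three colors; inside its triangle each vertex has color-degree exactly two, so (assuming every element occurs in at least one set) $\deg(v)-\max_{c}\deg_c(v)=\deg(v)-2$ for all $v$, whence $\rho(G,\ell)=|E|-|\mathcal U|$ and $|E|=3|\mathcal F|$. Setting $k:=|\mathcal U|$ then yields $r=|E|-k=\rho(G,\ell)$, i.e.\ $r-\rho(G,\ell)=0$, and chromatic degree at most three, exactly as the statement demands.

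The heart of the proof is a structural description of the colorings achieving the lower bound $\rho$. For a vertex coloring $f$, write $B(f)$ for the number of unstable edges and $D(f):=\sum_{v}(\deg(v)-\deg_{f(v)}(v))$; then $B(f)\le D(f)\le 2B(f)$ (an unstable edge disagrees with one or both endpoints) and $D(f)\ge 2\rho(G,\ell)$ by the definition of $\rho$, so $B(f)=\rho(G,\ell)$ forces both $D(f)=2\rho(G,\ell)$ and $D(f)=2B(f)$. The former says $f(v)$ is always a color of maximum color-degree at $v$ (automatic here), the latter that every unstable edge disagrees with \emph{both} endpoints; propagating the latter along monochromatic paths shows that each monochromatic connected component is either colored entirely with its own color or contains no vertex of that color, and that every non-isolated vertex lies in exactly one component of the first type. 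In our instance the monochromatic components are precisely the triangles, so the ``all-in'' triangles of a $\rho$-attaining coloring form an exact cover of $\mathcal U$; conversely, coloring each $u\in\mathcal U$ by $c_S$ for its unique set $S$ in an exact cover leaves exactly $3\bigl(|\mathcal F|-|\mathcal U|/3\bigr)=\rho(G,\ell)$ edges unstable. Hence $(G,\ell,k)$ is a \yes-instance of \cec{} if and only if $\mathcal F$ admits an exact cover, which completes the reduction.

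I expect two main obstacles. First, securing an \NP-hard source problem under \emph{both} the bounded-occurrence and the linearity restriction; both are genuinely needed here, and if no direct reference is at hand I would instead reduce from \prob{3-Sat} using a triangle as a variable gadget wired through shared vertices into clause gadgets that are coverable iff the clause is satisfied, at the cost of more intricate counting to keep $r-\rho=0$. Second, executing the structural lemma rigorously: the inequalities are routine, but the propagation step, the global consistency of the ``all-in''/``all-out'' labelling, and the handling of isolated vertices and of elements occurring in only one or two sets are where the actual work lies.
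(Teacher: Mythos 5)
Your reduction is a genuinely different route from the paper's. The paper reduces from \textsc{Monotone One-in-Three SAT} with every variable occurring exactly three times, builds clause vertices $v_y$ plus six fresh vertices per variable, and wires them into monochromatic triangles; it then verifies $r=\rho$ directly by a degree count and argues about the forced structure of a size-$k$ stable set. You instead reduce from a restricted \textsc{Exact Cover by 3-Sets} (X3C) and put a monochromatic triangle on the universe elements of each set, and you package the combinatorial core as a clean general lemma about $\rho$-attaining colorings: writing $D(f)=\sum_v(\deg(v)-\deg_{f(v)}(v))$, the chain $2\rho\le D(f)\le 2B(f)$ shows that $B(f)=\rho$ forces every vertex to pick a majority incident color and every unstable edge to disagree with both endpoints, whence each monochromatic component is either fully taken or fully missed and the fully-taken components partition the non-isolated vertices. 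That lemma is correct (I checked the propagation step and the ``exactly one all-in triangle per vertex'' claim), it is reusable, and in your instance it makes the equivalence with exact cover almost immediate. The arithmetic also checks out: with every element in at least one set, $\rho=3\abs{\mathcal F}-\abs{\mathcal U}=r$ for $k=\abs{\mathcal U}$, and linearity of the set system guarantees that the forward direction leaves exactly $\rho$ unstable edges because a non-chosen triangle cannot have two of its vertices absorbed by the same chosen triple.

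The genuine gap is the source problem, and you correctly flag it yourself. You need X3C to remain \NP-hard under \emph{both} restrictions at once: at most three occurrences per element (to cap chromatic degree at three) \emph{and} linearity, i.e.\ no two triples sharing two elements (to avoid parallel edges, since the paper's model has $E\subseteq 2^V$). Bounded-occurrence X3C is standard (Gonzalez), but the joint restriction is not a textbook citation, and it is not obviously derivable from the 3-occurrence version by a local preprocessing: merging or subdividing to kill a repeated pair changes degrees and hence $\rho$, and the ``isolated vertex'' and ``element in only one set'' cases you mention are minor by comparison. Without a concrete reference or a self-contained hardness proof for that restricted X3C, the theorem is not established. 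This is exactly the difficulty the paper's gadget sidesteps: by introducing six fresh vertices $u_x^i,w_x^i$ per variable rather than reusing universe elements, no two of its triangles can ever share two vertices, so the constructed graph is automatically simple and the source problem can be the off-the-shelf \textsc{Monotone One-in-Three SAT} with three occurrences per variable \citep{Moore2001}. Your fallback (reducing from \textsc{3-SAT} with bespoke gadgets) is plausible but, as you note, would require redoing the $r-\rho=0$ bookkeeping from scratch; if you pursue your X3C route, the cleanest fix is probably to mimic the paper's trick and introduce fresh per-set vertices so that linearity is enforced by construction rather than assumed of the input.
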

\begin{proof}
	We reduce from the NP-hard \textsc{Monotone One-in-Three SAT} problem defined as follows:
	The input consists of a set of variables $X$ and a set of clauses $Y$ consisting of three positive literals.
	The task is to find a truth assignment such that exactly one literal of every clause is satisfied.
	It is known that this problem is NP-hard even if every variable appears exactly three times \cite{Moore2001}.
	
	Let $(X, Y)$ be an instance of \textsc{Monotone One-in-Three SAT} on $n$ variables and $n$ clauses.
	We construct an instance of \cec{} as follows.
	Our reduction is inspired by the trick of \citet{Wu2011}.
	For every clause $y \in Y$, we introduce a vertex $v_y$.
	For every variable $x$ appearing in three clauses $y_1, y_2, y_3$, we introduce six vertices $u_x^1, u_x^2, u_x^3, w_x^1, w_x^2, w_x^3$ and add edges so that $\{ u_x^1, u_x^2, u_x^3 \}$, $\{ w_x^1, w_x^2, w_x^3 \}$, and $\{ y_i, u_x^i, w_x^i \}$ for each $i \in [3]$ form triangles in the resulting graph.
	(Essentially, the variable $x$ is true if and only if the three triangles $\{ y_i, u_x^i, w_x^i \}$ are part of a stable set.)
	For every triangle, we color its edges in a color unique to this triangle.
	This concludes the construction of $G = (V, E)$ and $\ell$.
	Note that $G$ has $6n$ vertices and $15n$ edges.
	Let $k := 7n$ (i.e., $r = 8n$).
	Note that every vertex $v_y$ has chromatic degree three and $u_x^i$ and $w_x^i$ have chromatic degree two.
	Since $\deg(y_v) - \max_{c \in C} \deg_c(y_v) = 6 - 2 = 4$ and $\deg(u_x^i) - \max_{c \in C} \deg_c(u_x^i) = \deg(w_x^i) - \max_{c \in C} \deg_c(w_x^i) = 4 - 2 = 2$, we have $\rho(G, \ell) = \frac{1}{2} (\sum_{v \in V} \deg(v) - \max_{c \in C} \deg_c(v)) = \frac{1}{2}(4n + 2 \cdot 6n) = 8n = r$.
	It remains to show the equivalence between $(X, Y)$ and $(G, \ell, k)$.

	\RD{}
	Suppose that $(X, Y)$ admits a truth assignment satisfying exactly one literal of each clause.
	We will construct a stable set $F$ of size $k = 7n$.
	Suppose that a variable $x$ appears in clauses $y_1$, $y_2$, and $y_3$.
	If $x$ is true in the assignment, we add the edges of three triangles $\{ y_i, u_x^i, w_x^i \}$ to $F$.
	Otherwise, we add the edges of triangles $\{ u_x^1, u_x^2, u_x^3 \}$ and $\{ w_x^1, w_x^2, w_x^3 \}$.
	Since every clause is satisfied exactly once, $F$ is stable.
	A simple counting argument shows that $\frac{1}{3} n$ variables are true. Thus, $F$ has size $9 \cdot \frac{1}{3} n + 6 \cdot \frac{2}{3} n = 7n = k$.

	\LD{}
	Suppose that there is a stable set $F \subseteq E$ of size $k = 7n$.
	Since every vertex is incident to at most two edges of the same color, any stable set contains at most $\frac{1}{2} \cdot 2|V| = 7n$ edges (note that $\frac{1}{2}$ is necessary to make up for the double counting).
	It follows that every vertex (in particular, each of $u_x^1, u_x^2, u_x^3, w_x^1, w_x^2, w_x^3$) is incident to exactly two edges of $F$.
	This only holds true if for each $x \in X$, one of the following holds:
	all edges of three triangles $\{ y_i, u_x^i, w_x^i \}$ belong to $F$ or all edges of two triangles $\{ u_x^1, u_x^2, u_x^3 \}$ and $\{ w_x^1, w_x^2, w_x^3 \}$ belong to $F$.
	The truth assignment in which $x$ is true if and only if the edges of three triangles $\{ y_i, u_x^i, w_x^i \}$ are part of $F$ satisfies exactly one literal of each clause.
\end{proof}

\paragraph{Hypergraphs.}
We can lift the algorithm in \cref{thm:degree:fpt} to hypergraphs of order $d$, albeit with a smaller lower bound
\begin{align*}
	\rho_H(G, \ell) \coloneqq \frac{1}{d} \sum_{v \in V} \min \big(\deg(v) - \max_{c \in C} \deg_c(v), \textstyle\frac{1}{2}\deg(v) \big).
\end{align*}
As in \eqref{eq:solution-feasible} we then have at most $d$ summands (one for each endpoint of hyperedge~$e$), the solution remains feasible.

\begin{corollary}
	\chc{} is FPT with respect to~$r - \rho_H(G, \ell)$.
\end{corollary}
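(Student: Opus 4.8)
The plan is to adapt the proof of \Cref{thm:degree:fpt} essentially verbatim, the only change being a scaling factor $1/d$ in place of $1/2$ to account for hyperedges having up to $d$ endpoints. First I would check that the conflict-graph machinery still applies. Define $\partial(G,\ell)=(E,\partial E)$ with $\partial E=\{\{e,e'\}\mid e\cap e'\neq\emptyset,\ \ell(e)\neq\ell(e')\}$ exactly as in the graph case; then a set $F\subseteq E$ is stable if and only if it is an independent set of $\partial(G,\ell)$. Indeed, if $F$ were independent in $\partial(G,\ell)$ but some connected component of $(V,F)$ were not monochromatic, a path of edges in that component between two differently colored edges would, step by step, consist of consecutive intersecting edges of equal color, forcing its two endpoint edges to share a color --- a contradiction. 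Hence the minimum number of unstable edges equals the vertex cover number of $\partial(G,\ell)$, and deciding the \chc{} instance amounts to asking whether $\partial(G,\ell)$ has a vertex cover of size at most $r$.

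Next I would show $\alpha(\partial(G,\ell))\ge\rho_H(G,\ell)$, where $\alpha$ denotes the optimum of the \textsc{Vertex Cover} LP relaxation, again by exhibiting a feasible dual (fractional matching) solution of that value. For each vertex $v\in V$, the subgraph $\partial(G,\ell)[\delta(v)]$ induced by the edges incident to $v$ is complete multipartite with one part per color incident to $v$, its largest part having $\max_{c\in C}\deg_c(v)$ vertices and its vertex set having size $\deg(v)$; so \Cref{lemma:bc-lp} yields a feasible dual solution $z^v$ on this subgraph of value at least $\min(\deg(v)-\max_{c\in C}\deg_c(v),\ \tfrac12\deg(v))$ (the cases $\deg(v)\le 1$ or $v$ monochromatic being trivial, with value $0$). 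Extending every $z^v$ by zero to all of $\partial E$, I would set $y_{\partial e}\coloneqq\tfrac1d\sum_{v\in V}z^v_{\partial e}$. Its total value is $\tfrac1d\sum_{v\in V}\mathrm{val}(z^v)\ge\rho_H(G,\ell)$ by definition of $\rho_H$. For feasibility, fix $e\in E$; a conflict-edge incident to $e$ lies in $\partial(G,\ell)[\delta(v)]$ only when $v\in e$, so
\[
	\sum_{\partial e\in\partial E,\ e\in\partial e} y_{\partial e}\;=\;\frac1d\sum_{v\in e}\ \sum_{\partial e\in\partial E,\ e\in\partial e} z^v_{\partial e}\;\le\;\frac1d\sum_{v\in e}1\;\le\;1,
\]
since $|e|\le d$ and each $z^v$ is feasible; the same count shows $0\le y_{\partial e}\le 1$, because at most $d$ vertices lie in $e\cap e'$ for $\partial e=\{e,e'\}$. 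This is precisely the ``$d$ summands instead of two'' observation already flagged after \eqref{eq:solution-feasible}. Strong duality then gives $\alpha(\partial(G,\ell))\ge\rho_H(G,\ell)$.

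Finally I would run the algorithm for \textsc{Vertex Cover} parameterized by the requested solution size minus the LP optimum \cite{Lokshtanov2014} on the instance $(\partial(G,\ell),r)$ --- the conflict graph is computable in polynomial time --- obtaining an FPT running time in $r-\alpha(\partial(G,\ell))$ and hence in $r-\rho_H(G,\ell)$. I do not expect a real obstacle: the entire argument is a reuse of \Cref{thm:degree:fpt}, and the only point needing care is the bookkeeping in the displayed inequality. In particular, one should note that \emph{summing} the per-vertex dual solutions $z^v$ (rather than assigning each conflict-edge to a single shared endpoint, as one can in the graph case) is what simultaneously delivers the full value $\rho_H(G,\ell)$ and keeps every $y_{\partial e}$ within $[0,1]$ after dividing by $d$.
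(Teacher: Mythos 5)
Your proposal is correct and follows the paper's intended approach: adapt the argument of Theorem~3.7 (thm:degree:fpt) by replacing the scaling factor $\tfrac12$ with $\tfrac1d$ in the construction of the feasible dual solution, and observe that the feasibility check now has at most $d$ summands. One small but genuine improvement over the paper's terse remark: you explicitly define $y_{\partial e}=\tfrac1d\sum_{v\in V}z^v_{\partial e}$, which cleanly resolves the ambiguity (present only in hypergraphs) that a conflict edge $\{e,e'\}$ with $\abs{e\cap e'}\ge 2$ lies in several induced subgraphs $\partial(G,\ell)[\delta(v)]$; the paper implicitly assigns each conflict edge to a single shared endpoint, which is fine for graphs but not quite well-posed for hypergraphs, and your summed version both recovers the full value $\rho_H(G,\ell)$ and keeps $y_{\partial e}\in[0,1]$, exactly as you point out.
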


\subsection{Above matching-based lower bounds}
We will now consider matchings as lower bounds on $k$.
A \emph{matching} in a graph $G=(V,E)$ is a set of edges $M\subseteq E$ such that $e\cap e' \neq \emptyset$ for any two distinct $e,e'\in M$.
Let $M(G)$ denote the size of a maximum matching in a graph $G$.
As we noted in \cref{rr:matching}, any matching is trivially stable, which implies the following:

\begin{observation}
	If $M(G) \geq k$, then $(G=(V,E),\ell,k)$ is a \yes-instance.
\end{observation}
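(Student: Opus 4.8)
The plan is to exhibit a stable edge set of size exactly~$k$ directly from a maximum matching. First I would fix a maximum matching~$M$ of~$G$; by assumption $\abs{M} = M(G) \ge k$, so I can select an arbitrary subset $F \subseteq M$ with $\abs{F} = k$. It then remains only to verify that~$F$ is stable. By the definition of stability, I need $\ell(e) = \ell(e')$ for all $e, e' \in F$ with $e \cap e' \ne \emptyset$; but since $F$ is a matching, no two distinct edges of~$F$ share an endpoint, so this condition is satisfied vacuously. Equivalently, every edge of~$F$ constitutes its own connected component in the graph~$(V, F)$, hence the requirement that all edges sharing a connected component be monochromatic is trivial. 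Thus $F$ is a stable edge set of size~$k$, and $(G, \ell, k)$ is a \yes-instance.

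There is no genuine obstacle here: the statement is an immediate reformulation of the soundness of \Cref{rr:matching}, and the only fact used is that a matching trivially meets the stability condition. The observation is recorded separately because the maximum matching size~$M(G)$ will play the role of the above-guarantee lower bound on~$k$ throughout the remainder of this subsection, so it is convenient to have it isolated as a named statement to refer back to.
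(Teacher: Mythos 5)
Your proof is correct and is essentially the same argument the paper uses: the paper notes immediately before the observation that any matching is trivially stable (citing \Cref{rr:matching}), and your proof simply spells out why — no two edges of a matching share an endpoint, so the stability condition is vacuous, and any $k$ edges of a maximum matching of size $\ge k$ form the desired stable set.
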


A matching $M$ in $G=(V,E)$ is \emph{induced} if there are no $e,e'\in M$ and $v\in e, v'\in e'$ such that $v$ and $v'$ are adjacent.
Let $I(G)$ denote the size of a maximum induced matching in a graph $G$.
Of course, $I(G) \leq M(G)$, implying that if $I(G) \geq k$, then $(G=(V,E),\ell,k)$ is a \yes-instance.
In the following we will consider the parameters $k-M(G)$ and $k-I(G)$.
Finding a maximum induced matching is \NP-hard~\cite{Stockmeyer1982}, so we assume that such a matching is given as part of the input, when we deal with the latter parameter.

Our main result concerning parameterizations of \cec{} above matching-based lower bounds is that the problem is \FPT{} with respect to $(k - I(G)) + |C|$ (\cref{thm:FPT-above-IM}).
We also show that the problem is \XP{} with respect to $k-I(G)$, that is, \cecS{} can be solved in polynomial time for any constant value of $k-I(G)$ (\cref{thm:XP-above-IM}).
We will also show that these are, in a sense, the best one can do concerning matching-based lower bounds.
We will prove that \cec{} is \NP-hard for $k-M(G) = 1$ and $|C| = 19$ (\cref{thm:paraNPh-above-matching}), implying that there is not even \XP{} algorithm for the parameter $(k - M(G)) + |C|$, unless $\text{P} = \NP$.
Moreover, we prove that \cecS{} is \W{1}-hard with respect to $k-I(G)$ (\cref{thm:w1h-above-IM}), meaning that there is no \FPT{} algorithm for $k-I(G)$, unless $\FPT = \W{1}$.

In this section, we write $\deg_F(e) \coloneqq \deg_F(u) + \deg_F(v)$ for an edge $e = \{ u , v \}$.
We will use the following observation throughout this section.
\begin{observation}
	\label{obs:matching-counting}
	Let $G = (V, E)$ be a graph with a matching $M$.
	For an edge subset $F \subseteq E$, it holds that $2|F| = \sum_{v \in V} \deg_{F}(v) = \sum_{v \in X} \deg_F(v) +  \sum_{e \in M} \deg_F(e)$, where $X := V \setminus \bigcup_{e \in M} e$ is the set of unmatched vertices. 
\end{observation}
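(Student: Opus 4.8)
The plan is to establish the two claimed equalities separately, each by an elementary counting argument.

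First I would prove $2|F| = \sum_{v \in V}\deg_F(v)$ by double counting the incidence pairs $\{(v,e) : v \in V,\ e \in F,\ v \in e\}$. Summing over the edges, each $e \in F$ is an edge of a \emph{graph} and hence has exactly two endpoints, so the total is $2|F|$; summing over the vertices, each $v \in V$ contributes exactly $|\delta(v) \cap F| = \deg_F(v)$ by definition. This is just the handshaking identity applied to the spanning subgraph $(V,F)$.

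For the second equality I would partition the vertex set. Writing $V_M \coloneqq \bigcup_{e \in M} e$ for the set of matched vertices, we have $V = X \uplus V_M$, hence $\sum_{v \in V}\deg_F(v) = \sum_{v \in X}\deg_F(v) + \sum_{v \in V_M}\deg_F(v)$. The one point that genuinely uses that $M$ is a matching (rather than an arbitrary edge subset) is that the sets $e$ for $e \in M$ are pairwise disjoint, so $M$ partitions $V_M$ into two-element blocks. Therefore $\sum_{v \in V_M}\deg_F(v) = \sum_{e = \{u,v\} \in M}\big(\deg_F(u) + \deg_F(v)\big) = \sum_{e \in M}\deg_F(e)$, where the last step is the definition $\deg_F(e) \coloneqq \deg_F(u) + \deg_F(v)$ introduced just above the statement. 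Combining the two displays yields the claim.

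There is no real obstacle here: the statement is a bookkeeping reorganization of the handshaking identity, recorded for repeated use in this section. The only thing worth spelling out explicitly is that the pairwise disjointness of the edges of $M$ is precisely what licenses regrouping the sum over matched vertices as a sum over the edges of $M$ with no vertex counted twice or omitted.
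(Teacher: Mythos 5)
Your proof is correct and matches the argument the paper implicitly relies on; the paper states this as an \emph{observation} with no written proof, and the intended justification is exactly what you give — the handshaking identity for the spanning subgraph $(V,F)$ followed by splitting $\sum_{v\in V}\deg_F(v)$ over the partition of $V$ into unmatched vertices $X$ and the pairwise-disjoint pairs $e\in M$, which is the only place the matching hypothesis is used.
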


We will start by presenting the \FPT{} algorithm for $(k-I(G)) + |C|$.

\begin{theorem}
	\label{thm:FPT-above-IM}
	Given a maximum induced matching, \cec{} can be solved in $\bigO^*(|C|^{\bigO(k - I(G))})$ time.
\end{theorem}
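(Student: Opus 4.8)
The plan is to show that every yes-instance admits a solution of a very rigid shape, after which the algorithm only has to guess $\bigO(k-I(G))$ colors and finishes each guess in polynomial time. Write $\mu\coloneqq k-I(G)$, let $I=\{e_1,\dots,e_p\}$ (with $p=I(G)$) be the given maximum induced matching, $e_i=\{u_i,v_i\}$, and let $Z\coloneqq V\setminus\bigcup_i e_i$. Two observations set the stage. First, $I$ is itself stable, since each $e_i$ is its own connected component in $(V,I)$; hence $\mu=0$ is a trivial yes-instance and the difficulty lies entirely in $\mu\ge 1$. Second, because $I$ is \emph{induced}, no $u_i$ or $v_i$ is adjacent to any $u_j$ or $v_j$ with $j\ne i$; thus every edge of $G$ incident to $V(I)$ is either an edge of $I$ or has its other endpoint in $Z$.

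Call a matching edge $e_i$ \emph{lone} in a stable set $F$ if $e_i\in F$ and the connected component of $e_i$ in $(V,F)$ is exactly $\{u_i,v_i\}$. The core of the argument is the normalization lemma: \emph{if $(G,\ell,k)$ is a yes-instance, there is a stable set $F$ with $|F|\ge k$ in which all but $\bigO(\mu)$ matching edges are lone.} To prove it I would pick a stable $F$ with $|F|\ge k$ that is extremal --- first maximizing the number of lone matching edges, then minimizing $|F|$ --- and argue that none of the following size-non-decreasing moves applies to it: (i) if both endpoints of some $e_i$ are untouched by $F$, add $e_i$; (ii) if a connected component of $(V,F)$ other than the lone edges can be deleted and replaced by lone copies of the matching edges it freed without decreasing $|F|$, do so. For such an extremal $F$, the edges of $F$ outside the lone matching edges --- call them the \emph{active part} $F^{\mathrm{act}}$ --- span only vertices of $Z$ and endpoints of non-lone matching edges (by the second observation). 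Combining this with the double counting of \cref{obs:matching-counting} applied to the matching $I$, i.e.\ $2|F|=\sum_{z\in Z}\deg_F(z)+\sum_{i=1}^p\deg_F(e_i)$ in which every lone matching edge contributes exactly $2$, together with the fact that an active component contributing no net surplus would have been dissolved by move (ii), forces the number of non-lone matching edges to be $\bigO(\mu)$. Hence $F^{\mathrm{act}}$ has $\bigO(\mu)$ edges, spans $\bigO(\mu)$ vertices, and uses a set $C^\star\subseteq C$ of $\bigO(\mu)$ distinct colors.

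Given the lemma, the algorithm enumerates all (at most $|C|^{\bigO(\mu)}$) subsets $C^\star\subseteq C$ of size $\bigO(\mu)$, and for each tests whether some stable set whose active part uses only colors of $C^\star$ has size at least $k$. Because $I$ is induced this test decouples: once a coloring of a vertex set $S$ by colors of $C^\star$ is fixed, every matching edge disjoint from $S$ contributes exactly $1$ by itself, so we only need to know whether $\big(\#\text{ stable edges of color in }C^\star\text{ inside }S\big)+|\{i:\{u_i,v_i\}\cap S=\emptyset\}|\ge k$ for some $S$. By the lemma it suffices to search for active parts with only $\bigO(\mu)$ edges, so this is decidable in polynomial time via a variant of the \textsc{Weighted Exact Cover} reduction of \cref{thm:fpt-stable-edges}: place the monochromatic connected pieces of colors in $C^\star$ into the set family, weight a piece by (number of its edges) minus (number of matched vertices it blocks), guess the number $s=\bigO(\mu)$ of covered elements, and ask for total weight $\ge\mu$; the $\bigO^*(2.851^s)$ bound becomes $2^{\bigO(\mu)}\cdot n^{\bigO(1)}$, so the overall running time is $|C|^{\bigO(\mu)}\cdot n^{\bigO(1)}$ (the degenerate case $|C|\le 1$ is immediate). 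The instance is a yes-instance iff some $C^\star$ passes the test; soundness is clear since a passing test exhibits a genuine stable set of size $\ge k$, and completeness is the normalization lemma.

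I expect the main obstacle to be the normalization lemma, and within it move (ii): one must verify precisely that such a dissolution never decreases $|F|$, which requires tracking how a component of $(V,F)$ falls apart when the $F$-edges at a matched vertex are deleted and how the freed matching edges reattach, and one must ensure that the moves cannot be applied indefinitely (hence the extremal choice of $F$). A secondary point needing care is the residual subroutine: one has to confirm that fixing $C^\star$ genuinely reduces the optimization to a bounded-ground-set \textsc{Weighted Exact Cover} instance, exploiting that only $\bigO(\mu)$-edge active parts matter, rather than leaving a disguised hard coloring problem.
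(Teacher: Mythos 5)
Your normalization lemma is essentially correct and matches the structural bound the paper proves, though the route through your ``move (ii)'' is harder to justify than necessary. The cleaner argument: pick $F$ of size $k$ maximizing $|F\cap M|$; then each $e\in M\setminus F$ must meet at least two edges of $F\setminus M$ (else swap one out and $e$ in, increasing $|F\cap M|$), which gives $|F\setminus M|\le 2\mu$; and since $M$ is induced, each edge of $F\setminus M$ touches at most one matching edge, so at most $2\mu$ matching edges are non-lone. Hence $F^{\mathrm{act}}$ has $\bigO(\mu)$ edges, $\bigO(\mu)$ vertices, and $\bigO(\mu)$ colors, exactly as you want.

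The genuine gap is the residual subroutine — the very point you flagged at the end, and it is fatal rather than secondary. Fixing the color set $C^\star$ of size $\bigO(\mu)$ does \emph{not} bound the ground set of the remaining optimization: the $\bigO(\mu)$ monochromatic pieces of $F^{\mathrm{act}}$ can sit anywhere inside the (potentially $\Theta(k)$-edge) monochromatic components of $G^c$ for $c\in C^\star$, and they are generally proper connected subsets of those components, not whole components. Adapting the \textsc{Weighted Exact Cover} reduction of \cref{thm:fpt-stable-edges} therefore requires enumerating subsets of each monochromatic component; even restricted to subsets of size $\bigO(\mu)$ this gives $n^{\bigO(\mu)}$ candidate sets and thus an $n^{\bigO(\mu)}$-time subroutine — i.e.\ you have re-derived the XP enumeration of \cref{thm:XP-above-IM}, not an FPT one, and the total becomes $|C|^{\bigO(\mu)}\cdot n^{\bigO(\mu)}$ rather than $\bigO^*(|C|^{\bigO(\mu)})$. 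Guessing colors alone does not locate the $\bigO(\mu)$ active vertices. The paper instead applies color coding directly over $C$: each vertex gets a uniformly random color from $C$; with probability at least $|C|^{-4\mu}$ every endpoint of $F\setminus M$ receives its intended color (call this a good coloring); given a good coloring, each matching edge is then resolved independently in polynomial time (recolor its two endpoints to $\ell(e)$ unless at least two incident non-matching edges are already stable), which is sound precisely because $M$ is induced. That randomized localization step is what your proposal is missing.
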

\begin{proof}
	Suppose that we are a given \yes-instance $(G = (V, E), \ell, k)$ of \cec{} along with an induced matching $M$.
	Among possibly many solutions, our goal is to find a stable set $F$ of size $k$ such that $|F \setminus M| \le |F' \setminus M|$ for any stable set of size $k$ (in other words, $F$ contains as many edges of $M$ as possible).
	This restriction on $F$ will play a central role in the algorithm.
	We will first show that $|F \setminus M| \le 2(k - |M|)$.

	We first show that each edge~$e \in M \setminus F$ intersects at least two edges in~$F \setminus M$.
	Suppose not, that is, there is an edge~$e \in M \setminus F$ that intersects at most one edge in~$F$.
	Let~$f$ be that edge, or let~$f \in F$ be arbitrary if~$e$ intersects no edge in~$F$.
	Then~$F' \coloneqq (F \setminus \{f\}) \cup \{e\}$ is stable and~$|F' \setminus M| > |F \setminus M|$, a contradiction.

	So assume that each edge~$e \in M \setminus F$ intersects at least two edges in~$F \setminus M$.
	As~$M$ is an induced matching, every edge in~$F \setminus M$ intersects at most one edge in~$M$.
	Thus,
	$|F \setminus M| \ge 2|M \setminus F|$, which yields
	\begin{align*}
		|F \setminus M| 
		&\le |F \setminus M| + (|F \setminus M| - 2|M \setminus F|) \\ 
		&= 2(|F| - |F \cap M| - |M \setminus F|) = 2(k - |M|).
	\end{align*}

	To obtain an FPT algorithm, we use the color coding technique.
	We say that a vertex coloring $f\colon V \to \colors$ is \emph{good} for $F$ if $F\setminus M$ is stable under $f$.
	We color the vertices of $G$ independently and uniformly at random, that is we assign color $c\in \colors$ to each vertex $V$ with probability~$\abs{C}^{-1}$.
	Then, the vertex $v\in \bigcup_{e\in F\setminus M} e$ receives the color $f_F(v)$ with probability~$\abs{C}^{-1}$.
	This implies that the probability that $f$ is good is at least $\abs{C}^{-\abs{\bigcup_{e\in F\setminus M} e}} = \abs{C}^{-2\abs{F\setminus M}} \geq \abs{C}^{-4(k - \abs{M})}$.
	Hence, we obtain a good coloring of $G$ with constant probability by repeating the coloring $\abs{C}^{\bigO(k-\abs{M})}$ times.
	Given a good coloring $f$ of $G$, we obtain a coloring $f'$ with at least $k$ stable edges in polynomial time as follows.
	For every $e=\{u,v\} \in M$, we check whether at least two edges in $\delta(u) \cup \delta(v)$ are stable under $f$.
	If so, let $f'(u) \coloneqq f(u)$ and $f'(v) \coloneqq f(v)$.
	If not, set $f'(u) \coloneqq f'(v) \coloneqq \ell(e)$.
	
	We must show that, if $G$ contains a stable edge set of size at least $k$, then algorithm finds a coloring $f'$ with at least $k$ stable edges with constant probability.
	Let $F$ be a stable edge set of size at least $k$ that maximizes the number of edges in $M$.
	We will show that, if $f$ is good for $F$, then $F\setminus M$ is stable under $f'$ and $\abs{\{e \in E\setminus (M \cup F) \mid e \text{ is stable under } f'\}} \geq \abs{\{e \in M \cap F \mid e \text{ is unstable under } f'\}}$.
	This implies that at least $\abs{F} = k$ edges are stable under $f'$.
	First, every $e\in F\setminus M$ is stable under $f$.
	Moreover, by the observation above, if $e$ intersects $e' \in M \setminus F$, then there must be a second edge $e'' \in F \setminus F$.
	The edge $e''$ must also be stable under $f$.
	It follows that $f$ and $f'$ agree on the endpoints of $e$ and, hence, $e$ is stable under $f'$.
	Now consider $e = \{u,v\} \in M \cap F$.
	If $e$ is unstable under $f'$, then there must be at least two edges $e',e'' \in \delta(u) \cup \delta(v)$ that are stable under $f$.
	Because $M$ is an induced matching, it follows that $e$ is the only edge in $M$ that intersects $e'$ or $e''$.
	Hence, $f$ and $f'$ agree on the endpoints of $e'$ and $e''$, implying that these two edges are stable under $f'$.
	Hence, for each $e \in M \cap F$ that is unstable under $f'$, there are at least two edges in $E\setminus (M \cup F)$ that are stable under $f'$.
\end{proof}

The bound $|F \setminus M| \le 2(k - |M|)$ in the proof of \cref{thm:FPT-above-IM} actually implies that \cec{} is XP with respect to the parameter $k - |M|$:

\begin{corollary}
	\label{thm:XP-above-IM}
	Given a maximum induced matching $M$, \cec{} is can be solved in $n^{\bigO(k - I(G))}$ time.
\end{corollary}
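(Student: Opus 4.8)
The plan is to convert the inequality $|F\setminus M|\le 2(k-|M|)$ from the proof of \cref{thm:FPT-above-IM} into a brute-force search. Write $\iota\coloneqq I(G)=|M|$. First I would dispose of the degenerate cases: if $\iota\ge k$ the instance is a trivial \yes-instance (the matching $M$ is already stable), and if $k>|E|$ it is a trivial \no-instance, so from now on $k-\iota\ge 1$ and $k\le|E|<n^2$. In a \yes-instance there is a stable set of size at least $k$, and a size-$k$ subset of it is still stable; among all stable sets $F$ with $|F|=k$ I would fix one minimizing $|F\setminus M|$, so that $|F\setminus M|\le 2(k-\iota)$ exactly as derived in \cref{thm:FPT-above-IM}. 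The algorithm then guesses the non-matching part $F_0\coloneqq F\setminus M$ by iterating over all $F_0\subseteq E$ with $|F_0|\le 2(k-\iota)$; since $|E|<n^2$ there are only $n^{\bigO(k-\iota)}$ such candidates.

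For each candidate $F_0$ I would discard it unless $F_0$ is itself stable, and otherwise compute the maximum number of edges of $M$ that can be added to $F_0$ while keeping the union stable. Here I would split $M$ into \emph{free} edges (both endpoints isolated in $(V,F_0)$) and \emph{touching} edges (the rest). The key observations: (i) a connected component of $(V,F_0)$ with $j$ edges has at most $j+1$ vertices, so the non-trivial components span at most $2|F_0|\le 4(k-\iota)$ vertices, and since $M$ is a matching there are at most $4(k-\iota)$ touching edges; (ii) a free edge, when added, forms its own two-vertex monochromatic component that (again because $M$ is a matching) shares no vertex with any other edge of $F_0\cup M$, so adding all free edges is always feasible and interacts with nothing. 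Hence I would add every free edge, enumerate all $2^{\bigO(k-\iota)}$ subsets of the touching edges, test stability of $F_0$ together with the free edges and the chosen touching subset in polynomial time (via union--find), and keep the largest feasible choice. The instance is declared a \yes-instance iff for some $F_0$ this yields a stable set of size at least $k$.

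Correctness of the acceptance test is immediate in one direction (everything the algorithm outputs is an explicitly exhibited stable set of size $\ge k$); in the other direction, for the optimal $F$ chosen above, the guess $F_0=F\setminus M$ admits the touching edges of $M\cap F$ as a feasible choice, so the algorithm's best total for that $F_0$ is at least $|F|=k$. The running time is $n^{\bigO(k-\iota)}$ candidates times $2^{\bigO(k-\iota)}\le n^{\bigO(k-\iota)}$ stability checks, each polynomial, giving $n^{\bigO(k-I(G))}$ overall; the algorithm is deterministic.

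I do not expect a genuine obstacle, since the combinatorial work was already done in \cref{thm:FPT-above-IM}. The only step requiring a little care is the free/touching dichotomy for the edges of $M$ — in particular verifying that free edges are always safely addable and that there are only $\bigO(k-I(G))$ touching edges — because this is precisely what lets the extension step brute-force a bounded number of matching edges instead of all of them.
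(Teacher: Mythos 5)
Your proof is correct and follows the paper's approach: both fix a size-$k$ stable set $F$ minimizing $|F\setminus M|$, use the bound $|F\setminus M|\le 2(k-|M|)$ from the FPT argument to brute-force $F\setminus M$ over $n^{\bigO(k-I(G))}$ candidates, and then extend the guess by matching edges. The one place you diverge is in the extension step: the paper simply asserts it is polynomial-time, whereas you spell it out via a free/touching split and then brute-force over $2^{\bigO(k-I(G))}$ subsets of the touching edges. This is valid and stays within the claimed time budget, but it is more than is needed: once $F_0$ is fixed, whether a matching edge $m\in M$ can be added depends only on whether every $F_0$-edge sharing a vertex with $m$ has color $\ell(m)$; since $M$ is a matching (hence different $m$'s are vertex-disjoint and, being induced, never joined through each other), these decisions are mutually independent and a per-edge greedy check suffices, giving the polynomial-time extension the paper invokes. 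So your free/touching dichotomy, which you flag as the delicate point, is sound but not where the real economy lies; the cleaner observation is the independence of the per-edge addability tests.
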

\begin{proof}
	Let $F$ be a stable set as defined in the proof of \cref{thm:FPT-above-IM}.
	Since $|F \setminus M| \le 2(k - |M|)$, we can guess the set~$F' \coloneqq F \setminus M$ in $n^{\bigO(k-|M|)}$ time.
	Then what remains is to decide whether~$F'$ can be extended to a stable set~$F$ with~$k - |F'|$ edges of the matching~$M$.
	As this can be done in polynomial time, the theorem follows.
\end{proof}

Observe that the above definition of induced matchings carries over to hypergraphs.
The bound on~$|F \setminus M|$ also holds on hypergraphs of order $d$, and the two results above carry over.
As~$|\bigcup_{e \in F\setminus M}e| \le d |F \setminus M| \le 2d(k - |M|)$ in hypergraphs, the probability that the coloring~$f$ is good is~$|C|^{-2d(k-|M|)}$ and we have to adjust the number of repetitions accordingly.
\begin{corollary}
	Given an induced matching~$M$ of size~$I(G)$, \chc{} can be solved in~$\bigO^*(|C|^{\bigO(d(k-I(G)))})$ time and in~$n^{\bigO(k-I(G))}$ time.
\end{corollary}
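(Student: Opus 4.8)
The plan is to transport the proofs of \cref{thm:FPT-above-IM} and \cref{thm:XP-above-IM} to hypergraphs essentially verbatim, the only change being that a hyperedge may now cover up to $d$ vertices. First I would fix the generalization of the notion: a matching $M$ in a hypergraph $G=(V,E)$ (a family of pairwise disjoint edges) is \emph{induced} if no edge $g \in E$ meets two distinct edges of $M$; equivalently, no vertex of one matching edge is adjacent to a vertex of another. The single property actually used below is the one that already drove the graph proof: every edge of $G$ intersects at most one edge of $M$. Since $M$ has size $I(G)$, bounds phrased in terms of $k-|M|$ are exactly bounds in terms of $k-I(G)$.

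Next I would re-establish the combinatorial bound $|F\setminus M|\le 2(k-|M|)$ for a stable set $F$ with $|F|=k$ chosen to maximize $|F\cap M|$. The swapping argument of \cref{thm:FPT-above-IM} is unchanged: if some $e\in M\setminus F$ met at most one edge of $F$, then replacing that edge (or an arbitrary edge of $F$, if $e$ meets none) by $e$ would keep $F$ stable---because $e$ is disjoint from all other edges of $M$ and, after the removal, meets no edge of $F$---while strictly increasing $|F\cap M|$. Hence each $e\in M\setminus F$ meets at least two edges of $F\setminus M$, and since $M$ is induced each edge of $F\setminus M$ meets at most one edge of $M$; so $|F\setminus M|\ge 2|M\setminus F|$, and the same arithmetic as in \cref{thm:FPT-above-IM} gives $|F\setminus M|\le 2(k-|M|)$.

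For the FPT result I would run the color-coding scheme of \cref{thm:FPT-above-IM}. The vertex set $U\coloneqq\bigcup_{e\in F\setminus M}e$ now satisfies $|U|\le d\,|F\setminus M|\le 2d(k-|M|)$, so a uniformly random vertex coloring agrees with $f_F$ on all of $U$---i.e.\ is \emph{good for $F$}---with probability at least $|C|^{-2d(k-|M|)}$; repeating the sampling $|C|^{\bigO(d(k-|M|))}$ times yields a good coloring with constant probability (and a perfect-hash-family derandomization is standard). Given a good coloring $f$, I would build $f'$ by the identical rule: for each $e\in M$, if at least two edges in $\bigcup_{v\in e}\delta(v)$ are stable under $f$ keep $f'|_e=f|_e$, and otherwise recolor every vertex of $e$ to $\ell(e)$. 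The correctness argument then carries over: because $M$ is induced, any edge meeting a fixed matching edge meets no other matching edge, so recoloring the (up to $d$) vertices of one matching edge never changes the coloring on a ``witness'' edge charged to another one; consequently $f'$ keeps every edge of $F\setminus M$ stable and, for each edge of $M\cap F$ it destabilizes, makes (at least two) edges outside $M\cup F$ stable, so at least $k$ edges are stable under $f'$.

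Finally, for the XP bound I would, exactly as in \cref{thm:XP-above-IM}, enumerate all candidate sets $F'=F\setminus M$ of size at most $2(k-|M|)$ in $n^{\bigO(k-|M|)}$ time, and for each test in polynomial time whether $F'$ is stable and extends to a stable set by adding $k-|F'|$ edges of $M$. I expect the only place that needs genuine care to be the correctness check of the extension step for hyperedges of size up to $d$---specifically, confirming that no vertex recolored on behalf of one matching edge lies in a witness edge of another---but this is precisely what the induced-matching property guarantees, so no new idea is needed beyond the graph case.
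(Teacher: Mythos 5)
Your proposal is correct and follows essentially the same route as the paper, which states this corollary with only a brief sketch (the definition of induced matching carries over, the bound $|F\setminus M|\le 2(k-|M|)$ carries over, and $|\bigcup_{e\in F\setminus M}e|\le d|F\setminus M|$ forces $|C|^{\bigO(d(k-I(G)))}$ repetitions). You simply spell out the details the paper omits, and the argument is sound.
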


Next, we will show that this \XP{} algorithm for $k-I(G)$ cannot be improved to an \XP{} algorithm for $k-M(G)$, unless $\text{P} = \NP$.
This suggests that the assumption that there is no edge connecting endpoints of two edges of $M$ is imperative in the XP algorithm of \cref{thm:XP-above-IM}.
We will show that \cec{} is \NP-hard when $k-M(G) =1$ and $\abs{C} = 19$ by a reduction from the \NP-hard~\cite{Wu2012} problem \textsc{Two Disjoint Monochromatic Paths}, in which we are given an edge-bicolored graph and two terminal pairs and are asked to find two vertex-disjoint monochromatic paths to connect each pair.
The reduction uses \cref{obs:matching-counting} and is based on the following idea:
We create a graph with a perfect matching in which there are two matching edges which can have degree three in a solution while all other matching edges can have degree at most two.
A solution must then connect those two particular matching edges with a kind of path that can be translated into two disjoint monochromatic paths in the input instance.

\begin{theorem}
	\label{thm:paraNPh-above-matching}
	\cec{} is \NP-hard even if $k-M(G) = 1$ and $\abs{C}\leq 19$.
\end{theorem}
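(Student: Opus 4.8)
The plan is to reduce from \textsc{Two Disjoint Monochromatic Paths}, shown \NP-hard in~\cite{Wu2012}: given an edge-bicolored graph $H$ together with two terminal pairs $(s_1,t_1)$ and $(s_2,t_2)$, decide whether there exist two vertex-disjoint paths, one joining $s_1$ to $t_1$ and one joining $s_2$ to $t_2$, each of which is monochromatic. From such an instance I would build a \cec{} instance $(G,\ell,k)$ in which $G$ carries a perfect matching $M$, so that $M(G)=\abs{V(G)}/2$, and I set $k \coloneqq M(G)+1$; hence $k-M(G)=1$ by construction. The color set $\colors$ will consist of a fixed number of ``roles'' (a few roles per gadget type, a couple per color of $H$, plus padding colors for the matching), and a careful accounting will show that $\abs{\colors}\le 19$ suffices. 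The construction will single out two matching edges $e^*_1,e^*_2$ that are allowed to attain $\deg_F(e)=3$ in a stable set $F$, while every other matching edge is forced to satisfy $\deg_F(e)\le 2$.

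The gadgets would be: (i) a \emph{vertex gadget} for each vertex of $H$, a constant-size subgraph contributing several matching edges, wired so that in any stable set at most two of its edges are selected and so that ``routing'' through it once precludes routing through it again; (ii) an \emph{edge gadget} for each edge $\{x,y\}$ of $H$ of color $c$, joining the vertex gadgets of $x$ and $y$ with edge colors keyed to $c$, so that a stable set routing a path through this gadget corresponds exactly to traversing $\{x,y\}$ inside a $c$-monochromatic path; and (iii) \emph{terminal and junction gadgets} attaching $e^*_1$ near $s_1$ and $e^*_2$ near $t_2$ and linking the $t_1$- and $s_2$-sides, so that a routed path must begin at $s_1$, reach $t_1$, pass to $s_2$, and reach $t_2$. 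The one piece of global reasoning is \cref{obs:matching-counting}: since $M$ is perfect, any stable $F$ with $\abs{F}=\abs{M}+1$ satisfies $\sum_{e\in M}\deg_F(e)=2\abs{M}+2$, so once the gadgets enforce $\deg_F(e)\le 2$ for $e\notin\{e^*_1,e^*_2\}$ and $\deg_F(e^*_i)\le 3$, the surplus of $2$ must be realized as $\deg_F(e^*_1)=\deg_F(e^*_2)=3$, which forces $F$ to contain an $M$-alternating path connecting $e^*_1$ to $e^*_2$ through the gadgets; being a connected component of a stable set, that path (together with the monochromaticity built into the edge gadgets) decodes to two vertex-disjoint monochromatic paths in $H$.

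For correctness, in the forward direction I would start from $M$ and, along the two given monochromatic paths, perform the prescribed matching augmentation inside the traversed vertex and edge gadgets; a routine check shows the result is stable (each modified region is a monochromatic path) and has exactly $\abs{M}+1$ edges. In the backward direction I would take a stable $F$ with $\abs{F}=\abs{M}+1$, invoke \cref{obs:matching-counting} and the degree bounds to pin down $\deg_F(e^*_1)=\deg_F(e^*_2)=3$ and $\deg_F(e)=2$ for all other matching edges, read off the routed alternating path, and recover the two paths in $H$ from the sequence of vertex gadgets it visits; vertex-disjointness follows from the ``at most one routing per vertex gadget'' property and monochromaticity from the $F$-component being monochromatic and the edge-gadget color keying. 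Finally I would tally the colors used across the gadget types and confirm $\abs{\colors}\le 19$.

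The main obstacle is the gadget engineering: designing the vertex, edge, and junction gadgets so that \emph{simultaneously} (a) every non-special matching edge provably has $\deg_F(e)\le 2$ in every stable set of size $\abs{M}+1$ — so that no branching or other hidden surplus is possible; (b) the routed structure corresponds bijectively to legal traversals of monochromatic paths in $H$, with no shortcut letting $F$ reach size $\abs{M}+1$ without encoding a genuine solution; and (c) the total number of edge colors stays at $19$. Establishing (a) and (b) amounts to an exhaustive but finite case analysis of how a stable set can interact with each constant-size gadget, and (c) is then bookkeeping; obtaining a single gadget family that meets all three constraints at once is the delicate part.
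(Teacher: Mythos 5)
Your plan matches the paper's proof in all its essential moves: reducing from \textsc{Two Disjoint Monochromatic Paths}, building a graph $G'$ with a perfect matching $M$ and setting $k=\abs{M}+1$, singling out two special matching edges that can reach $\deg_F=3$ while all others are capped at $\deg_F\le 2$, invoking \cref{obs:matching-counting} to force the surplus onto exactly those two edges, reading off an $M$-alternating path connecting them, and using a fixed color $c^m$ for the matching. The topology you describe ($e_1^*$ near $s_1$, $e_2^*$ near $t_2$, junction joining the $t_1$- and $s_2$-sides) is a relabeled variant of the paper's ($e_1^*$ at $s_1$, $e_2^*$ at $s_2$, bridge edge $\{t_1,t_2\}$), which is inessential.

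However, there is one concrete ingredient you omit that the whole color count rests on, and without it your step (c) is not ``bookkeeping'' but simply false. To keep $\abs{\colors}\le 19$ one must control how many distinct transition colors a single vertex gadget needs: the paper assigns each vertex $v$ of the TDMP instance one color per unordered pair of same-colored incident edges (i.e.\ $\binom{\deg_1(v)}{2}$ colors drawn from $C_1$ and $\binom{\deg_2(v)}{2}$ from $C_2$), so that stable routing through $v$ is forced to enter and leave on edges of the same TDMP-color. This requires the TDMP instance to have bounded color-degree, and the paper explicitly notes that a closer inspection of Wu's reduction gives hardness with at most six color-$1$ edges and three color-$2$ edges per vertex, plus degree-$1$ terminals; this is precisely what yields $\binom{6}{2}+\binom{3}{2}+1=15+3+1=19$. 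A generic TDMP instance would need unboundedly many colors under your scheme. You should state this strengthened source problem up front, because it is load-bearing for the claimed bound $\abs{C}\le 19$. Beyond that, your points (a) and (b) are exactly the content of the paper's explicit gadgets (vertices $x^e_u, x^e_v$; $y^v_{u,w}, z^v_{u,w}$; $v,v'$) and the accompanying case analysis, which your sketch defers rather than supplies; as written, the proposal is a correct blueprint but not yet a proof.
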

\begin{proof}
	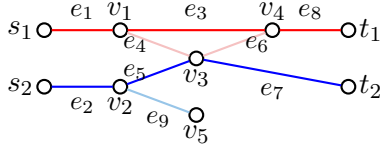
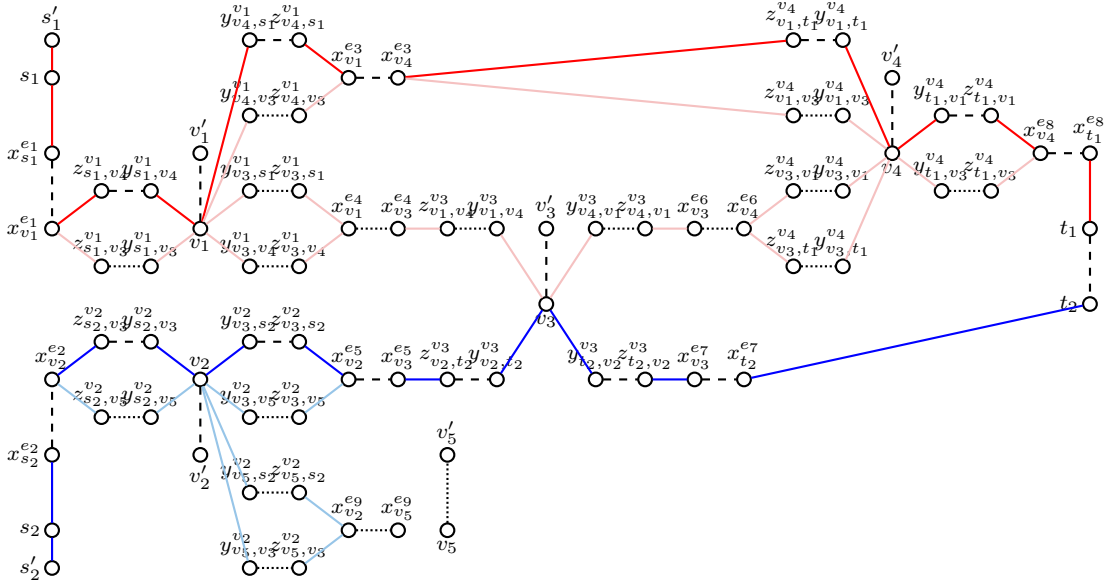
\begin{figure}[t]
		\centering
		\begin{subfigure}{\textwidth}
			\centering
			\begin{tikzpicture}[rotate=90,xscale=-0.75,yscale=-1]
				\tikzpreamble
				\tikzstyle{every path}=[thick]

				\node[xnode] (s1) at (0,0) {};
				\node[left] at (s1) {$s_1$};
				\node[xnode] (s2) at (1,0) {};
				\node[left] at (s2) {$s_2$};
				\node[xnode] (v1) at (0,1) {};
				\node[above] at (v1) {$v_1$};
				\node[xnode] (v2) at (1,1) {};
				\node[below] at (v2) {$v_2$};
				\node[xnode] (v3) at (0.5,2) {};
				\node[below] at (v3) {$v_3$};
				\node[xnode] (v4) at (0,3) {};
				\node[above] at (v4) {$v_4$};
				\node[xnode] (t1) at (0,4) {};
				\node[right] at (t1) {$t_1$};
				\node[xnode] (t2) at (1,4) {};
				\node[right] at (t2) {$t_2$};
				\node[xnode] (v5) at (1.5,2) {};
				\node[below] at (v5) {$v_5$};

				\draw[red] (s1) edge node[above,black] {\small $e_1$}  (v1);
				\draw[blue] (s2) edge  node[below,black] {\small $e_2$} (v2);
				\draw[red] (v1) edge  node[above,black] {\small $e_3$} (v4);
				\draw[lred] (v1) edge  node[left,black] {\small $e_4$} (v3);
				\draw[blue] (v2) edge  node[left,black] {\small $e_5$} (v3);
				\draw[lred] (v3) edge  node[right,black] {\small $e_6$} (v4);
				\draw[blue] (v3) edge  node[below,black] {\small $e_7$} (t2);
				\draw[red] (v4) edge  node[above,black] {\small $e_8$} (t1);
				\draw[lblue] (v2) edge  node[below,black] {\small $e_9$} (v5);
 			\end{tikzpicture}
 			\caption{An instance of \textsc{Two Disjoint Monochromatic Paths}.
 			The dark red and dark blue edges form disjoint monochromatic paths.
 			The light red and light blue edges are not part of this solution.}
		\end{subfigure}
		\hfill
		\begin{subfigure}{\textwidth}
			\centering
			\begin{tikzpicture}[xscale=0.65]
				\tikzpreamble
				 \tikzstyle{every node}=[font=\scriptsize]
				\tikzstyle{every path}=[thick]
				
				\node[xnode] (s'1) at (3,4.5) {};
				\node[above] at (s'1) {$s'_1$};
				\node[xnode] (s1) at (3,4) {};
				\node[left] at (s1) {$s_1$};
				\node[xnode] (xe1s1) at (3,3) {};
				\node[left] at (xe1s1) {$x^{e_1}_{s_1}$};
				\node[xnode] (xe1v1) at (3,2) {};
				\node[left] at (xe1v1) {$x^{e_1}_{v_1}$};
				\node[xnode] (zv1s1v4) at (4,2.5) {};
				\node[above] at (zv1s1v4) {$z^{v_1}_{s_1,v_4}$};
				\node[xnode] (zv1s1v3) at (4,1.5) {};
				\node[above] at (zv1s1v3) {$z^{v_1}_{s_1,v_3}$};
				\node[xnode] (yv1s1v4) at (5,2.5) {};
				\node[above] at (yv1s1v4) {$y^{v_1}_{s_1,v_4}$};
				\node[xnode] (yv1s1v3) at (5,1.5) {};
				\node[above] at (yv1s1v3) {$y^{v_1}_{s_1,v_3}$};
				
				\node[xnode] (v1) at (6,2) {};
				\node[below] at (v1) {$v_1$};
				\node[xnode] (v'1) at (6,3) {};
				\node[above] at (v'1) {$v'_1$};
				
				\draw[red] (s'1) -- (s1);
				\draw[red] (s1) -- (xe1s1);
				\draw[dashed] (xe1s1) -- (xe1v1);
				\draw[red] (xe1v1) -- (zv1s1v4);
				\draw[lred] (xe1v1) -- (zv1s1v3);
				\draw[dashed] (zv1s1v4) -- (yv1s1v4);
				\draw[densely dotted] (zv1s1v3) -- (yv1s1v3);
				\draw[red] (yv1s1v4) -- (v1);
				\draw[lred] (yv1s1v3) -- (v1);
				\draw[dashed] (v1) -- (v'1);

				\node[xnode] (s'2) at (3,-2.5) {};
				\node[left] at (s'2) {$s'_2$};
				\node[xnode] (s2) at (3,-2) {};
				\node[left] at (s2) {$s_2$};
				\node[xnode] (xe2s2) at (3,-1) {};
				\node[left] at (xe2s2) {$x^{e_2}_{s_2}$};
				\node[xnode] (xe2v2) at (3,0) {};
				\node[above] at (xe2v2) {$x^{e_2}_{v_2}$};
				\node[xnode] (zv2s2v3) at (4,0.5) {};
				\node[above] at (zv2s2v3) {$z^{v_2}_{s_2,v_3}$};
				\node[xnode] (zv2s2v5) at (4,-0.5) {};
				\node[above] at (zv2s2v5) {$z^{v_2}_{s_2,v_5}$};
				\node[xnode] (yv2s2v3) at (5,0.5) {};
				\node[above] at (yv2s2v3) {$y^{v_2}_{s_2,v_3}$};
				\node[xnode] (yv2s2v5) at (5,-0.5) {};
				\node[above] at (yv2s2v5) {$y^{v_2}_{s_2,v_5}$};
				
				\node[xnode] (v2) at (6,0) {};
				\node[above] at (v2) {$v_2$};
				\node[xnode] (v'2) at (6,-1) {};
				\node[below] at (v'2) {$v'_2$};
				
				\draw[blue] (s'2) -- (s2);
				\draw[blue] (s2) -- (xe2s2);
				\draw[dashed] (xe2s2) -- (xe2v2);
				\draw[blue] (xe2v2) -- (zv2s2v3);
				\draw[lblue] (xe2v2) -- (zv2s2v5);
				\draw[dashed] (zv2s2v3) -- (yv2s2v3);
				\draw[densely dotted] (zv2s2v5) -- (yv2s2v5);
				\draw[lblue] (yv2s2v5) -- (v2);
				\draw[blue] (yv2s2v3) -- (v2);
				\draw[dashed] (v2) -- (v'2);
				
				\node[xnode] (yv1v4s1) at (7,4.5) {};
				\node[above] at (yv1v4s1) {$y^{v_1}_{v_4,s_1}$};
				\node[xnode] (yv1v4v3) at (7,3.5) {};
				\node[above] at (yv1v4v3) {$y^{v_1}_{v_4,v_3}$};
				\node[xnode] (yv1v3s1) at (7,2.5) {};
				\node[above] at (yv1v3s1) {$y^{v_1}_{v_3,s_1}$};
				\node[xnode] (yv1v3v4) at (7,1.5) {};
				\node[above] at (yv1v3v4) {$y^{v_1}_{v_3,v_4}$};
				\node[xnode] (zv1v4s1) at (8,4.5) {};
				\node[above] at (zv1v4s1) {$z^{v_1}_{v_4,s_1}$};
				\node[xnode] (zv1v4v3) at (8,3.5) {};
				\node[above] at (zv1v4v3) {$z^{v_1}_{v_4,v_3}$};
				\node[xnode] (zv1v3s1) at (8,2.5) {};
				\node[above] at (zv1v3s1) {$z^{v_1}_{v_3,s_1}$};
				\node[xnode] (zv1v3v4) at (8,1.5) {};
				\node[above] at (zv1v3v4) {$z^{v_1}_{v_3,v_4}$};
				
				\draw[red] (v1) -- (yv1v4s1);
				\draw[lred] (v1) -- (yv1v3s1);
				\draw[lred] (v1) -- (yv1v4v3);
				\draw[lred] (v1) -- (yv1v3v4);
				\draw[dashed] (yv1v4s1) -- (zv1v4s1);
				\draw[densely dotted] (yv1v3s1) -- (zv1v3s1);
				\draw[densely dotted] (yv1v4v3) -- (zv1v4v3);
				\draw[densely dotted] (yv1v3v4) -- (zv1v3v4);
				
				\node[xnode] (yv2v3s2) at (7,0.5) {};
				\node[above] at (yv2v3s2) {$y^{v_2}_{v_3,s_2}$};
				\node[xnode] (yv2v3v5) at (7,-0.5) {};
				\node[above] at (yv2v3v5) {$y^{v_2}_{v_3,v_5}$};
				\node[xnode] (yv2v5s2) at (7,-1.5) {};
				\node[above] at (yv2v5s2) {$y^{v_2}_{v_5,s_2}$};
				\node[xnode] (yv2v5v3) at (7,-2.5) {};
				\node[above] at (yv2v5v3) {$y^{v_2}_{v_5,v_3}$};
				\node[xnode] (zv2v3s2) at (8,0.5) {};
				\node[above] at (zv2v3s2) {$z^{v_2}_{v_3,s_2}$};
				\node[xnode] (zv2v3v5) at (8,-0.5) {};
				\node[above] at (zv2v3v5) {$z^{v_2}_{v_3,v_5}$};
				\node[xnode] (zv2v5s2) at (8,-1.5) {};
				\node[above] at (zv2v5s2) {$z^{v_2}_{v_5,s_2}$};
				\node[xnode] (zv2v5v3) at (8,-2.5) {};
				\node[above] at (zv2v5v3) {$z^{v_2}_{v_5,v_3}$};
				
				\draw[blue] (v2) -- (yv2v3s2);
				\draw[lblue] (v2) -- (yv2v3v5);
				\draw[lblue] (v2) -- (yv2v5s2);
				\draw[lblue] (v2) -- (yv2v5v3);
				\draw[dashed] (yv2v3s2) -- (zv2v3s2);
				\draw[densely dotted] (yv2v3v5) -- (zv2v3v5);
				\draw[densely dotted] (yv2v5s2) -- (zv2v5s2);
				\draw[densely dotted] (yv2v5v3) -- (zv2v5v3);
				
				\node[xnode] (xe3v1) at (9,4) {};
				\node[above] at (xe3v1) {$x^{e_3}_{v_1}$};
				\node[xnode] (xe4v1) at (9,2) {};
				\node[above] at (xe4v1) {$x^{e_4}_{v_1}$};
				\node[xnode] (xe3v4) at (10,4) {};
				\node[above] at (xe3v4) {$x^{e_3}_{v_4}$};
				\node[xnode] (xe4v3) at (10,2) {};
				\node[above] at (xe4v3) {$x^{e_4}_{v_3}$};
				
				\draw[red] (zv1v4s1) --(xe3v1);
				\draw[lred] (zv1v4v3) --(xe3v1);
				\draw[lred] (zv1v3s1) -- (xe4v1);
				\draw[lred] (zv1v3v4) -- (xe4v1);
				\draw[dashed] (xe3v1) -- (xe3v4);
				\draw[densely dotted] (xe4v1) -- (xe4v3);
				
				\node[xnode] (xe5v2) at (9,0) {};
				\node[above] at (xe5v2) {$x^{e_5}_{v_2}$};
				\node[xnode] (xe9v2) at (9,-2) {};
				\node[above] at (xe9v2) {$x^{e_9}_{v_2}$};
				\node[xnode] (xe5v3) at (10,0) {};
				\node[above] at (xe5v3) {$x^{e_5}_{v_3}$};
				\node[xnode] (xe9v5) at (10,-2) {};
				\node[above] at (xe9v5) {$x^{e_9}_{v_5}$};
				
				\draw[blue] (zv2v3s2) -- (xe5v2);
				\draw[lblue] (zv2v3v5) -- (xe5v2);
				\draw[lblue] (zv2v5s2) -- (xe9v2);
				\draw[lblue] (zv2v5v3) -- (xe9v2);
				\draw[dashed] (xe5v2) -- (xe5v3);
				\draw[densely dotted] (xe9v2) -- (xe9v5);
				
				\node[xnode] (zv3v1v4) at (11,2) {};
				\node[above] at (zv3v1v4) {$z^{v_3}_{v_1,v_4}$};
				\node[xnode] (yv3v1v4) at (12,2) {};
				\node[above] at (yv3v1v4) {$y^{v_3}_{v_1,v_4}$};
				\draw[lred] (xe4v3) -- (zv3v1v4);
				\draw[densely dotted] (zv3v1v4) -- (yv3v1v4);
				
				\node[xnode] (zv3v2t2) at (11,0) {};
				\node[above] at (zv3v2t2) {$z^{v_3}_{v_2,t_2}$};
				\node[xnode] (yv3v2t2) at (12,0) {};
				\node[above] at (yv3v2t2) {$y^{v_3}_{v_2,t_2}$};
				\draw[blue] (xe5v3) -- (zv3v2t2);
				\draw[dashed] (zv3v2t2) -- (yv3v2t2);
				
				\node[xnode] (v3) at (13,1) {};
				\node[below] at (v3) {$v_3$};
				\node[xnode] (v'3) at (13,2) {};
				\node[above] at (v'3) {$v'_3$};
				\draw[dashed] (v3) -- (v'3);
				\draw[lred] (yv3v1v4) -- (v3);
				\draw[blue] (yv3v2t2) -- (v3);
				
				\node[xnode] (yv3v4v1) at (14,2) {};
				\node[above] at (yv3v4v1) {$y^{v_3}_{v_4,v_1}$};
				\node[xnode] (zv3v4v1) at (15,2) {};
				\node[above] at (zv3v4v1) {$z^{v_3}_{v_4,v_1}$};
				\draw[lred] (v3) -- (yv3v4v1);
				\draw[densely dotted] (yv3v4v1) -- (zv3v4v1);
				
				\node[xnode] (xe6v3) at (16,2) {};
				\node[above] at (xe6v3) {$x^{e_6}_{v_3}$};
				\node[xnode] (xe6v4) at (17,2) {};
				\node[above] at (xe6v4) {$x^{e_6}_{v_4}$};
				\draw[lred] (zv3v4v1) -- (xe6v3);
				\draw[densely dotted] (xe6v3) -- (xe6v4);
				
				\node[xnode] (v5) at (11,-2) {};
				\node[below] at (v5) {$v_5$};
				\node[xnode] (v'5) at (11,-1) {};
				\node[above] at (v'5) {$v'_5$};
				\draw[densely dotted] (v5) -- (v'5);
				
				\node[xnode] (zv4v1t1) at (18,4.5) {};
				\node[above] at (zv4v1t1) {$z^{v_4}_{v_1,t_1}$};
				\node[xnode] (yv4v1t1) at (19,4.5) {};
				\node[above] at (yv4v1t1) {$y^{v_4}_{v_1,t_1}$};
				\node[xnode] (zv4v1v3) at (18,3.5) {};
				\node[above] at (zv4v1v3) {$z^{v_4}_{v_1,v_3}$};
				\node[xnode] (yv4v1v3) at (19,3.5) {};
				\node[above] at (yv4v1v3) {$y^{v_4}_{v_1,v_3}$};
				\node[xnode] (zv4v3v1) at (18,2.5) {};
				\node[above] at (zv4v3v1) {$z^{v_4}_{v_3,v_1}$};
				\node[xnode] (yv4v3v1) at (19,2.5) {};
				\node[above] at (yv4v3v1) {$y^{v_4}_{v_3,v_1}$};
				\node[xnode] (zv4v3t1) at (18,1.5) {};
				\node[above] at (zv4v3t1) {$z^{v_4}_{v_3,t_1}$};
				\node[xnode] (yv4v3t1) at (19,1.5) {};
				\node[above] at (yv4v3t1) {$y^{v_4}_{v_3,t_1}$};
				
				\draw[red] (xe3v4) -- (zv4v1t1) {};
				\draw[lred] (xe3v4) -- (zv4v1v3) {};
				\draw[lred] (xe6v4) -- (zv4v3v1) {};
				\draw[lred] (xe6v4) -- (zv4v3t1) {};
				\draw[dashed] (zv4v1t1) -- (yv4v1t1);
				\draw[densely dotted] (zv4v1v3) -- (yv4v1v3);
				\draw[densely dotted] (zv4v3v1) -- (yv4v3v1);
				\draw[densely dotted] (zv4v3t1) -- (yv4v3t1);
				
				\node[xnode] (v4) at (20,3) {};
				\node[below] at (v4) {$v_4$};
				\node[xnode] (v'4) at (20,4) {};
				\node[above] at (v'4) {$v'_4$};
				\draw[dashed] (v4) -- (v'4);
				
				\draw[red] (yv4v1t1) -- (v4);
				\draw[lred] (yv4v1v3) -- (v4);
				\draw[lred] (yv4v3v1) -- (v4);
				\draw[lred] (yv4v3t1) -- (v4);
				
				\node[xnode] (yv4t1v1) at (21,3.5) {};
				\node[above] at (yv4t1v1) {$y^{v_4}_{t_1,v_1}$};
				\node[xnode] (zv4t1v1) at (22,3.5) {};
				\node[above] at (zv4t1v1) {$z^{v_4}_{t_1,v_1}$};
				\node[xnode] (yv4t1v3) at (21,2.5) {};
				\node[above] at (yv4t1v3) {$y^{v_4}_{t_1,v_3}$};
				\node[xnode] (zv4t1v3) at (22,2.5) {};
				\node[above] at (zv4t1v3) {$z^{v_4}_{t_1,v_3}$};
				
				\draw[red] (v4) -- (yv4t1v1);
				\draw[lred] (v4) -- (yv4t1v3);
				\draw[dashed] (yv4t1v1) -- (zv4t1v1);
				\draw[densely dotted] (yv4t1v3) -- (zv4t1v3);
				
				\node[xnode] (xe8v4) at (23,3) {};
				\node[above] at (xe8v4) {$x^{e_8}_{v_4}$};
				\node[xnode] (xe8t1) at (24,3) {};
				\node[above] at (xe8t1) {$x^{e_8}_{t_1}$};
				
				\draw[red] (zv4t1v1) -- (xe8v4);
				\draw[lred] (zv4t1v3) -- (xe8v4);
				\draw[dashed] (xe8v4) -- (xe8t1);
				
				\node[xnode] (t1) at (24,2) {};
				\node[left] at (t1) {$t_1$};
				\node[xnode] (t2) at (24,1) {};
				\node[left] at (t2) {$t_2$};
				
				\draw[red] (xe8t1) -- (t1);
				\draw[dashed] (t1) -- (t2);
				
				\node[xnode] (yv3t2v2) at (14,0) {};
				\node[above] at (yv3t2v2) {$y^{v_3}_{t_2,v_2}$};
				\node[xnode] (zv3t2v2) at (15,0) {};
				\node[above] at (zv3t2v2) {$z^{v_3}_{t_2,v_2}$};
				\draw[blue] (v3) -- (yv3t2v2);
				\draw[dashed] (yv3t2v2) -- (zv3t2v2);
				
				\node[xnode] (xe7v3) at (16,0) {};
				\node[above] at (xe7v3) {$x^{e_7}_{v_3}$};
				\node[xnode] (xe7t2) at (17,0) {};
				\node[above] at (xe7t2) {$x^{e_7}_{t_2}$};
				
				\draw[blue] (zv3t2v2) -- (xe7v3);
				\draw[dashed] (xe7v3) -- (xe7t2);
				\draw[blue] (xe7t2) -- (t2);
			\end{tikzpicture}
			\caption{The instance of \cec{} resulting from applying the reduction in the proof of \cref{thm:paraNPh-above-matching} to the instance of \textsc{TDMP} in (a).
			A stable edge set $F$ of size $M(G) + 1$ is indicated
			Dashed and dotted edges have color $c^m$, where the former are not part of $F$ and the latter are.
			Edges drawn in red (blue) result from red (blue) edges in the instance of \textsc{TDMP}.
			Not all red (blue) edges actually have the same color in the instance of \cec{}.
			An edge is light red (light blue) if it is not in $F$ and dark red (dark blue) if it is.}
		\end{subfigure}
		\caption{Illustration of the reduction in the proof of \cref{thm:paraNPh-above-matching}.}
		\label{fig:paraNP-above-matching}
	\end{figure}
	Wu~\cite{Wu2012} showed that the problem \textsc{Two Disjoint Monochromatic Paths} (\textsc{TDMP}) is \NP-hard.
	It is defined as follows.
	The input consists of a graph $G=(V,E)$ with a two-edge coloring $\ell \colon E \to \{1,2\}$ and terminal vertices $s_1,t_1,s_2,t_2\in V$.
	The task is to find an $s_1$-$t_1$~path~$P_1$ that uses only edges of color $1$~and an $s_2$-$t_2$ path $P_2$ that uses only edges of color~$2$ such that no vertex is visited by both $P_1$ and $P_2$.
	A closer examination of Wu's proof reveals that \textsc{TDMP} is \NP-hard even if every vertex is incident to at most six edges with color~$1$ and at most three edges with color~$2$.
	By a very simple modification, we may additionally assume that all terminal vertices have degree~$1$, where the sole edge incident to $s_1$ has color~$1$ and the sole edge incident to $s_2$ has color~$2$.
	We will give a reduction from \textsc{TDMP} with these restrictions to \cec{} with $k-M(G) = 1$ and $\abs{C}\leq 19$.
	This reduction is illustrated in \Cref{fig:paraNP-above-matching}.
	
	Let $(G=(V,E),\ell,s_1,t_1,s_2,t_2)$ be an instance of \textsc{TDMP} with the aforementioned restrictions.
	We construct an instance $(G'=(V',E'),\ell',k)$ of \cecS{} as follows.
	Let $c^m \in C$ denote a fixed color, which will be used for all matching edges (except for two edges $\{ s_1, s_1' \}$ and $\{ s_2, s_2' \}$) in $G'$.
	We also partition the colors in $C\setminus \{c^m\}$ into $C_1$ and $C_2$ with $\abs{C_1} = 15$ and $\abs{C_2} = 3$.
	We add all vertices in $G$ to~$G'$.
	For every edge $e=\{u,v\} \in E$, we add vertices $x^e_u$ and $x^e_v$ to $G'$ and connect the two by an edge with color $c^m$.
	For any vertex $v \in V \setminus \{s_1,t_1,s_2,t_2\}$, we do the following.
	First, we add a vertex $v'$, which we connect to $v$ with an edge of color~$c^m$.
	Let $w_1,\ldots,w_a$ denote the vertices connected to $v$ by an edge with color~$1$ and $x_1,\ldots,x_b$ those connected by an edge with color~$2$.
	For any $\{i,j\} \in \binom{[a]}{2}$, choose a color $c^v_{\{w_i,w_j\}} \in C_1$ and for any $\{i',j'\} \in \binom{[a]}{2}$ a color~$d^v_{\{x_{i'},x_{j'}\}} \in C_2$.
	All $c^v_{\{w_i,w_j\}}$ and $d^v_{\{x_{i'},x_{j'}\}}$ for a fixed $v$ must be pairwise distinct.
	Note that because $v$ has at most six incident color-$1$ edges and at most three incident color-$2$ edges, this requires at most $\binom{6}{2} = 15$ colors in $C_1$ and $\binom{3}{2} = 3$ colors in $C_2$.
	For each $i \in [a]$ and $j \in [a]\setminus\{i\}$, we do the following: 
	\begin{itemize}
		\item 
		Add a vertex $y_{w_i,w_j}^v$ and we connect it with $v$ by an edge with color $c^v_{\{w_i,w_j\}}$.
		So $v$ has $a(a - 1)$ incident edges (in addition to $\{ v, v' \}$) and each color $c^v_{\{w_i,w_j\}}$ appears exactly twice.
		\item
		Add a vertex $z_{w_i,w_j}^v$ and we connect it with $y_{w_i,w_j}$ by an edge with color $c^m$ and with $x^{\{v,w_i\}}_v$ by an edge with color $c^v_{\{w_i,w_j\}}$.
		Note that every vertex $x_v^{v, w_i}$ has $a - 1$ incident edges (in addition to $\{ x_v^{v, w_i}, x_{w_i}^{v, w_i} \}$) and each color $c^v_{\{w_i,w_j\}}$ appears exactly once.
	\end{itemize}
	For each~$i'\in [b]$, we repeat this construction, using the colors $d^v_{\{x_{i'},x_{j'}\}}$.
	The intuitive idea is that when $P_1$ goes through $w_i$, $v$, and then $w_j$, the vertices $x_v^{\{v, w_i\}}$, $x_v^{\{v, w_j\}}$, and $v$ will be colored with $c^v_{\{w_i,w_j\}}$.
	On the other hand, if neither $P_1$ nor $P_2$ goes through $v$, then $v$ will be colored with $c^m$.
	Finally, we consider the terminal vertices~$s_1,t_1,s_2,t_2$.
	For $s_i$, $i\in\{1,2\}$, we do the following.
	We add a vertex $s'_i$ and connect $s_i$ to $s'_i$ by an edge with an arbitrary color~$c\in C_i$.
	We also connect $s_i$ to $x^e_{s_i}$, where $e$ is the sole edge incident to~$s_i$, by an edge with color $c$.
	For $t_i$, $i\in\{1,2\}$, we do the following.
	We connect $t_i$ to $x^e_{t_i}$, where $e$ is the edge incident to~$t_i$ by an edge with an arbitrary color~$c\in C_i$ and to $t_{3-i}$ by an edge of color $c^m$.
	
	Let $M \coloneqq \{e \in E' \mid \ell'(e) = c^m\} \cup \{\{s_1,s'_1\},\{s_2,s'_2\}\}$.
	Note that $M$ is a perfect matching in~$G'$.
	We set $k\coloneqq \abs{M} + 1$.
	Clearly, the output instance can be computed in polynomial time.
	It remains to show that $(G,\ell,s_1,t_1,s_2,t_2)$ is a \yes-instance for \textsc{TDMP} if and only if $(G',\ell',k)$ is a \yes-instance for \cecS{}.
	
	\RD{} Let $s_1=v_1,v_2,\ldots,v_p=t_1$ and $s_2=\tv_1,\tv_2,\ldots,\tv_q=t_2$ be monochromatic vertex-disjoint paths of color~$1$ and $2$, respectively.
	Define a stable edge set in $G'$ as follows:
	{\allowdisplaybreaks
	\begin{align*}
                F\coloneqq & \{ \{s_1,s'_1\},\{s_2,s'_2\}\} \cup \{\{v,v'\} \mid v \in V, v\notin \{v_1,\ldots,v_p,\tv_1,\ldots,\tv_q\}\}\\
                &\cup \{\{y^v_{u,w},z^v_{u,w}\} \mid \{u,v\},\{v,w\} \in E, \ell(\{u,v\}) = \ell(\{v,w\}), \\
                &\quad \quad   \nexists i\colon (v_{i-2} = u \wedge v_i = v \wedge v_{i+1}=w) \vee (v'_{i-1} = u \wedge \tv_i = v \wedge  \tv_{i+1}=w)\} \\             
                &\cup \{\{x^e_u,x^e_v\} \mid e \in E, \nexists i\colon (v_i = u \wedge v_{i+1}=v) \vee (\tv_i = u \wedge \tv_{i+1}=v)\} \\
                &\cup \{ \{x^{\{v_{i-1},v_i\}}_{v_i},z^{v_i}_{v_{i-1},v_{i+1}}\}, \{y^{v_i}_{v_{i-1},v_{i+1}},v_i\}, \{v_i,y^{v_i}_{v_{i+1},v_{i-1}},v_i\},\\
                &\quad \,\{x^{\{v_{i},v_{i+1}\}}_{v_i},z^{v_i}_{v_{i+1},v_{i-1}}\} \mid i \in \{2,\ldots,p-1\}\}\\
                &\cup \{ \{x^{\{\tv_{i-1},\tv_i\}}_{\tv_i},z^{\tv_i}_{\tv_{i-1},\tv_{i+1}}\}, \{y^{\tv_i}_{\tv_{i-1},\tv_{i+1}},\tv_i\}, \{\tv_i,y^{\tv_i}_{\tv_{i+1},\tv_{i-1}},\tv_i\},\\
                &\quad \,\{x^{\{\tv_{i},\tv_{i+1}\}}_{\tv_i},z^{\tv_i}_{\tv_{i+1},\tv_{i-1}}\} \mid i \in \{2,\ldots,q-1\}\}.
	\end{align*}
	}
	This edge set is stable.
	Moreover, $\deg_F(e) = 2$ for all $e\in M\setminus \{ \{s_1,s'_1\},\{s_2,s'_2\}\}$, while $\deg_F(\{s_1,s'_1\}) = \deg_F(\{s_1,s'_1\}) = 3$.
	By \cref{obs:matching-counting}, it follows that $\abs{F} = \abs{M} + 1 = k$.
	
	\LD{}
	Suppose that $F$ is a stable edge set in $G'$ with $\abs{F} = k = \abs{M} + 1$.
	No edge $e \in M\setminus \{ \{s_1,s'_1\},\{s_2,s'_2\}\}$ can have $\deg_F(e) \geq 2$.
	Moreover, the $\deg_F(\{s_1,s'_1\}), \deg_F(\{s_2,s'_2\}) \leq 3$.
	By \cref{obs:matching-counting}, it follows that $\deg_F(e) = 2$ for all $e \in M\setminus \{ \{s_1,s'_1\},\{s_2,s'_2\}\}$ and $\deg_F(\{s_1,s'_1\}) = \deg_F(\{s_1,s'_1\}) = 3$.
	A \emph{graph with loops} is a hypergraph $H=(W,F)$ such that $\abs{e} \in \{1,2\}$ for all $e\in F$.
	Consider the auxiliary graph with loops $H=(M,E'')$ where $E''$ contains a loop on the vertex $e \in M$ if $e\in F$ and an edge between $e,e'\in M$, $e\neq e'$, if $F$ contains an edge $f$ with $f\cap e \neq \emptyset \neq f\cap e'$.
	In $H$, the degree of any $e \in M\setminus \{ \{s_1,s'_1\},\{s_2,s'_2\}\}$ is $2$, while $\{s_1,s'_1\},\{s_2,s'_2\}$ have degree $3$.
	It follows that, in $H$, the vertices $\{s_1,s'_1\},\{s_2,s'_2\}$ have a loop and an additional neighbor each, while all other vertices either have a loop or two neighbors distinct from themselves.
	This means that $H$ contains a path between $\{s_1,s'_1\}$ and $\{s_2,s'_2\}$.
	
	Let $\{s_1,s_1'\} = e_1,e_2,\ldots,e_p=\{s_2,s'_2\}$ be the aforementioned path in $H$.
	In the following, we will show that this path essentially contains two disjoint monochromatic paths, from $s_1$ to~$t_1$ and from $s_2$ to~$t_2$, respectively.	
	For any $i\in\{1,\ldots,p\}$, if $e_i \neq \{t_1,t_2\}$, then the colors of the edges in $F$ witnessing that $\deg_F(e_i) \geq 2$ are all from $C_1$ or all from $C_2$.
	First, notice that they cannot have color $c^m$ because any such edges would form a loop in $H$ and therefore cannot be on this path.	
	If $e_i= \{v,v'\}$, then this claim is true because
	$\deg_{G'}(v') = 1$ and therefore both of the edges must be incident to $v$ and, therefore have the same color.
	In all other cases, no vertex in $e_i$ is incident to both an edge with a color in $C_1$ and an edge with a color $C_2$.
	Let $\ell(e_i) \coloneqq j \in \{1,2\}$ where the edges witnessing that $\deg(e_i) \geq 2$ are all from $C_j$.
	By a similar argument, we can show that $\ell(e_i) = \ell(e_{i+1})$ if $e_i \neq \{t_1,t_2\} \neq e_{i+1}$.
	Since $\ell(e_1) = 1$ and $\ell(e_p) = 2$, it follows that $\{t_1,t_2\}$ must be visited by the path, say $e_q = \{t_1,t_2\}$.
	
	In $H$, $\{s_i,s'_i\}$ is only adjacent to a single vertex $\{x^e_{s_i},x^e_{v}\}$, where $v$ is $s_i$'s sole neighbor in $G$.
	Any vertex in $H$ of the type $\{x^e_{u},x^e_{v}\}$, with $e=\{u,v\}$, is only adjacent to vertices of type $\{y^u_{v,w}, z^u_{v,w}\}$, $\{s_1, s_2\}$, or $\{t_1,t_2\}$.
	A vertex in $H$ of type~$\{y^u_{v,w}, z^u_{v,w}\}$ is adjacent to exactly one vertex of the type $\{x^e_{u},x^e_{v}\}$ and one of the type $\{u,u'\}$.
	Finally, in~$H$, vertices of the type~$\{v,v'\}$ are only adjacent to vertices of the type $\{y^v_{u,w}, z^v_{u,w}\}$.
	It follows that, for any $i \in \{2,\ldots,q-1\}$,
	\begin{itemize}
		\item if $i \bmod 4 = 2$, then $e_i$ is of the type $\{x^e_{u},x^e_{v}\}$,
		\item if $ i \bmod 4 \in \{1,3\}$, then $e_i$ is of the type $\{y^u_{v,w}, z^u_{v,w}\}$, and
		\item if $i \bmod 4 = 0$, then $e_i$ is of the type $\{u,u'\}$.
	\end{itemize}
	Then, $s_1=v_1,v_2,\ldots,v_{\frac{q+5}{4}}=t_1$ where for any $i\in\{2,\ldots, \frac{q+5}{4}\}$, we have $e_{4i-1} = \{v_i, v_i'\}$, is a path from $s_1$ to $t_1$ that uses only edges of color~$1$.
	Similarly, one can get a path from $s_2$ to $t_2$ that is disjoint from the first path and only uses edges with color~$2$ by considering $e_q,\ldots,e_p$.
	This proves that the instance of \textsc{TDMP} is a \yes-instance.
\end{proof}

Finally, we will show that the \XP{} algorithm for $k-I(G)$ most likely also cannot be improved to an \FPT{} algorithm.
	
\begin{theorem}
	\label{thm:w1h-above-IM}
	\cec{} is \W{1}-hard with respect to $k - I(G)$.
\end{theorem}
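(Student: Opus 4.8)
The plan is to give a parameterized reduction from \textsc{Multicolored Clique}, which is \W{1}-hard when parameterized by the number $\kappa$ of color classes. Given an instance with color classes $V_1,\dots,V_\kappa$, I would build a \cec{} instance $(G',\ell',k)$ whose graph consists of a large \emph{backbone} --- a set $M_0$ of pairwise non-adjacent edges (hence an induced matching), each carrying its own private color --- together with $\kappa$ \emph{selection gadgets} (one per class) and one \emph{verification gadget} for every edge of the input graph joining two classes. Each gadget is attached to the backbone at only a few vertices and is built so that a stable set can gain at most one additional edge from it beyond the backbone edges, and only if the gadget is ``activated'' in a locally consistent way: activating the selection gadget for $V_i$ forces a commitment to a single vertex $v\in V_i$, recorded by giving a shared vertex a color $c_v$ private to $v$; activating a verification gadget for an edge $\{u,w\}$ with $u\in V_i$, $w\in V_j$ is possible only when $u$ and $w$ are precisely the vertices committed to by the $V_i$- and $V_j$-gadgets, since otherwise the coloring forced inside the gadget clashes with the committed colors. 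I would set $k \coloneqq |M_0| + \kappa + \binom{\kappa}{2}$.

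For the parameter bound it suffices to exhibit an induced matching of size $|M_0|$ in $G'$ --- namely $M_0$ itself, the gadget attachments being arranged so that $M_0$ stays induced; then $k - I(G') \le k - |M_0| = \kappa + \binom{\kappa}{2}$, a function of $\kappa$ alone, which is all a parameterized reduction requires. (Since every induced matching is in particular a stable set, on no-instances, where no stable set has size $k$, one automatically gets $I(G') < k$, so such instances are not trivially \yes{}.)

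Correctness then has the usual two directions. If $\{v_1,\dots,v_\kappa\}$ is a multicolored clique, take all $|M_0|$ backbone edges, activate the selection gadget of each class committing to $v_i$ (gaining $\kappa$ edges), and activate the $\binom{\kappa}{2}$ verification gadgets of the edges $\{v_i,v_j\}$ (gaining $\binom{\kappa}{2}$ edges); mutual consistency of the commitments makes the whole set stable, so it witnesses a \yes-instance of size $k$. Conversely, given a stable set $F$ with $|F| = k$, a counting argument in the spirit of \cref{obs:matching-counting} --- applied with the matching $M_0$, bounding region by region how much $F$ can exceed the backbone --- shows that $F$ must fully activate exactly one selection gadget per class and exactly $\binom{\kappa}{2}$ verification gadgets, with no slack anywhere; the stability constraints on the committed vertices then force each activated verification gadget to sit on an edge between two committed vertices, so the committed $\{v_1,\dots,v_\kappa\}$ induces a multicolored clique.

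I expect the bulk of the work --- and the main obstacle --- to be the gadget engineering that makes the backward direction airtight: one must design the selection and verification gadgets, and especially their interconnection through the shared commitment vertices, so that (i) each gadget yields a clean $+1$ and never more, (ii) the target $k$ is unreachable by any non-canonical pattern (overloading a single region, half-activating several gadgets, or activating mutually incompatible verification gadgets), and (iii) all of this coexists with $M_0$ remaining a genuine induced matching of the claimed size, so that the promised bound on $k - I(G')$ actually holds. Balancing tightness of the count, rigidity of the encoding, and sparsity around the backbone simultaneously is where the technical effort lies.
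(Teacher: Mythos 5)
Your plan matches the paper's in every architectural respect: both reduce from \textsc{Multicolored Clique}, both build a large induced matching as a ``backbone'' and pay for the excess $k - I(G')$ with $\Theta(\kappa^2)$ extra edges coming from per-class selection gadgets and per-edge verification gadgets, and both close the backward direction with a tight counting argument in the style of \cref{obs:matching-counting}. So the strategic outline is sound and, conceptually, identical to the paper's.

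But the proposal stops precisely where the proof has to start. You never construct the gadgets, and you explicitly acknowledge that designing them so that (i)~each contributes a clean $+1$ and no more, (ii)~no non-canonical pattern reaches $k$, and (iii)~$M_0$ remains induced despite the attachments, is ``where the technical effort lies.'' That is not a minor loose end to be mopped up: the rigidity of the construction \emph{is} the theorem, and several of the named difficulties are genuinely nontrivial to resolve simultaneously. For a concrete example of how delicate this is, the paper's construction does \emph{not} get a clean ``$+1$ per selection gadget, $+1$ per verification gadget'' ledger at all. Instead, each backbone edge $\{w^{i\to j}_v, x^{i\to j}_v\}$ acts as a token that is either kept (worth $1$) or \emph{sacrificed} so that both $\{u_i, w^{i\to j}_v\}$ and $\{x^{i\to j}_v, y_e\}$ can enter the stable set (worth $2$), for a net $+1$ per spent token; the excess is $s(s-1)$, not $s + \binom{s}{2}$, and the commitment is propagated from the selection side to the verification side through the same sacrificed matching edge, enforced by an exchange argument (``if the verification edge is in $F$, so must be the matching-neighbor selection edge, else swap''). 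Your scheme, in which the backbone is kept intact and each gadget independently tacks on a $+1$, has to communicate the chosen color $c_v$ from the selection gadget to far-away verification gadgets through shared vertices \emph{without} introducing edges between endpoints of distinct $M_0$ edges, and you do not explain how. Until the gadgets are written down and these constraints are verified, there is no reduction, and hence no proof; what you have is a (correct) description of the shape a proof should take.
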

\begin{proof}
	We will prove \W{1}-hardness by reduction from \textsc{Multicolored Clique}.
	In this problem, the input consists of a graph $G=(V,E)$ where the vertex set is partitioned into $s$ independent sets $V=V_1\uplus \ldots \uplus V_s$ and the task is to find a clique which contains exactly one vertex from each $V_i$.
	This problem is \W{1}-hard when parameterized by $s$~\cite{Fellows2009,Pietrzak2003}.
	
	Given an instance $(G=(V=V_1\uplus\ldots\uplus V_s,E),s)$ of \textsc{Multicolored Clique}, we 
construct an instance $(G'=(V',E'),\ell,k)$ of \cec{} as follows.
	Let $C = \{c^m,c^e\} \cup \{c_v \mid v \in V\}$, that is, for each vertex in $G$ there is a color~$c_v$ in~$G'$ in addition to the colors $c^m$ and $c^e$.
	All matching edges will have color $c^m$.
	For each $i \in [s]$, the graph $G'$ contains a vertex $u_i$ and for each~$j \in [s] \setminus \{i\}$ and $v \in V_i$ it contains the vertices $w^{i\to j}_v$ and $x^{i\to j}_v$.
	Finally, for each~$e \in E$, we add a vertex~$y_e$ to~$G'$.
	The edges of $G'$ are as follows.
	For each $i \in [s]$,  $v \in V_i$, and $j\in [s]\setminus \{j\}$, there is an edge from $u_i$ to $w^{i \to j}_v$ with color $c_v$ and an edge from $w^{i\to j}_v$ to $x^{i\to j}_v$ with color $c^m$.
	For each edge $e = \{v,v'\} \in E$ with $v \in V_i$ and $v' \in V_j$, there are edges with color $c^e$ from~$x^{i\to j}_v$ and $x^{j\to i}_{v'}$ to $y_e$.
	We let $k \coloneqq (s-1) \cdot (\abs{V} + s)$.
	Note that all edges with color $c^m$ form an induced matching and that the number of such edges is $(s-1)\cdot \abs{V}$.
	Therefore, $k - I(G) \leq s(s-1)$.
	The intended meaning of our construction is that when a vertex $v \in V_i$ is part of a clique $X$ of size $k$, the vertex $u_i$ will be colored by $c_v$ and the edges $\{ u_i, w_v^{i \to j} \}$ along with $\{ w_v^{i \to j}, y_e \}$ for edges $e$ incident to $v$ in $G[C]$ will be included into a stable set $F$.
	Essentially, the edges incident to $y_e$ for the edges $e$ having both endpoints in $X$ will account for the increase in the stable set size (when compared to~$M$).
	Clearly, this construction can be computed in polynomial time.
	It remains to show that $(G,s)$ is \yes-instance for \textsc{Multicolored Clique} if and only $(G',\ell,k)$ is a \yes-instance for \cec{}.
	
	\RD{} Suppose that $X = \{v_1,\ldots,v_s\}$ with $v_i \in V_i$ is a clique in $G$.
	Then,
	\begin{align*}
		F \coloneqq& \{ \{u_i,w^{i\to j}_{v_i}\} \mid i \in [s], j \in [s] \setminus \{i\} \}\\
		& \cup \{ \{w_v^{i\to j},x_v^{i \to j}\} \mid i \in [s], j \in [s] \setminus \{i\}, v \in V_i \setminus \{v_i\}\}\\ & \cup \{ \{y_e,x^{i \to j}_{v_i}\} \mid e = \{v_i,v_j\} \in E \}
	\end{align*}
	is a stable edge set in $G'$.
	Moreover, 
	\begin{align*}
		\abs{F} & = s \cdot (s-1) + \sum_{i\in [s]} (\abs{V_i} -1)(s-1) + s\cdot(s-1) \\
		&= s\cdot(s-1) + (s-1)(\abs{V} -s) + s\cdot(s-1) \\
		& = (s-1)(\abs{V} + s).
	\end{align*}
	
	\LD{} Suppose that $F$, $\abs{F} = k$, is a stable edge in $G'$ and that $f_F \colon V' \to C$ is the corresponding vertex coloring.
	We may assume that of all stable edge sets of size $k$, $F$ contains a maximum number of edges in the induced matching (which consists of all edges with $c^m$).
	This implies that, if for any $e = \{v,v'\}\in E$ with $v \in V_i$ and $v' \in V_j$ the edge $\{y_e, x^{i\to j}_v\}$ is contained in $F$, then so are the edges $\{u_i,w_v^{i\to j}\}$, $\{y_e,x_{v'}^{j\to i}\}$, and $\{u_j,w_{v'}^{j\to i}\}$.
	Otherwise, $\{y_e, x^{i\to j}_v\}$ could be replaced by $\{x^{i\to j}_v, w^{i\to j}_v\}$ without decreasing the size of $F$.
	Similarly, if $\{u_i,w^{i \to j}_v\} \in F$, then there is an edge $e =\{v,v'\} \in E$, $v\in V_i,v' \in V_j$, such that $\{x^{i \to j}_v,y_e\}, \{x^{j \to i}_v,y_e\}, \{u_j,w^{j \to i}_{v'}\} \in F$.
	Let $X \coloneqq \{v \in V \mid \exists e \in E \colon \{y_e,x^{i\to j}_v\} \in F, v \in e\}$.
	Next, let $Y \coloneqq \{ \{i,j\} \in \binom{[s]}{2} \mid \exists e = \{v,v'\} \in E \colon \{x^{i\to j}_v,y_e  \},\{x^{j\to i}_{v'},y_e \}\in F \}$.
	Note that $F$ contains $2\abs{Y}$ edges of the type $\{u_i,w_{v}^{i \to j}\}$ and another $2 \abs{Y}$ edges of the type $\{x^{i\to j}_v,y_e  \}$.
	It contains $\abs{V}\cdot(s-1) - 2\abs{Y}$ edges of the type $\{w_v^{i\to j},x_v^{i \to j}\}$.
	Hence,
	\begin{align*}
		(s-1)(\abs{V} +s) = k = 4\abs{Y} + \abs{V}(s-1) - 2\abs{Y} = \abs{V}(s-1) - 2\abs{Y},
	\end{align*}
	implying that $\abs{Y} = \binom{s}{2}$.
	Hence, for any $\{i,j\} \in \binom{[s]}{2}$, there are $v\in V_i \cap X$ and $v' \in V_j \cap X$ such that the edges $\{x^{i\to j}_v,y_e\}$ and $\{x^{j\to i}_{v'},y_e\}$ are contained in $F$, implying that $v$ and $v'$ are adjacent in $G$.	
	Therefore, $X$ is a clique in $G$.
\end{proof}

\section{Structural parameters}
\label{sec:struct}
In the following, we will classify the parameterized complexity of \cec{} with respect to structural graph parameters.
Our results are summarized in \Cref{fig:hier}.
The problem is \W{1}-hard with respect to vertex cover number, which rules out FPT algorithms for many graph parameters.
Additionally, the problem is also \W{1}-hard with respect to tree-cut width.
On the positive side, it is \fpt{} for the slim tree-cut width, a parameter that was very recently introduced for the express purpose of dealing with problems that are hard for tree-cut width and treewidth~\cite{Ganian2022}.
Hence, \cec{} can be added to a list of problems compiled by Ganian and Korchemma~\cite{Ganian2022} that are hard for tree-cut width, but \fpt{} for slim tree-cut width.

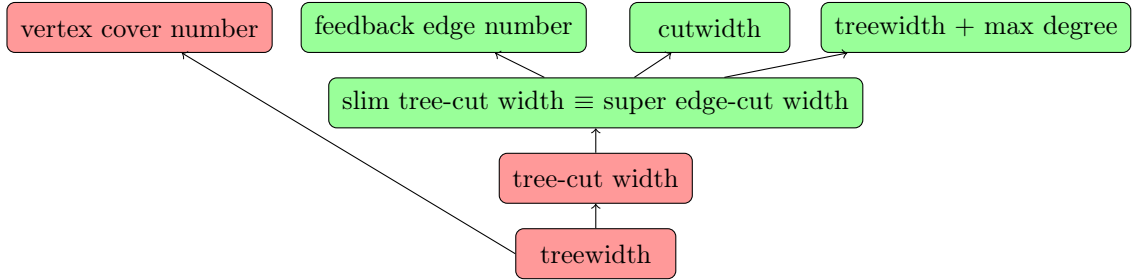
\begin{figure}
	\centering
	\begin{tikzpicture}[yscale=.5]
	\tikzset{
		param/.style={draw, fill=white, rectangle, rounded corners=3, font=\small, minimum width=5.5em, minimum height=4ex},
	}
	
	\node[param, inner sep=5pt,inner ysep=+0pt,fill=red!40] at (-2, 0) (tw) {treewidth};
	\node[param, inner sep=5pt,inner ysep=+0pt,fill=red!40] at (-2, 2) (tcw) {tree-cut width};
	\node[param, inner sep=5pt,inner ysep=+0pt,fill=green!40] at (-2, 4) (stcw) {slim tree-cut width $\equiv$ super edge-cut width};
	\node[param, inner sep=5pt,inner ysep=+0pt,fill=green!40] at (-4, 6) (fen) {feedback edge number};
	\node[param, inner sep=5pt,inner ysep=+0pt,fill=green!40] at (-0.5, 6) (cw) {cutwidth};
	\node[param, inner sep=5pt,inner ysep=+0pt,fill=green!40] at (3, 6) (twd) {treewidth $+$ max degree};
	\node[param, inner sep=5pt,inner ysep=+0pt,fill=red!40] at (-8, 6) (vc) {vertex cover number};
	\draw[->] (tw) -- (tcw);
	\draw[->] (tcw) -- (stcw);
	\draw[->] (stcw) -- (fen);
	\draw[->] (stcw) -- (twd);
	\draw[->] (stcw) -- (cw);
	\draw[->] (tw.west) -- (vc);
	
	\end{tikzpicture}
	\caption{A Hasse diagram of graph parameters relevant to this section.
		\cec{} is \fpt{} with respect to parameters highlighted in green and \W{1}-hard with respect to those in red.
		There is an arrow from a parameter $p$ to a parameter $q$ if there is a function $f$ such that $f(p(G))\ge q(G)$ holds for every graph $G$.}
	\label{fig:hier}
\end{figure}

We start with a more general observation.
A graphs class $\calC$ is \emph{monotone} if $G \in \calC$ and $H \subseteq G$ implies that $H \in \calC$.
For a graph $G=(V,E)$ and a monotone graph class $\calC$, let $d^e_\calC(G) \coloneqq \min_{E' \subseteq E, G-E' \in \calC} \abs{E'}$ denote the \emph{edge deletion distance} of $G$ to $\calC$.

\begin{proposition}
	\label{prop:distance-to-poly}
	Let $\calC$ be a monotone class of graphs such that \cec{} is polynomial-time solvable on $\calC$.
	Then, \cec{} is \fpt{} with respect to~$d^e_\calC$ if a minimum edge deletion set to $\calC$ is given as part of the input.
\end{proposition}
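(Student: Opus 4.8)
The plan is a bounded search tree over the intersection of a hypothetical solution with the given deletion set. Write $E'$ for the supplied minimum edge deletion set, so $G-E'\in\calC$ and $p\coloneqq\abs{E'}=d^e_\calC(G)$. I would branch over all $2^p$ subsets $F_0\subseteq E'$, interpreting $F_0$ as the intersection $F\cap E'$ of a stable set $F$ we are trying to build. A branch is immediately discarded if $F_0$ is not itself stable. Otherwise $F_0$ forces a colour on each vertex of $W\coloneqq\bigcup_{e\in F_0}e$: set $f_0(v)\coloneqq\ell(e)$ for an arbitrary $e\in F_0$ incident to $v$, which is well defined since $F_0$ is stable. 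If $\abs{F_0}\ge k$ the branch reports \yes{} right away.

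For the remaining branches the key step — and the one I expect to be the crux — is to turn the residual task into an \emph{unconstrained} instance of \cec{} living inside $\calC$, so that the assumed polynomial-time algorithm on $\calC$ applies. The residual task is: find a stable set $F_1$ in $G-E'$ with $\abs{F_1}\ge k-\abs{F_0}$ such that every edge of $F_1$ meeting a vertex $v\in W$ has colour $f_0(v)$ (this colour constraint is exactly what makes $F_0\cup F_1$ stable). I would enforce the constraint by deletion: let $G''$ be obtained from $G$ by deleting $E'$ and, in addition, every edge incident to some $v\in W$ whose colour differs from $f_0(v)$. Since $G''\subseteq G-E'$ and $\calC$ is monotone, $G''\in\calC$; and by construction every edge of $G''$ that meets $W$ already has the ``correct'' colour, so \emph{any} stable set of $G''$ satisfies the constraint automatically. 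Hence I would simply run the polynomial-time algorithm for \cec{} on $(G'',\,\ell\restriction_{E(G'')},\,k-\abs{F_0})$, and answer \yes{} overall iff some branch answers \yes{}.

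It remains to verify correctness in both directions and bound the running time, which I expect to be routine. For soundness, if a branch finds a stable $F_1\subseteq E(G'')$ with $\abs{F_1}\ge k-\abs{F_0}$, then $F\coloneqq F_0\cup F_1$ is stable in $G$ and has size at least $k$: the two parts are disjoint (one in $E'$, one in $E\setminus E'$), each is internally stable, and any edge $e_1\in F_1$ meeting an edge $e_0\in F_0$ does so at a vertex $v\in W$, so $\ell(e_1)=f_0(v)=\ell(e_0)$ by the construction of $G''$. For completeness, given a stable set $F$ of $G$ with $\abs{F}\ge k$, put $F_0\coloneqq F\cap E'$ and $F_1\coloneqq F\setminus E'$; then $F_0$ is stable, and (using that $F$ is stable) every edge of $F_1$ incident to a vertex $v\in W$ has colour $f_0(v)$, hence was not deleted, so $F_1\subseteq E(G'')$, $F_1$ is stable, and $\abs{F_1}=\abs{F}-\abs{F_0}\ge k-\abs{F_0}$; thus the branch for this $F_0$ succeeds. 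Finally, there are $2^p$ branches, and each does polynomial work (building $G''$ plus one call to the polynomial-time algorithm on $\calC$), for a total running time of $\bigO^*(2^{d^e_\calC(G)})$, which is \fpt{}.
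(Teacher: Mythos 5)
Your proposal is correct and matches the paper's proof essentially step for step: branch over the $2^{|E'|}$ possible intersections of a solution with the given deletion set, reject inconsistent branches, prune the conflicting edges incident to the forced vertices, and hand the surviving subgraph of $G-E'$ (which lies in~$\calC$ by monotonicity) to the polynomial-time algorithm with the reduced target. The only cosmetic difference is that you remove all of $E'$ when building the residual instance, whereas the paper phrases this as deleting $E'\setminus F_0$ and adjusting the budget by $|F_0|$; your formulation is the slightly cleaner of the two, but the argument is the same.
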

\begin{proof}
	Let $E' \subseteq E$ be an edge set of size at most $d_\calC^e(G)$ with $G-E' \in \calC$.
	The general idea behind the algorithm is that we test every subset $F\subseteq E'$ of $E'$ and assume that the edges in~$F$ are stable while those in $E'\setminus F$ are not.
	We then use the polynomial-time algorithm for~$\calC$ to find a maximum edge set in $E\setminus E'$ that can be added to $F$.
	
	For each $F\subseteq E'$, we either return that this choice of $F$ does not yield a solution or we compute an equivalent instance $(G_{F}=(V,E_F),\ell_F,k_F)$ where $G_F$ is a subgraph of $G-E'$.
	Note that, because $\calC$ is monotone, this implies that $G_F \in \calC$.
	We obtain $G_F$ by first deleting every edge in $E' \setminus F$.
	For each edge $e \in F$, we check whether there is an $e' \in F$ with $e \cap e' \neq \emptyset$ and $\ell(e) \neq \ell(e')$.
	If such an edge exists, then we return that this choice of $F$ does not yield a solution.
	Otherwise, for each vertex $v \in V$ that is incident to an edge $e \in F$, we remove the edges $\{e' \in E \setminus E' \mid \ell(e') \neq \ell(e), v \in e'\}$.
	The edge set obtained by these deletions is $E_F$, $\ell_F$ is~$\ell$ restricted to $E_F$, and $k_F \coloneqq k - \abs{F}$.
	
	We accept the input instance if the polynomial-time algorithm on $\calC$ returns yes for at least one $F\subseteq E'$.
	This way we obtain a running time of $\bigO(2^{d_\calC^e(G)} \cdot n^c)$ where $\bigO(n^c)$ is the running time of the algorithm on $\calC$.
\end{proof}

\cref{prop:distance-to-poly} directly implies that \cec{} is \fpt{} with respect to feedback edge number (\cecS{} is polynomial-time solvable on forests) and with respect to the number of edges that do not have one of the two most frequent colors (\cecS{} is polynomial-time solvable for $|C|=2$).
For vertex deletion distance, a similar statement is not true, since the problem is \W{1}-hard with respect to vertex cover number as we will show in \cref{thm:w1-vc}.

It is also easy to prove that \cec{} is \FPT{} with respect to the joint parameterization by the treewidth of the input graph and the maximum number of colors incident to any vertex.
This can be shown by a standard dynamic program on the tree decomposition by iterating over all colorings of each bag.
Since the maximum number of colors incident to any vertex is at most the maximum degree, it follows that \cecS{} is also \FPT{} with respect to treewidth plus maximum degree.

We will now consider a parameter that is smaller than feedback edge number and treewidth plus maximum degree.
The graph parameter \emph{slim tree-cut width} was introduced by Ganian and Korchemna~\cite{Ganian2022} as a way of dealing with problems that remain \W{1}-hard with respect to treewidth and tree-cut width.
We show that \cec{} is \fpt{} with respect to slim tree-cut width, by considering the asymptotically equivalent parameter \emph{super edge-cut width}.
Let $G=(V,E)$ be a graph and $T=(V,E')$ a tree on the same vertex set.
For any $v \in V$, the \emph{local feedback edge set} at $v$ is \[\lfe (G,T,v) \coloneqq \abs{\{ \{u,w\} \in E \setminus E' \mid \text{$v$ is on the unique $u$-$w$-path in $T$}\} },\]
that is, $\lfe(G,T,v)$ counts the number of edges that are not in $T$ such that the unique path in~$T$ which connects the endpoints visits $v$.
Note that possibly~$u=v$ or~$w=v$.
The \emph{local feedback edge number} of $(G,T)$ is $\lfe(G,T) \coloneqq \max_{v\in V} \lfe(G,T,v)$.
The \emph{super edge-cut width}~\cite{Ganian2022} of $G$ is \[\secw(G) \coloneqq 1 + \min_{T \text{ is a tree on $V$}} \lfe(G,T).\]
Ganian and Korchemna~\cite{Ganian2022} showed that there is an algorithm with running time $\bigO(f(k) \cdot \abs{G}^{\bigO(1)})$ that given $k \in \N$ and a graph $G=(V,E)$ either outputs a tree $T$ on $V$ such that $\lfe(G,T) \leq \bigO(k^6)$ or correctly determines that $\secw(G) > k$.
We will not formally define slim tree-cut width, but since it is asymptotically equivalent to super edge-cut width, it is sufficient to consider the latter.

\begin{theorem}
	\cec{} is \fpt{} with respect to super edge-cut width and slim tree-cut width.
\end{theorem}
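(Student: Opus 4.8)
The plan is to solve \cec{} by a dynamic program over a tree witnessing a small local feedback edge number. Since slim tree-cut width and super edge-cut width are asymptotically equivalent, it suffices to give a fixed-parameter algorithm for the parameter $w \coloneqq \secw(G)$. First I would invoke the algorithm of Ganian and Korchemna~\cite{Ganian2022}, with increasing values of its parameter, to obtain in FPT time a tree $T = (V, E')$ on $V$ with $L \coloneqq \lfe(G, T) = \bigO(w^6)$. I would also restate the problem in terms of vertex colorings: the size of a largest stable edge set equals $\max_{f \colon V \to C} \abs{\{ e = \{u,v\} \in E \mid f(u) = f(v) = \ell(e) \}}$. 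Indeed, given a stable edge set $F$ the coloring $f_F$ makes every edge of $F$ stable, and conversely the edges stable under any fixed $f$ form a stable edge set, since two such edges sharing a vertex $v$ must both be colored $f(v)$. We may assume $\abs{C} \le \abs{E}$. So it remains to compute this maximum.

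Root $T$ at an arbitrary vertex $r$ and write $T_v$ for the subtree rooted at $v$. The conceptual crux --- the reason the dynamic program will be \FPT{} rather than merely \XP{} --- is the following counting fact: for every $v$ and every ``prefix'' $S = \{v\} \cup T_{c_1} \cup \dots \cup T_{c_j}$ formed from $v$ and the subtrees of some of its children $c_1, \dots, c_j$, every edge of $E \setminus E'$ with exactly one endpoint in $S$ has its unique $T$-path through $v$; hence there are at most $\lfe(G,T,v) \le L$ of them. Taking $j$ maximal, it follows that at most $L + 1$ edges of $G$ have exactly one endpoint in $T_v$ (the only possible $T$-edge among them being the one from $v$ to its parent, in case it lies in $E$). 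Accordingly, I would let the table $\mathrm{DP}_v$ be indexed by a \emph{state}: an assignment of one bit to each edge of $G$ with exactly one endpoint in $T_v$, recording whether that endpoint carries the edge's own color --- at most $2^{L+1}$ states --- with $\mathrm{DP}_v[\sigma]$ the maximum over colorings of $T_v$ consistent with $\sigma$ of the number of edges of $G$ with both endpoints in $T_v$ that are stable (and $-\infty$ if no such coloring exists).

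For the recurrence: at a leaf, every bit is forced by $f(v) \in C$ and the value is $0$. At an internal node $v$ with children $c_1, \dots, c_d$, I would iterate over the (at most $\abs{C}$) choices of $f(v)$ and, with $f(v)$ fixed, incorporate the children one at a time, carrying a partial table indexed only by the bits of the currently active edges of $E \setminus E'$ with exactly one endpoint in the current prefix $S_i$; by the counting fact, at most $L$ of these bits are not already pinned down by $f(v)$, so the partial table has at most $2^L$ entries. Incorporating $c_i$ combines the partial table with $\mathrm{DP}_{c_i}$: every edge both of whose endpoints now lie in $S_i$ --- the tree edge $\{v, c_i\}$ together with the edges of $E \setminus E'$ whose $T$-least common ancestor is $v$ and which have just become internal --- is \emph{resolved}, adding $1$ to the value exactly when the two bits relevant to it (read off from the partial table, from $\mathrm{DP}_{c_i}$, or from the fixed $f(v)$) both indicate a color match, while each remaining boundary edge of $T_{c_i}$ joins the active set with its bit inherited from $\mathrm{DP}_{c_i}$. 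After all children have been incorporated, the partial table is $\mathrm{DP}_v$; I then take the entrywise maximum over the choices of $f(v)$. Since $T_r = V$ has no boundary edges, $\mathrm{DP}_r$ holds a single number, equal to $\max_f \abs{\{ \text{edges stable under } f\}}$, and the algorithm answers \yes{} iff this is at least $k$. There are $n - 1$ child-incorporations in total; each is run for at most $\abs{C}$ choices of $f(v)$ at its node and each costs $2^{\bigO(L)}$ time, so the overall running time is $\abs{C} \cdot n \cdot 2^{\bigO(L)} = 2^{\bigO(w^6)} \cdot n^{\bigO(1)}$.

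The hard part will be the bookkeeping inside the incorporation step: precomputing, for each edge of $G$, its resolution node (the $T$-least common ancestor of its endpoints) and the child subtrees of that node holding its two endpoints; fixing a child order so that each edge is resolved exactly once, namely when the later of its endpoints is incorporated; and checking that the active set of edges of $E \setminus E'$ never exceeds $L$, which is precisely where the bound $\lfe(G,T,v) \le L$ is used. By comparison, the correctness of the recurrence and the passage to vertex colorings should be routine. Finally, since super edge-cut width and slim tree-cut width are asymptotically equivalent, the same algorithm gives fixed-parameter tractability with respect to slim tree-cut width.
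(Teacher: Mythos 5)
Your proposal is correct and follows essentially the same approach as the paper: compute a tree $T$ of small local feedback edge number via Ganian--Korchemna, then run a bottom-up dynamic program over $T$ whose state at a node $v$ is determined by the $\bigO(\lfe(G,T))$ edges crossing the boundary of $T_v$. The only differences are cosmetic bookkeeping choices --- you record a bit per boundary edge (inside endpoint agrees with edge color or not) and fold in children one at a time, whereas the paper stores $v$'s color together with a subset of boundary edges and processes all children in a single recurrence after guessing subsets of $C_2(v)\cup C_3(v)$ --- and both yield the same $2^{\bigO(\lfe(G,T))}\cdot n^{\bigO(1)}$ running time.
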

\begin{proof}
	\newcommand{\tD}{\tilde{D}}
	\newcommand{\ch}{\mathrm{ch}}
	Let $(G=(V,E),\ell,k)$ be an instance of \cec{}.
	We compute a tree $T=(V,E')$ with $\lfe(G,T) \leq \bigO(\secw(G)^6)$, which we use in a dynamic programming algorithm that solves this instance in time $\bigO(2^{\lfe(G,T)} \cdot \abs{\colors}^2 n^2)$.
	Let $r\in V$ be an arbitrary vertex that we designate as the root.
	For each $v\in V$, let $T(v) \subseteq V$ be the set all descendants of $v$, not including $v$, and let $T[v] \coloneqq T(v) \cup \{v\}$.
	We let
	\begin{align*}
	C_1(v) & \coloneqq \{ \{u,w\} \in E \setminus E' \mid u \in T(v), w \in V \setminus T[v]\},\\
	C_2(v) & \coloneqq \{ \{u,w\} \in E \setminus E' \mid u \in T[u'], w \in T[w'], u',w' \text{ are distinct children of } v\},\\
	C_3(v) & \coloneqq \{ \{v,w\} \in E \setminus E' \mid w \in T(v)\}, \\
	C_4(v) & \coloneqq \{ \{u,v\} \in E \setminus E' \mid u \in V \setminus T[v]\}.
	\end{align*}
	Note that $\abs{C_1(v)} + \abs{C_2(v)} + \abs{C_3(v)} + \abs{C_4(v)} \leq \lfe(G,T)$.
	Each edge in $C_1(v) \cup C_4(v)$ has exactly one endpoint in $T[v]$, whereas the edges in $C_2(v) \cup C_3(v)$ are fully contained in $T[v]$.
	Finally, we denote the set of children of $v$ by $\ch(v)$.
	For any color $c$ we use $\ch_c(v)$ to refer to the set of all children of $v$ whose edge to $v$ is present in $G$ and has color $c$ (observe that $T$ is not necessarily a subgraph of $G$, so $v$ may have children in $T$ that are not adjacent to $v$ in $G$).

	Our algorithm is a dynamic program that computes a table $D$.
	For each $v\in V$, $c \in C$, and each stable set $S \subseteq C_1(v) \cup C_4(v)$, the table entry $D[v,c,S]$ contains the size of a largest stable edge set $F$ in $T[v]$ that is compatible with $S$, meaning that~$F \cup S$ must also be stable, and where the vertex $v$ receives color $c$, that is any edge in $F \cup S$ incident to $v$ must have color~$c$. 
	Clearly, $D[v,c,S] = 0$ for all leaves $v$ of $T$, as $T[v]$ does not contain any edges if $v$ is a leaf.
	
	Now suppose that $v$ is an inner node of $T$.
	We start by giving the recursion formula for $D[v,c,S]$.
 	If $S \cap C_4(v) \neq \emptyset$, then there must be a color $c\in C$ such that $\ell(e) = c$ for all $e \in S \cap C_4(v)$.
 	This is because all these edges are incident to~$v$.
 	In this case, the color of $v$ is determined to be $c$, so  	$D[v,c',S] = 0$ for all $c' \in \colors \setminus \{c\}$.
 	If $S \cap C_4(v) = \emptyset$ or if $c$ is the color of all edges in $S \cap C_4(v)$, then the formula is as follows (we use $E_c$ for $c \in \colors$ to refer to $\{e \in E \mid \ell(e) = c\}$):
	\begin{alignat*}{2}
		D[v,c,S] \coloneqq \max_{\substack{S_2 \subseteq C_2 (v) \\ S_3 \subseteq C_3(v) \cap E_c}}& 	\abs{S_2} +  \abs{S_3}\\ 
		 & + \sum_{w \in \ch_c(v)} \max &&\{ 1 + D[w,c,(S \cup S_2 \cup S_3) \cap (C_1(w) \cup C_4(w))],\\
		 & && \max_{c' \in \colors\setminus \{c\}} D[w,c',(S \cup S_2 \cup S_3) \cap (C_1(w) \cup C_4(w))]\} \\
		 & + \sum_{w \in \ch(v) \setminus \ch_c(v)} && \max_{c' \in \colors}  D[w,c',(S \cup S_2 \cup S_3) \cap (C_1(w) \cup C_4(w))].
	\end{alignat*}

	Having computed all table entries, the result can be found in $D[r,\emptyset]$ (note that $C_1(r) \cup C_4(r) = \emptyset$).
	
	For each $v\in V$, the table $D$ has $\abs{\colors}2^{\abs{C_1(v)} + \abs{C_4(v)}}$ entries.
	Computing each entry requires at most $2\cdot\Delta(T)\cdot2^{\abs{C_2(v)} + \abs{C_3(v)}} \cdot \abs{C}$ look-ups where $\Delta(T)$ is the maximum degree in $T$.
	From this, we get a worst-case running time of $\bigO(2^{\lfe(G,T)}\cdot \abs{C}^2\cdot n^2)$.
	The correctness of this algorithm follows directly by induction on $T$.
\end{proof}

By contrast to local feedback edge number, we will now show that \cec{} is \W{1}-hard with respect to tree-cut width, which is a lower bound for local feedback edge number, and for vertex cover number, which is incomparable to local feedback edge number.
These two hardness results, in particular the one for vertex cover number, rule out FPT algorithms for many other structural graph parameters, including treewidth, treedepth, feedback vertex number, etc.

Tree-cut width is a parameter introduced fairly recently by Wollan~\cite{Wollan2015}.
The algorithmic uses of this parameter were systematically investigated by Ganian et al.~\cite{Ganian2015}.
It is defined as follows.
A \emph{rooted tree-cut decomposition} of~$G=(V,E)$ is a triple~$(T,\calX,t_0)$ where $T=(W,F)$ is a tree and $\calX = \{X_t \subseteq V \mid t \in W\}$ where the elements of $\calX$ are pairwise disjoint and $\bigcup_{t \in W} X_t = V$.
The node $t_0\in W$ is the root of~$T$.
For any edge $e=\{t_1,t_2\}\in F$, $T-e$ has two connected components $T_1$ and $T_2$.
Let~$\mathrm{cut}(e) \coloneqq \{ \{u,v\} \in E \mid u\in \bigcup_{t \in T_1}X_t \text{ and } v \in \bigcup_{t \in T_2}X_t\}$.
For $t\in W\setminus\{t_0\}$, let $e(t)$ be the edge incident to $t$ which is on the unique path from $t$ to~$t_0$.
The \emph{adhesion} of~$t\in W\setminus\{t_0\}$ is~$\mathrm{adh}(t) \coloneqq \abs{\mathrm{cut}(e(t))}$, while that of the root is $\mathrm{adh}(t_0)\coloneqq 0$.
For any node~$t\in W$, let~$T_1,\ldots,T_s$ be the connected components of $T-t$ and $Z_1,\ldots,Z_t \subseteq V$ with $Z_i \coloneqq \bigcup_{b \in T_i} X_b$ the vertices in bags of each component.
The \emph{torso} $H_t$ of $t$ is the graph obtained from $G$ by consolidating each $Z_i$ into a single vertex, that is by replacing $Z_i$ with a single vertex $v_i$ and adding an edge between $v_i$ and any neighbor of a member of $Z_i$, thereby possible creating parallel edges.
\emph{Suppressing} a vertex of degree~$1$ means deleting it, while \emph{suppressing} a vertex of degree~$2$ means replacing it with an edge between its two neighbors.
The \emph{torso size} of $t$ is $\mathrm{tor}(t)$ is the number of vertices in the graph obtained from the torso $H_t$ by exhaustively suppressing vertices in $H_t - X_t$ of degree at most two.
The \emph{width} of $(T,\calX,t_0)$ is $\max \{\mathrm{adh}(t),\mathrm{tor}(t) \mid t \in W\}$.
The tree-cut width of $G$ is the minimum width of any tree-cut decomposition of $G$.

\begin{theorem}
	\label{thm:w1-vc}
	\cec{} on bipartite graphs is \W{1}-hard with respect to both vertex cover number and tree-cut width.
\end{theorem}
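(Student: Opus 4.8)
The plan is to reduce from \textsc{Multicolored Clique}, which is \W{1}-hard when parameterized by the number $s$ of color classes~\cite{Fellows2009,Pietrzak2003}, and to build an equivalent \cec{} instance $(G',\ell,k)$ whose graph $G'$ is bipartite, has a vertex cover of size $s$ on one side, and has only vertices of degree at most two on the other side. The second property is exactly what keeps the tree-cut width small: taking the $s$-vertex side as the root bag of a tree-cut decomposition and every other vertex in its own leaf bag makes all adhesions at most two, and the torso at the root is obtained by suppressing all degree-$\le 2$ vertices, leaving only the $s$ root-bag vertices; hence both the vertex cover number and the tree-cut width are at most $s$.

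Given $(G = (V = V_1 \uplus \dots \uplus V_s, E), s)$, I would introduce one color $c_v$ for every $v \in V$, one vertex $a_i$ for every part $V_i$ (these $s$ vertices form the small side $A$), and build the large side $B$ only from ``pendant'' vertices of degree one and ``link'' vertices of degree two. Concretely: for every $i$ and every $v \in V_i$, attach $1 + \deg_G(v)$ pendant vertices to $a_i$ via edges of color $c_v$; and for every $i < j$ and every \emph{non-edge} $\{u,w\}$ of $G$ with $u \in V_i$ and $w \in V_j$, add one link vertex joined to $a_i$ by an edge of color $c_u$ and to $a_j$ by an edge of color $c_w$. Set $k \coloneqq s + (s-1)\abs{V}$. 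Since $G'$ is bipartite with parts $A, B$ and $A$ is a vertex cover of size $s$, the structural claims above hold, and the instance is clearly computable in polynomial time.

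For correctness, the key observation is that because every link vertex has two incident edges of distinct colors, it can lie in at most one edge of any stable set; hence once a coloring $f$ of $A$ is fixed, the optimal way to color $B$ can be read off vertex by vertex: a pendant at $a_i$ of color $c_v$ yields one stable edge iff $f(a_i) = c_v$, and a link vertex between $a_i$ (color $c_u$) and $a_j$ (color $c_w$) yields one stable edge iff $f(a_i) = c_u$ or $f(a_j) = c_w$. A short exchange argument shows one may assume $f(a_i) = c_{v_i}$ for some $v_i \in V_i$. Writing $\deg_j(x)$ for the number of neighbours of $x$ in $V_j$, the pendants then contribute $s + \sum_i \deg_G(v_i)$, while the link vertices contribute, for each pair $i<j$, exactly $(\abs{V_i} - \deg_i(v_j)) + (\abs{V_j} - \deg_j(v_i))$ minus one if $\{v_i,v_j\} \notin E$; since $\sum_{i<j}(\deg_i(v_j) + \deg_j(v_i)) = \sum_i \deg_G(v_i)$, the degree terms cancel and the total number of stable edges equals $s + (s-1)\abs{V} - \abs{\{(i,j) : i<j,\ \{v_i,v_j\} \notin E\}}$. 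This is at most $k$, with equality precisely when $\{v_1, \dots, v_s\}$ is a clique; this yields the equivalence.

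The main obstacle is the gadget design, for two intertwined reasons. First, a link vertex realizes an ``or'' of two color conditions, not an ``and'', so one cannot directly test that two selected vertices are adjacent; adjacency must instead be checked through the complement, penalising each \emph{non}-adjacent selected pair, and this introduces spurious $\deg_G(v_i)$ terms into the objective that must be cancelled by the $\deg_G(v)$-many extra pendants (alternatively one could first regularise the \textsc{Multicolored Clique} instance). Second, the requirement to be bounded simultaneously in vertex cover number and in tree-cut width forces every verification gadget to consist of degree-$\le 2$ vertices hanging off the $s$ ``coordinate'' vertices, which is why the entire edge relation has to be encoded by these one- and two-edge gadgets rather than by a more direct clique-testing structure. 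Verifying the exchange argument and the telescoping count is then routine.
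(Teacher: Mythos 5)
Your proof is correct, and it takes a genuinely different route from the paper's. The paper reduces from \textsc{Independent Set} on $d$-regular graphs, uses $O(ns)$ slot-indexed colors $c_i^a$, and encodes adjacency directly: a gadget vertex $x_{i,j}^{\{a,b\}}$ for each edge $\{v_a,v_b\}$ rewards picking $v_a$ in slot~$i$ or $v_b$ in slot~$j$, and a second family of $y$-vertices penalizes reusing a vertex. Your reduction instead starts from \textsc{Multicolored Clique}, uses only $O(n)$ colors (one per vertex, no slot index), encodes the \emph{complement} relation with one link vertex per non-edge per ordered pair of parts, and cancels the resulting degree terms with pendant padding. The two approaches solve the same problem (a degree-2 link vertex realizes an ``or'', not an ``and'') in dual ways: the paper normalizes the degree terms by restricting to regular graphs and encodes edges directly, whereas you encode non-edges and absorb the degree terms via the extra $\deg_G(v)$ pendants and the identity $\sum_{i<j}(\deg_i(v_j)+\deg_j(v_i))=\sum_i \deg_G(v_i)$. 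Both reductions produce a bipartite graph in which one side has $s$ vertices and forms a vertex cover while the other side has only degree-$\le 2$ vertices, and both then read off the tree-cut bound exactly as you do (root bag = the $s$-vertex side, all others singletons, every adhesion $\le 2$, torso at the root of size $s$ after suppression). Your version has the minor advantage of needing fewer colors and not requiring a preprocessing regularization step; the paper's has the minor advantage of a slightly more self-contained vertex-distinctness check. One small point worth making explicit if you write this up: if some part $V_i$ is empty the exchange argument that forces $f(a_i)=c_{v_i}$ with $v_i\in V_i$ has no candidate, so one should dispatch that degenerate case as a trivial \texttt{no}-instance first.
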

\begin{proof}
	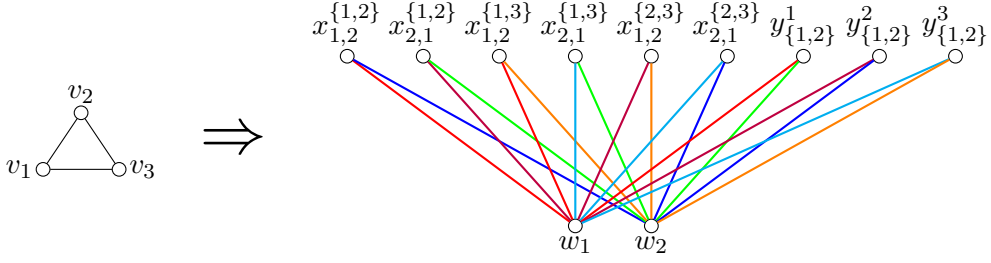
\begin{figure}[t]
		\centering
		\begin{tikzpicture}[yscale=0.75]
		\tikzpreamble
		\node[xnode] (v1) at (0,0) {};
		\node[left] at (v1) {$v_1$};
		\node[xnode] (v2) at (0.5,1) {};
		\node[above] at (v2) {$v_2$};
		\node[xnode] (v3) at (1,0) {};
		\node[right] at (v3) {$v_3$};
		\draw (v1) -- (v2) -- (v3) -- (v1);
		
		\node () at (2.5,0.5) {\Huge $\Rightarrow$};
		
		\begin{scope}[rotate=90,yshift=-8cm,xshift=-5cm]
		\node[xnode] (w1) at (4,1) {};
		\node[below] at (w1) {$w_1$};
		\node[xnode] (w2) at (4,0) {};
		\node[below] at (w2) {$w_2$};

		\node[xnode] (x1212) at (7,4) {};
		\node[above] at (x1212) {$x^{\{1,2\}}_{1,2}$};
		\draw[red,thick] (w1) -- (x1212);
		\draw[blue,thick] (w2) -- (x1212);
		
		\node[xnode] (x1221) at (7,3) {};
		\node[above] at (x1221) {$x^{\{1,2\}}_{2,1}$};
		\draw[purple,thick] (w1) -- (x1221);
		\draw[green,thick] (w2) -- (x1221);
		
		\node[xnode] (x1312) at (7,2) {};
		\node[above] at (x1312) {$x^{\{1,3\}}_{1,2}$};
		\draw[red,thick] (w1) -- (x1312);
		\draw[orange,thick] (w2) -- (x1312);
		
		\node[xnode] (x1321) at (7,1) {};
		\node[above] at (x1321) {$x^{\{1,3\}}_{2,1}$};
		\draw[cyan,thick] (w1) -- (x1321);
		\draw[green,thick] (w2) -- (x1321);
		
		\node[xnode] (x2312) at (7,0) {};
		\node[above] at (x2312) {$x^{\{2,3\}}_{1,2}$};
		\draw[purple,thick] (w1) -- (x2312);
		\draw[orange,thick] (w2) -- (x2312);
		
		\node[xnode] (x2321) at (7,-1) {};
		\node[above] at (x2321) {$x^{\{2,3\}}_{2,1}$};
		\draw[cyan,thick] (w1) -- (x2321);
		\draw[blue,thick] (w2) -- (x2321);
		
		\node[xnode] (y112) at (7,-2) {};
		\node[above] at (y112) {$y^{1}_{\{1,2\}}$};
		\draw[red,thick] (w1) -- (y112);
		\draw[green,thick] (w2) -- (y112);
		
		\node[xnode] (y212) at (7,-3) {};
		\node[above] at (y212) {$y^{2}_{\{1,2\}}$};
		\draw[purple,thick] (w1) -- (y212);
		\draw[blue,thick] (w2) -- (y212);
		
		\node[xnode] (y312) at (7,-4) {};
		\node[above] at (y312) {$y^{3}_{\{1,2\}}$};
		\draw[cyan,thick] (w1) -- (y312);
		\draw[orange,thick] (w2) -- (y312);
		\end{scope}
		\end{tikzpicture}
		\caption{An illustration of the reduction used to prove \cref{thm:w1-vc}.
			The edge colors in the figure represent the colors defined in the reduction as follows:
			\textbf{\color{red} Red}:~$c^1_1$, \textbf{\color{purple} Purple}:~$c^2_1$,
			\textbf{\color{cyan} Cyan}:~$c^3_1$,   \textbf{\color{green} Green}:~$c^1_2$, \textbf{\color{blue} Blue}:~$c^2_2$, 
			\textbf{\color{orange} Orange}:~$c^3_2$.}
		\label{fig:w1-vc}
	\end{figure}
	We will give a parameterized reduction from \textsc{Independent Set} on $d$-regular graphs.
	This problem is \W{1}-hard when parameterized by the size~$s$ of the sought independent set~\cite{Mathieson2012}.
	The reduction is illustrated in \Cref{fig:w1-vc}.
	Let $(G=(V,E),s)$ be an instance of \textsc{Independent Set} where $G$ is a $d$-regular graph and assume that $V=\{v_1,\ldots,v_n\}$.
	
	We will construct an instance $(G'=(V',E'),\ell,k\coloneqq s(s-1)(d+1))$ of \cec{}.
	We choose the set of colors as $\colors \coloneqq \{ c^a_i \mid a\in[n], i\in [s]  \}$.
	The idea is that the color $c^a_i$ represents~$v_a$ being chosen as the $i$-th vertex in the independent set.
	We construct~$G'$ as follows:
	We start with vertices $w_1,\ldots,w_s$.
	For each pair $(i,j) \in [s]\times [s]$ and each edge~$\{v_a,v_b\} \in E$ we add to $G'$ a vertex~$x_{i,j}^{\{a,b\}}$, which we connect to $w_i$ by an edge with color~$c_i^a$ and to $w_j$ by an edge with color $c_j^b$.
	For each pair $\{i,j\} \in \binom{[s]}{2}$ and each vertex $v_a \in V$, we also add to $G'$ a vertex $y^a_{\{i,j\}}$, which we connect to $w_i$ by an edge with color~$c_i^a$ to $w_j$ by an edge with color $c_j^a$.
	
	Clearly, this instance can be computed in polynomial time.
	It remains to show that $(G,s)$ is a \yes-instance for \textsc{Independent Set} if and only if $(G',r, k,\ell)$ is a \yes-instance for \cec{}.
	
	\RD{}
	Let $\{v_{a_1},\ldots,v_{a_s}\}$ be an independent set in $G$.
	Let $\bot \in C$ denote an arbitrary default color.
	We let $f(w_i) \coloneqq c^{a_i}_i$, 
	\begin{align*}
	f(x_{i,j}^{\{a,b\}})\coloneqq
	\begin{cases}
	c_i^{a}, & \text{ if } a = a_i, \\
	c_j^{b}, & \text{ if } b = a_j, \\
	\bot, & \text{ otherwise,}
	\end{cases}
	\quad \text{ and }
	f(y^a_{\{i,j\}}) \coloneqq
	\begin{cases}
	c^a_{i}, & \text{ if } a = a_i, \\
	c^a_{j}, & \text{ if } a = a_j, \\
	\bot, & \text{ otherwise.}
	\end{cases}
	\end{align*}
	We note that $f(x_{i,j}^{\{a,b\}})$ is well-defined in the sense that the first two cases are disjoint, because~$v_a$ and $v_b$ cannot both be in the independent set if they are adjacent.
	We claim that there are~$k=s(s-1)(d+1)$ stable edges under $f$.
	We prove this by showing that each $w_i$ is incident to $(s-1)(d+1)$ stable edges.
	Let $v_{b_1},\ldots,v_{b_d}$ be the neighbors of $v_{a_i}$ in $G$.
	Then, for each $j \in [s]\setminus \{i\}$, the edges between~$w_i$ and $x_{i,j}^{\{a_i,b_1\}},\ldots,x_{i,j}^{\{a_i,b_d\}}$ and $y^{a_i}_{i,j}$ are stable.
	
	\LD{} Suppose that $F$ is a stable edge set in $G'$ of size at least $k$.
	Since each vertex $w_i$ is incident to just $(s-1)(d+1)$ edges in the same color, each $w_i$ can be incident to no more than $(s-1)(d+1)$ stable edges.
	Moreover, since every edge in $G'$ is incident to a $w_i$, it follows that each $w_i$ must be incident to exactly $(s-1)(d+1)$ stable edges.
	For each $i\in [s]$, choose $a_i$ such that $f_F(w_i) = c^{a_i}_i$.
	We claim that $\{v_{a_1},\ldots,v_{a_s}\}$ is an independent set in $G$.
	If $G$ contains an edge $\{v_{a_i},v_{a_j}\}$, then consider the color~$f_F(x^{\{a_i,b_j\}}_{i,j})$.
	\Wilog{}, we may assume that this color is $c^a_i$.
	Then, the edge $\{w_j,x^{\{a_i,a_j\}}_{i,j}\}$, which has color $c^{a_j}_j = f_F(w_j)$, is unstable.
	It follows that $w_j$ cannot be incident to $(s-1)(d+1)$ stable edges.
	If $v_{a_1},\ldots,v_{a_s}$ are not pairwise distinct, a similar argument using a vertex $y^a_{\{i,j\}}$ yields a similar contradiction.
	It follows that $\{v_{a_1},\ldots,v_{a_s}\}$ is, in fact, an independent set of size $s$.
	
	It remains to show that the vertex cover number and the tree-cut width of $G'$ are bounded in $s$.
	Regarding the vertex cover number, it is easy to see that $\{w_1,\ldots,w_s\}$ is a vertex cover in $G'$.
	Next, we will give a tree-cut decomposition of $G'$.
	We place $w_1,\ldots,w_s$ in one bag, the root of $T$, and every other vertex in its own bag.
	All singleton bags are direct children of the root.
	By definition, the adhesion of the root is $0$, while the adhesion of each singleton bag is $2$, since all vertices other than $w_1,\ldots,w_s$ have degree~$2$.
	The singleton bags have torso size at most~$2$, while the size of the torso of the root is at most $s$, since all vertices except $w_1,\ldots,w_s$ have degree~$2$ and are therefore suppressed in the torso of the root.
	It follows that the width of this tree-cut decomposition is at most~$s$.
\end{proof}

Cai and Leung~\cite{Cai2018} showed that \cec{} is \NP-hard, even if the maximum degree in the input graph is at most four, the input graph is planar and bipartite, and there are only three colors.
We conclude by strengthening this result and showing that \cec{} is also \NP-hard on cubic graphs, that is graphs with maximum degree three.

\begin{theorem}
	\label{thm:nphard-cubic}
	\cec{} is NP-hard even if every vertex has degree at most three and $|C| = 5$.
\end{theorem}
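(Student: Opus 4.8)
The plan is to prove \NP-hardness by a polynomial-time reduction from a bounded-occurrence satisfiability problem, most naturally \textsc{Monotone One-in-Three SAT} in which every variable occurs in exactly three clauses; this variant is \NP-hard~\cite{Moore2001} and is already the source used in \cref{thm:degree:hardness}. From such a formula $(X,Y)$ I would build a colored graph $(G',\ell')$ of maximum degree three using only five colors, together with a target $k$, so that $G'$ has a stable edge set of size $k$ if and only if $(X,Y)$ is satisfiable. The value $k$ would be the exact number of edges that a \emph{canonical} coloring --- one directly mirroring a one-in-three assignment --- makes stable, and the crux of the argument is a counting identity showing that $k$ stable edges can be reached \emph{only} by canonical colorings.

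The construction would use two gadget types. A \emph{clause gadget} for each clause $y$ carries three distinguished \emph{connector} edges, one per variable of $y$, whose colors are drawn from a small fixed set of ``slot'' colors so that at most one connector of $y$ can be stable --- encoding ``exactly one true literal''. A \emph{variable gadget} for each variable $x$ (occurring in clauses $y_1,y_2,y_3$) is a graph of maximum degree three whose three connector edges attach to the clause gadgets of $y_1,y_2,y_3$, wired so that in any size-maximum stable set either \emph{all three} connectors of $x$ are stable ($x$ true) or \emph{none} is ($x$ false). Since gadgets meet only in shared low-degree vertices, the palette can be recycled globally: I expect that the associated ``conflict structure'' on the monochromatic pieces has maximum degree four, so that a greedy coloring uses at most five colors, which is also what keeps the color count independent of $|X|$.

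For the forward direction, a satisfying one-in-three assignment is translated into a vertex coloring putting every gadget into its canonical state, and a direct count gives exactly $k$ stable edges. For the backward direction, a double-counting argument in the spirit of \cref{obs:matching-counting} --- summing over all vertices the number of incident stable edges, which at a degree-three vertex is at most two, and after a short color analysis often at most one --- shows that a stable set of size at least $k$ forces every clause gadget to have exactly one stable connector and every variable gadget to be in an ``all'' or ``none'' state. Reading the Boolean values off the variable gadgets then yields a satisfying assignment, with agreement between the occurrences of a variable guaranteed by the variable-gadget wiring.

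I expect the main obstacle to be the interaction of the two hard constraints within the degree-three budget. A variable in three clauses already spends three edges on connectors, leaving no room for a ``hub'' that enforces internal consistency; the variable gadget must instead propagate agreement along a path or short cycle of vertices of degree at most three, and one must still rule out \emph{mixed} states where some but not all connectors of a variable are stable. Handling this will likely require either making every mixed state strictly worse than the canonical ``none'' state --- by adding a few deliberately unsatisfiable edges that a mixed state is forced to give up --- or a local exchange argument rewriting a mixed optimum into a canonical one of equal size. Making this coexist with the five-color bound, while still keeping the three connector colors at each clause vertex pairwise distinct and each variable gadget internally consistent, is the delicate bookkeeping at the heart of the proof.
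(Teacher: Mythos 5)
Your proposal takes a genuinely different route from the paper, and it contains an unresolved gap that is not merely cosmetic. You reduce from \textsc{Monotone One-in-Three SAT} and propose \emph{variable gadgets} that must enforce an ``all three connectors stable or none'' behavior. As you yourself note, the degree-three budget is fully spent on connectors, so the gadget cannot have a hub; you would have to propagate agreement along a path or cycle and somehow exclude mixed states. You do not resolve this, and it is not obvious that it can be resolved within the stated degree and color constraints: a degree-three vertex whose stability constraint is purely local gives you no mechanism to forbid a state where, say, two connectors are stable and one is not. This is the crux of the proof, and it is missing.

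The paper avoids this obstacle entirely by reducing from \textsc{3-SAT} with at most three occurrences per variable (and each literal occurring at least once), rather than from One-in-Three SAT. The construction is far simpler than what you sketch: one vertex $u_x$ per variable, one vertex $v_y$ per clause, and a single edge $\{u_x,v_y\}$ for each occurrence --- no gadgets at all. The coloring does all the work. At each clause vertex $v_y$ the incident edges receive pairwise distinct colors, so at most one can be stable; since every edge has exactly one clause-vertex endpoint, a stable set has at most $m=|Y|$ edges, and $k=m$ forces exactly one stable edge per clause. At each variable vertex $u_x$, the two edges to occurrences of the majority sign share one color and the edge to the minority-sign occurrence gets a different color; stability at $u_x$ then guarantees that all chosen edges at $u_x$ have the same color, i.e., correspond to the same sign of $x$, and this gives the truth assignment its well-definedness for free. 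In other words, the paper never needs ``all or none'': it only needs ``all chosen occurrences agree in sign,'' which the palette enforces automatically. The five-color bound falls out because a clause vertex has degree at most three and a simple greedy choice of $c^{12},c^3$ avoids collisions, exactly the kind of local recycling you anticipate, but applied to a much leaner graph. To repair your argument you would most likely need to either switch source problems as the paper does, or design a concrete degree-three variable gadget with a proof that mixed states are strictly suboptimal; neither is present in the proposal.
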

\begin{proof}
	We reduce from \textsc{3-SAT}.
	The input consists of a set of variables $X$ and a set of clauses~$Y$, each consisting of at most three literals and the task is to find a truth assignment satisfying every clause.
	We may assume that every variable $x \in X$ appears at most three times in $Y$ \cite{Tovey84}.
	We may also assume that for each variable $x \in Y$, the literals $x$ and $\neg x$ each appears at least once in $Y$, since otherwise we can simplify the instance by deleting all clauses containing the variable $x$.

	Let $(X, Y)$ be an instance of \textsc{3-SAT} with $n = |X|$ variables and $m = |Y|$ clauses.
	We construct an instance $(G = (V, E), \ell, k)$ of \cec{} with $|C| = 5$ as follows.
	The graph $G$ contains a vertex $u_x$ for every variable $x \in X$ and a vertex $v_y$ for every clause~$y \in Y$.
	There is an edge $\{ u_x, v_y \}$ if $x$ or $\neg x$ appears in $y$.
	This concludes the construction of~$G$.
	For every edge $e \in E$, we define its color as follows.
	We assume that $X = \{ x_1, \dots, x_n \}$ and we color the edges incident to $v_{x_i}$ for increasing $i$.
	Suppose that $x_i$ appears in three clauses $y_i^1, y_i^2, y_i^3$.
	\Wilog{}, we assume that $x_i$ appears positively (negatively) in $y_i^1$ and $y_i^2$ and negatively (positively) in $y_i^3$.
	We will color $\{ u_{x_i}, v_{y_i^1} \}$ and $\{ u_{x_i}, v_{y_i^2} \}$ in the same color  $c^{12}$ and $\{ u_{x_i}, v_{y_i^3} \}$ in a different color $c^3$.
	We choose $c^{12}$ and $c^3$ in such a way that $v_{y_i^1}$ and $v_{y_i^2}$ have no other incident edge colored in $c^{12}$ and $v_{y_i^3}$ has no other incident edge colored in $c^3$.
	Since $v_{y_i^j}$ for each $j \in [3]$ has degree at most three and $|C| \ge 5$, we can always choose $c^{12}$ and $c^3$ this way.
	Finally, let $k \coloneqq m = |Y|$.
	Essentially, the variable $x$ will be true (or false if there are two negative occurrences of $x$) if the vertex $v_x$ is colored in $c^{12}$. 

	\RD{}
	Suppose that there is a satisfying assignment $\varphi$.
	Then, for every clause $y$, let $L_y$ be an arbitrary literal in $y$ satisfied by $\varphi$ and let $x_y$ be its variable, that is, $L_y = x_y$ or $L_y = \neg x_y$. 
	We claim that the edge set $\{ \{ u_{x_y}, v_y \} \mid y \in Y \}$ is stable.
	To that end, consider a coloring $f \colon V \to C$ such that $f(w)$ is  the color of an edge in $F$ incident to $w$.
	Note that $f$ is well-defined:
	For every $x \in X$, there may more than edge in $F$ incident to $u_x$ but they have the same color.
	Moreover, for every $y \in Y$, there is exactly one edge in $F$ incident to $v_y$.
	Thus, $F$ is stable under $f$.

	\LD{}
	Conversely, suppose that there is a stable set $F \subseteq E$ of size $k = \abs{Y}$.
	Note that for every $y \in Y$, $v_y$ is incident to edges of distinct color.
	Thus, there is at most one edge of $F$ incident to $v_y$.
	On the other hand, every edge in $E$ has exactly one endpoint in $\{ v_y \mid y \in Y \}$.
	It follows that every vertex $v_y$ is incident to exactly one edge of $F$.
	Now consider a truth assignment $\varphi$ where $x$ is true (false) if $\{ u_x, v_y \} \in F$ and $x$ ($\neg x$, respectively) is in the clause $y$ for each variable $x \in X$.
	Since $F$ is stable, this truth assignment is well-defined. 
	Moreover, it is a satisfying assignment.
\end{proof}

For hypergraphs, one can show \NP-hardness in an even more restricted setting:

\begin{theorem}
	\label{thm:max-degree-2}
	\cec{} is \NP-hard on hypergraphs with maximum degree~$\Delta=2$, order~$d=3$, and $k=3$ colors.
\end{theorem}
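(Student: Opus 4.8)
The plan is to exploit the observation that, directly from the definition of stability, a set of hyperedges $F$ is stable exactly when no two of its members intersect while carrying different colours; equivalently, $F$ is an independent set of the analogue, for hypergraphs, of the conflict graph $\partial(G,\ell)$ defined earlier — the graph with vertex set $E(G)$ in which $e$ and $e'$ are adjacent iff $e\cap e'\neq\emptyset$ and $\ell(e)\neq\ell(e')$. Hence $\chc{}$ restricted to the stated class is nothing but \textsc{Maximum Independent Set} on the class of graphs realisable as $\partial(G,\ell)$ for some $3$-uniform, maximum-degree-$2$ hypergraph $G$ with a $3$-colouring $\ell$. Because every vertex of such a $G$ lies in at most two hyperedges, it contributes at most one edge to $\partial(G,\ell)$, so $\partial(G,\ell)$ has maximum degree at most $3$, and the colour assignment is a proper $3$-colouring of it. The crucial point is the converse: I claim every (sub)cubic graph equipped with a proper $3$-colouring arises as such a $\partial(G,\ell)$, which reduces \textsc{Maximum Independent Set} on cubic graphs — an \NP-hard problem — to $\chc{}$ on the stated class.

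I would implement this as follows. Given a cubic graph $H=(V_H,E_H)$ and an integer $k$, first remove the connected components isomorphic to $K_4$, decreasing $k$ by one for each (such a component contributes exactly $1$ to any maximum independent set). The remaining graph is cubic and $K_4$-free, so Brooks' theorem yields, in polynomial time, a proper colouring $\chi\colon V_H\to\{1,2,3\}$. Now build the hypergraph $G$: for each edge $f=\{h,h'\}\in E_H$ create one vertex $p_f$, and for each $h\in V_H$ create the hyperedge $e_h\coloneqq\{\,p_f \mid f\in E_H,\ h\in f\,\}$ with colour $\ell(e_h)\coloneqq\chi(h)$. Since $H$ is cubic, $\abs{e_h}=3$ for every $h$, and each $p_f$ with $f=\{h,h'\}$ lies in exactly the two hyperedges $e_h$ and $e_{h'}$; thus $G$ has order $3$, maximum degree $2$, and uses only the three colours $\{1,2,3\}$. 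Output $(G,\ell,k)$.

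For correctness I would show that $h\mapsto e_h$ is an isomorphism from $H$ to $\partial(G,\ell)$: for distinct $h,h'$ the only possible common vertex of $e_h$ and $e_{h'}$ is $p_{\{h,h'\}}$, so $e_h\cap e_{h'}\neq\emptyset$ iff $\{h,h'\}\in E_H$, and in that case $\ell(e_h)=\chi(h)\neq\chi(h')=\ell(e_{h'})$ since $\chi$ is proper; hence $\{e_h,e_{h'}\}$ is an edge of $\partial(G,\ell)$ precisely when $\{h,h'\}\in E_H$. Combining this with the reformulation from the first paragraph, the stable edge sets of $G$ are exactly the independent sets of $\partial(G,\ell)$, i.e.\ of $H$ under the bijection $h\leftrightarrow e_h$, which preserves cardinality; so $(G,\ell,k)$ is a \yes-instance iff $H$ has an independent set of size at least $k$. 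Together with the $K_4$-preprocessing this proves \NP-hardness. (If one prefers to start from \textsc{Maximum Independent Set} on subcubic graphs, one simply pads each $e_h$ with $3-\deg_H(h)$ private vertices to keep the order exactly $3$.)

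The reduction is essentially a translation between the two problems; the one place that needs care — and the step I would double-check — is producing the proper $3$-colouring of $\partial(G,\ell)$, i.e.\ invoking Brooks' theorem constructively and correctly discarding the $K_4$-components of the source instance beforehand. Everything else follows from the equivalence between stability of a hyperedge set and independence in the conflict graph.
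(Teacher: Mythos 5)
Your proof is correct and takes essentially the same route as the paper: reduce from \textsc{Independent Set} on degree-three graphs, obtain a proper $3$-colouring via Brooks' theorem, and build the dual hypergraph (original vertices become hyperedges, original edges become vertices), which is exactly the paper's $V'\coloneqq E$, $e_v\coloneqq\{e\in E\mid v\in e\}$, $\ell(e_v)\coloneqq f(v)$. The only (harmless) divergence is in the preprocessing that guarantees $K_4$-freeness before invoking Brooks: you remove $K_4$ connected components (adjusting $k$ by one each), whereas the paper subdivides every edge twice (adjusting $s$ by $\abs{E}$), which makes its hyperedges have sizes $2$ and $3$ rather than being uniformly of size $3$ as in your construction.
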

\begin{proof}
	We will give a polynomial-time many-to-one reduction from the \NP-complete~\cite{Garey1977} \prob{Independent Set} with maximum degree~$3$.
	
	Let $(G=(V,E),s)$ be an instance of \prob{Independent Set} where $G$ is a graph with maximum degree~$3$ and $s$ is a nonnegative integer.
	We subdivide every edge twice and increase $s$ by $\abs{E}$.
	This is correct, because subdividing an edge twice increases the independence number of a graph by exactly~$1$.	
	We compute a proper $3$-coloring $f \colon V \to [3]$ of $G$.	
	Such a coloring exists and can be computed in polynomial time by Brooks's theorem~\cite{Brooks41,Lovasz1975}, which states that every graph with maximum degree~$\Delta\geq 3$ that does not contain a clique of size $\Delta+1$ is $\Delta$-colorable.
	Because every edge was subdivided twice, $G$ cannot contain a clique of size~$4$ and, hence, is $3$-colorable.
	We output an instance $(G'=(V',E'),\ell,k)$ for \cec{}.
	We define
	\begin{align*}
	V' &\coloneqq E, \\
	e_v & \coloneqq \{e \in E \mid v \in e\} \text{ for each } v \in V,\\
	E' &\coloneqq \{e_v \mid v \in V\}, \text{ and }\\
	\ell (e_v) & \coloneqq f(v) \text{ for each } v \in V.
	\end{align*}
	Finally, we let $k \coloneqq s$.
	
	Clearly, this reduction can be computed in polynomial time.
	For each $v\in V$, the degree of $v$ is at most $3$ and, therefore, $\abs{e_v} \leq 3$.
	Hence, the order of $G'$ is at most $3$.
	Each edge $e \in E$ contains exactly two vertices.
	Hence, the degree of $e$ in $G'$ is at most $2$.
	It remains to show that $G$ contains an independent set of size at most $s$ if and only if $G'$ contains a stable edge set of size $k$.
	
	\RD{} Let $X \subseteq V$ with $\abs{X} \geq s$ be an independent set in $G$.
	Then, $\{e_v \mid v \in X\}$ is a stable edge set of size at least $k$.
	
	\LD{} Suppose that $F \subseteq E'$ with $\abs{F} \geq k$ is a stable edge set in $G'$.
	Since $\ell(e_v) \neq \ell(e_{v'})$ if $e_v \cap e_{v'} \neq \emptyset$, it follows that the edges in $F$ are pairwise disjoint.
	Hence, there is no edge~$\{v,v'\} \in E$ such that $e_v,e_{v}' \in F$.
	Therefore, $X \coloneqq \{v \in V \mid e_v \in F\}$ is an independent set of size at least $s$ in $G$.
\end{proof}

\textsc{Colored Hypergraph Clustering} is polynomial-time solvable if the maximum degree is at most $1$ or if $\abs{C} \leq 2$.
In the first case, the conflict graph is empty and, in the second case, it is bipartite.

\section{Conclusion}
Our results in many ways complete and extend the picture of the parameterized complexity of \cec{} initiated by \citet{Cai2018}.
For the parameterization by the number of stable edges, we have given an improved algorithm and a polynomial kernel.
We have also initiated the study of strictly smaller parameters than both the number of stable edges $k$ and the number of unstable edges $r$.
Finally, we gave a picture of the problem's parameterized complexity for structural graph parameters.
We conclude by listing a few open problems and avenues for further research:
\begin{itemize}
	\item Can the kernel (\cref{thm:fpt-stable-edges}) be improved to size~$\bigO(k^2)$?
	\item Are there other natural (tight) lower bounds for $r$ or $k$, besides the degree-based and matching based lower bounds we considered, that yield fixed-parameter algorithms?
	\item One issue with the structural parameters we have considered is that they are oblivious to the complexity introduced by the edge colors.
	It may be useful to consider structural parameters that explicitly take the structure of the edge coloring into account.
	Such parameters were studied by Morawietz et al.~\cite{Morawietz2020}.
	Unfortunately, most of those parameters are smaller than the number of colors and, therefore, of little use in the context of \cec{}.
	However, \cref{prop:distance-to-poly} implies that \cecS{} is \FPT{} with respect to $m_{>2}$ (in the terminology used by Morawietz et al.~\cite{Morawietz2020}), and the NP-hardness of \cecS{} on tricolored graphs implies para-NP-hardness with respect to $m_{>3}$.
	It may be an interesting challenge to develop color-sensitive structural parameters that lead to useful \FPT{} algorithms for \cec{}.
\end{itemize}

\section*{Acknowledgments}
The second author is supported by the DFG Project DiPa, NI 369/21.
The third author is supported by the DFG Research Training Group 2434 ``Facets of Complexity''.

We thank Till Fluschnik (TU Clausthal) and Klaus Heeger (TU Berlin) for a fruitful discussion which led to the NP-hardness for cubic graphs (\cref{thm:nphard-cubic}).

This paper is dedicated to Rolf, our co-author, colleague, and advisor.
Rolf's tremendous contributions to computer science, particularly to parameterized algorithmics, will be dearly missed.
The computer science community will build on the foundations he has laid.

{\small
	\bibliography{strings-long,cec}
}
\end{document}